\documentclass[a4paper,english]{lipics-report}
\usepackage[normalem]{ulem}
\pdfoutput=1
\usepackage{multirow}
\usepackage{stmaryrd}
\usepackage{amssymb}
\usepackage{verbatim}
\usepackage{amsmath,amsfonts}
\usepackage{graphicx}
\usepackage{color}
\usepackage{xcolor,pgf,tikz,pgflibraryarrows,pgffor,pgflibrarysnakes}
\usepackage{microtype}
\usepackage{amsmath}
\usepackage{amsfonts}
\usepackage{amssymb}
\usepackage[section]{placeins}

\usepackage{wrapfig}

\usepackage{supertabular}
\usepackage{booktabs}
\usepackage{xspace}
\usepackage{tikz}
\usepackage{float}

\usetikzlibrary{arrows,shapes,fit,snakes,automata,backgrounds,positioning, decorations.pathmorphing, patterns,shapes.geometric}
\newcommand{\Nn}{\mathcal{N}}
\newcommand{\Mm}{\mathcal{M}}

\newcommand{\Xx}{\mathcal{X}}

\newcommand{\Ll}{\mathcal{L}}

\newcommand{\Aa}{{\mathcal A}}
\newcommand{\Tt}{{\mathcal T}}
\newcommand{\Oo}{{\mathcal O}}

\newcommand{\Nat}{{\mathbb N}}
\newcommand{\pclocks}{\mathcal{X}}
\newcommand{\V}{{\mathbb{N}}^{|\pclocks|}}
\newcommand{\rect}{\varphi}
\newcommand{\zero}{{\mathbf{0}}}
\newcommand{\nop}{\mathsf{nop}}

\newcommand{\set}[1]{\left\{ #1 \right\}}

\newcommand{\seq}[1]{\langle #1 \rangle}
\hyphenation{op-tical net-works semi-conduc-tor}


\tikzstyle{state1}=[state,minimum size=1.4em,inner sep=0em]                                                                                                    

\tikzstyle{cir}=[draw=violet, fill=violet!20!white,circle,minimum size=1.4em,inner sep=0em]                                                                                                      
     \tikzstyle{dia}=[draw=green!80!red, fill=green!20!white,circle,minimum size=1.4em,inner sep=0.1em]                                                                                                      
\tikzstyle{box}=[draw=red,fill=red!20!white,rectangle,minimum size=1.4em,inner sep=0em]

\tikzstyle{box1}=[draw=green,fill=green!20!white,rectangle,minimum size=1.7em,inner sep=0em]
\tikzstyle{box2}=[draw=violet,fill=violet!20!white,rectangle,minimum size=1.7em,inner sep=0em]
\def\checkmark{\tikz\fill[scale=0.4](0,.35) -- (.25,0) -- (1,.7) -- (.25,.15) -- cycle;}

\begin{document}
\title{Communicating Timed Processes with Perfect Timed Channels}
\author[1]{Parosh Abdulla}
\author[1]{M. Faouzi Atig}
\author[2]{S. Krishna}
\affil[2]{Dept of CSE, IIT Bombay, India\\
  \texttt{krishnas@cse.iitb.ac.in}}
\affil[2]{Uppsala University, Sweden\\
  \texttt{parosh,mohamed\_faouzi.atig@it.uu.se}}
\authorrunning{P.Abdulla, M. Faouzi Atig, S. Krishna} 

\Copyright{John Q. Open and Joan R. Access}
%
%
\maketitle

\abstract

We introduce the model of  communicating timed automata (CTA)  that extends  the classical models of finite-state processes communicating  through FIFO perfect channels and timed automata, in the sense that the finite-state processes are replaced by timed automata, and messages inside the perfect channels are equipped with  clocks representing their ages. In addition to the standard operations (resetting clocks, checking guards of clocks) 
 each automaton can either (1) append a message to the tail of a channel with an initial age or (2) receive the  message at the head of a channel if its age satisfies a set of given constraints. In this paper, we show that the reachability problem is undecidable even in the case of  two timed automata  connected by one unidirectional timed channel if one allows global clocks (that the two automata can check and manipulate).  We prove that this  undecidability still holds even for CTA consisting of  three timed automata and two unidirectional timed channels (and without any global clock). However, the reachability problem  becomes decidable (in $\mathsf{EXPTIME}$)  in the case of two automata linked with one unidirectional timed channel and with no global clock.     Finally, we consider the bounded-context case, where in each context, only one timed automaton is allowed to  receive messages from   one channel  while being able to send messages to all the other timed channels. In this case we show that the reachability problem is decidable.

\section{Introduction}
\label{sec-intro}
In the last few years, several papers have been devoted to extend classical infinite-state systems such as pushdown systems, (lossy) channel systems and Petri nets with timed behaviors in order to obtain more accurate and precise formal models (e.g., \cite{DBLP:conf/lics/AbdullaAS12,DBLP:conf/fsttcs/AbdullaAC12,BCHLR-atva2005,Parosh:Aletta:bqoTPN,Trivedi:2010,BER94,EmmiM06,Dang03,DBLP:conf/fossacs/ClementeHSS13,DBLP:conf/cav/KrcalY06,DBLP:conf/lics/ClementeLLM17,DBLP:conf/lics/ClementeL15,bocchi_et_al:LIPIcs:2015:5383,DBLP:conf/apn/AkshayGH16,DBLP:conf/formats/GantyM09,DBLP:journals/entcs/BouchyFS09,DBLP:conf/concur/AkshayGK16,DBLP:conf/dlt/BhaveDKPT16}).  In particular, {\em perfect channel systems} have been extensively studied as a formal model for communicating protocols \cite{BZ83,Pachl:thesis}. Unfortunately, perfect  channel systems are in general Turing powerful, and hence all basic decision problems (e.g., the reachability problem) are undecidable for them \cite{BZ83}. To circumvent this undecidability obstacle, several approximate  techniques have been  proposed  in the literature including making the channels lossy \cite{AJ93,DBLP:conf/concur/ChambartS08}, restricting the communication topology to polyforest architectures \cite{Pachl:thesis,DBLP:conf/tacas/TorreMP08}, or using half-duplex communication \cite{DBLP:journals/iandc/CeceF05}. The decidability of the reachability problem can be also obtained by restricting the analysis   to  only executions performing at most some fixed number of context switches  (where in each context only one process is allowed to  receive messages from   one channel  while being able to send messages to all the other  channels) \cite{DBLP:conf/tacas/TorreMP08}. Another well-known technique used in the verification of perfect channel systems is that of loop acceleration where the effect of iterating a loop is computed \cite{DBLP:journals/tcs/BouajjaniH99}.

In this paper, we introduce the model of  {\em Communicating Timed Automata} (or CTA for short) which extends  the classical models of finite-state processes communicating  through FIFO perfect channels and discrete timed automata, in the sense that the finite-state processes are replaced by discrete timed automata, and messages inside the perfect channels are equipped with  discrete clocks representing their ages.  In addition to the standard operations of  timed automaton, each automaton can either (1) append a message to the tail of a channel with an initial age or (2) receive the  message at the head of a channel if its age satisfies a set of given constraints. In a timed transition, the clock values and the ages of all the messages inside the perfect channels are increased uniformly.  Thus, the CTA model subsumes both discrete timed automata and perfect channel systems. More precisely, we obtain the latter if we do not allow the CTA to use  the timed information (i.e., all the timing constraints trivially hold); and we obtain the former if we do not use the perfect channels (no message is sent or received from the channels). Observe that  a CTA  is infinite in multiple dimensions,
namely we have a number of channels that may contain an unbounded number of messages each of which is equipped with a natural number.

The CTA model can be used as a formal model for some safety critical devices such as implantable cardiac medical devices \cite{mangharam} in which the heart and the pacemaker can be modelled using two timed automata communicating through perfect channels and global variables.  
Another application of the CTA model is the modelling of distributed systems  consisting of several servers. Each server   has its own local clocks. The  servers  communicate with each other using perfect channels and use their local clocks  to timestamp the exchanged messages. In general distributed systems avoid the use of global clocks (for performance reasons) but in certain cases these global clocks are needed  to enforce the   consistency of the data across the servers. This is the case for instance with 
  {\bf{\emph{Spanner}}}, Google's global SQL database. Spanner time-stamps  all data written to it and allows global consistency of reads across the entire database.  
 Data consistency is then achieved in Spanner via the use of TrueTime, a global synchronized clock across the data centres. 
 The global clock helps in ensuring that for two transactions $T_1, T_2$ taking place, 
  say in Australia and the East Coast respectively, if  $T_2$ starts a commit after $T_1$ 
 has already committed, then the timestamp for $T_2$ is greater than the timestamp for $T_1$.

We  show that the reachability problem is undecidable even in the case of  two timed automata  connected by one unidirectional timed channel if one allows global clocks.   We prove that this  undecidability still holds even for CTA consisting of  three timed automata and two unidirectional timed channels (and without any global clock). However, the reachability problem  becomes decidable (in $\mathsf{EXPTIME}$) in the case of two automata linked with one unidirectional timed channel and with no global clock. Finally, we consider the bounded-context case, where in each context only one timed automaton is allowed to  receive messages from   one channel  while being able to send messages to all the other timed channels. In this case we show that the reachability is decidable. This is quite surprising since the reachability problem for  unidirectional polyforest architectures can be easily reduced to its corresponding problem in the bounded-context  case     in the untimed settings.

\paragraph*{Related Work} Several extensions of infinite-state systems with time behaviours have been proposed in the literature (e.g., \cite{DBLP:conf/lics/AbdullaAS12,DBLP:conf/fsttcs/AbdullaAC12,BCHLR-atva2005,Parosh:Aletta:bqoTPN,Trivedi:2010,BER94,EmmiM06,Dang03,DBLP:conf/fossacs/ClementeHSS13,DBLP:conf/cav/KrcalY06,DBLP:conf/lics/ClementeLLM17,DBLP:conf/lics/ClementeL15,bocchi_et_al:LIPIcs:2015:5383,DBLP:conf/apn/AkshayGH16,DBLP:journals/lmcs/AbdullaMM07,DBLP:conf/formats/GantyM09,DBLP:journals/entcs/BouchyFS09,DBLP:conf/concur/AkshayGK16,DBLP:conf/dlt/BhaveDKPT16}). The two closest to ours are those presented in ~\cite{DBLP:conf/fossacs/ClementeHSS13,DBLP:conf/cav/KrcalY06}.  Both works extend perfect channel systems with time behaviours but do not associate a clock to each message (i.e., the content of each channel is still a word over a finite alphabet)  as in our case. 
 The work presented ~\cite{DBLP:conf/fossacs/ClementeHSS13} shows that the reachability problem is decidable if and only if the communication topology is a polyforest while for our model the reachability problem is undecidable for polyforest architectures in general. Furthermore, there is no simple reduction of our results to the results presented in  ~\cite{DBLP:conf/fossacs/ClementeHSS13}.
  The work presented in  ~\cite{DBLP:conf/cav/KrcalY06} considers dense clocks with urgent semantics. In ~\cite{DBLP:conf/cav/KrcalY06}, the authors show  (as in our model) that the reachability problem is undecidable  for  three timed automata and two unidirectional timed channels; while it becomes decidable when considering two  automata linked with one unidirectional timed channel.  However, the used techniques show that these results are quite different since we do not  allow  the urgent semantics.
  
  \begin{table}
\begin{tabular}{|c|c|c|c|c|}
  \hline
  Acyclic CTA & Global clocks  & Channels & Reachability & Where\\
   \hline
  2-CTA, discrete  time & Yes & 1 & Undecidable & Corollary \ref{cor} \\
 & (1 global clock) & &&\\
 \hline
  3-CTA, discrete  time & No & 2 & Undecidable & Theorem \ref{undec:3} \\
  \hline
 2-CTA, discrete time & No & 1 & Decidable & Theorem \ref{dec:2} \\
 \hline 
*-CTA, discrete time & Yes & any & Decidable & Theorem \ref{thm:dec2}
\\
bounded context & & && \\
\hline
2-CTA, dense time & No & 1 & Open &\\
\hline
*-CTA, dense time & No & any & Decidable? & \\
bounded context & & && \\
\hline 
\end{tabular}
\caption{Summary of results. $k$-CTA represents CTA with $k$ timed automata, $k \in \mathbb{N}$. 
In *-CTA, we do not bound the  number of timed automata involved.  }
\end{table}

\section{Preliminaries}
\label{sec:defns}

In this section, we introduce some notations and preliminaries which will be used throughout the paper. 
We use standard notation $\Nat$ for the set of naturals, along with 
$\infty$. 
  Let $\pclocks$ be a finite set of variables called \emph{clocks}, taking on values from $\Nat$.
 A \emph{valuation} on $\pclocks$ is a function $\nu :
\pclocks \to \Nat$.  We assume an arbitrary but fixed ordering on
the clocks and write $x_i$ for the clock with order $i$.  This allows
us to treat a valuation $\nu$ as a point $(\nu(x_1), \nu(x_2), \ldots,
\nu(x_n)) \in \Nat^{|\pclocks|}$.    For a subset of clocks $X \in 2^{\pclocks}$ and
valuation $\nu \in \V$, we write $\nu[X{:=}0]$ for the valuation where
$\nu[X{:=}0](x) = 0$ if $x \in X$, and $\nu[X{:=}0](x) = \nu(x)$
otherwise. For $t\in\Nat$, write $\nu+t$ for the valuation defined
by $\nu(x)+t$ for all $x\in X$.  The valuation $\zero \in \V$ is a
special valuation such that $\zero(x) = 0$ for all $x \in \pclocks$.
A clock constraint over $\pclocks$ is defined by a (finite) conjunction of constraints
of the form $x \bowtie k,$ where $k \in \Nat$, $x \in
\pclocks$, and $\mathord{\bowtie} \in \{<,\leq, =, >, \geq\}$. We
write $\rect(\pclocks)$ for the set of clock constraints.
For a constraint $g \in \rect(\pclocks)$, and a valuation $\nu \in \Nat^{|\pclocks|}$, we write $\nu \models
g$ to represent the fact that valuation $\nu$ satisfies constraint
$g$. For example, $(1,0,10) \models (x_1<2) \wedge (x_2=0) \wedge (x_3>1)$.

\paragraph*{Timed automata}
Let $Act$ denote a finite set called actions. 
 A { timed automaton} (TA) is a tuple $\Aa$ $=$ $(L, L^0, Act, \pclocks, E, F)$ such that 
 
\begin{itemize} 
\item  $L$ is a finite set of locations, 
\item  $\pclocks$ is a finite set of clocks, 
\item $Act$ is a finite alphabet called an action set,  
\item  $E \subseteq L \times \rect(\pclocks) \times Act \times 2^{\pclocks} \times L$ is a finite set 
of transitions, and
\item $L^0, F \subseteq L$ are respectively the sets of initial and final locations and $Act$ is a finite set of actions. 

\end{itemize}
 A state $s$ of a timed automaton is a pair $s=(\ell, \nu) \in L \times \mathbb{N}^{|\pclocks|}$.
 A transition $(t,e)$ from a state $s=(\ell, \nu)$ to a state $s'=(\ell', \nu')$ is written as 
$s \stackrel{t,e}{\rightarrow}s'$ if $e=(\ell, g, a, Y, \ell') \in E$, such that $a \in Act$,  
$\nu+t \models g$, and $\nu'=(\nu+t)[Y{:=}0]$. A run is a finite sequence 
 $\rho=s_0 \stackrel{t_1,e_1}{\rightarrow} s_1 \stackrel{t_2,e_2}{\rightarrow} s_2 \dots \stackrel{t_n,e_n}{\rightarrow}s_n$
of states and transitions. $\Aa$ is non-empty iff there is a run from an initial state $(l_0,\zero)$  
to some state $(f,\nu)$ where $f \in F$.  Note that we have defined discrete timed automata, a subclass 
of  Alur-Dill automata \cite{AD94}, where clocks assume only integral values.

\paragraph*{Region Automata}  If $\Aa$ is a timed automaton, the region automaton corresponding to $\Aa$ denoted by $Reg(\Aa)$ is an untimed automaton defined as follows. 
Let $K$ be the maximal constant used in the constraints of $A$ and let $[K]=\{0,1,\dots,K, \infty\}$. The locations of $Reg(\Aa)$ are of the form 
$L \times [K]^{|\pclocks|}$. The set of initial locations of $Reg(\Aa)$ is 
$L_0 \times \zero$. The transitions in $Reg(\Aa)$ are of the following kinds:
  (i) $(l, \nu) \stackrel{\checkmark} {\rightarrow}(l, \nu+1)$ denotes a time elapse of 1. 
 If $\nu(x)+1$ exceeds $K$ for any clock $x$, then it is replaced with $\infty$. 
    (ii) For each transition $e=(\ell, g, a, Y, \ell')$, we have the transition 
  $(l, \nu) \stackrel{a}{\rightarrow} (l', \nu')$ if $\nu \models g$,  
  $\nu'=\nu[Y{:=}0]$.  It is known \cite{AD94} that $Reg(\Aa)$ is empty iff $\Aa$ is.

\section{Communicating Timed Automata (CTA)}
\label{sec:cta} 
A communicating timed automata (CTA) $\Nn=(\Aa_1, \dots, \Aa_n, C,\Sigma, \Tt)$ consists of timed automata $\Aa_i$,  
 a finite set $C$  of FIFO \emph{channels}, 
 a finite set $\Sigma$  called the \emph{channel alphabet}, 
 and   a \emph{network topology} $\Tt$.   
The network topology is a directed graph 
$(\{\Aa_1, \dots, \Aa_n\}, C)$
comprising of the finite set of timed automata $\Aa_i$ as nodes, and 
the channels $C$ as edges. $C$ is given as a tuple $(c_{i,j})$;    
the channel from $\Aa_i$ to $\Aa_j$ is denoted by  $c_{i,j}$, with the 
intended meaning that  $\Aa_i$ 
writes a message from $\Sigma$ to channel $c_{i,j}$ and $\Aa_j$ reads from channel $c_{i,j}$.     
We assume that there is atmost one channel 
$c_{i,j}$ from $\Aa_i$ to $\Aa_j$, for any pair $(\Aa_i, \Aa_j)$ of timed automata.  Figure \ref{fig:top} illustrates 
the definition.

  Each timed automaton $\Aa_i=(L_i, L_i^0, Act, \pclocks_i, E_i, F_i)$  in the CTA is as explained before, with the only difference 
  being in the  transitions $E_i$.  
    We assume that $\pclocks_i \cap \pclocks_j =\emptyset$ for $i \neq j$. 
    A transition in $E_i$ has the form $(l_i,g, op, Y, l'_i)$ where $g, Y$
    have the same definition as in that of a timed automaton, while $op \in Act$ is one of the following operation  
       on the channels  $c_{i,j}$:
                   \begin{enumerate}
\item $\mathsf{nop}$ is an empty operation that does 
not check or update the  channel contents.
Transitions having the empty operation $\mathsf{nop}$ are called \emph{internal transitions}. 
Internal transitions of $\Aa_i$ do not change any channel contents.
\item  $c_{i,j}!a$ 
is a write operation on channel $c_{i,j}$.
 The operation $c_{i,j}!a$ appends the message $a \in \Sigma$ to the tail of the channel 
 $c_{i,j}$, and sets the age of $a$ to be 0.  The timed automaton 
 $\Aa_i$ moves from location $l_i$ to $l'_i$, checking guard $g$, resetting clocks $Y$ and  writes  message $a$ on channel $c_{i,j}$. 
   \item $c_{j,i}?(a {\in} I)$ is a read operation on channel $c_{j,i}$. 
  The operation $c_{j,i}?(a {\in} I)$ removes the message $a$ from the head of the channel $c_{j,i}$ if its age 
  lies in the interval $I$. The interval $I$ has the form 
  ${<}\ell,u{>}$ with $u \in \Nat$ and $\ell \in \Nat \backslash \{\infty\}$, 
  ``$<$'' stands for left-open or left-closed and ``$>$'' for right-open or right-closed.
  In this case, the timed automaton $\Aa_i$
 moves from location $l_i$ to $l'_i$, checking guard $g$,  resetting clocks $Y$ and  reads off the oldest message $a$ 
 from channel $c_{j,i}$ if its age is in interval $I$.  
     \end{enumerate} 
     
\noindent{\bf Global Clocks}.  A clock $x$ is said to be global in a CTA if it can be checked any of the timed automata 
in the CTA, and can also be reset by any of them on a transition. Note that 
if a clock $x$ is not global, then it can be checked and reset  only by the automata which ``owns'' it.
The automaton $A_i$ owns $x$ iff $x \in \Xx_i$ (recall  that $\Xx_i \cap \Xx_j=\emptyset$). 
The convention $\Xx_i \cap \Xx_j=\emptyset$ applies to non-global (or local) clocks.  
Thus, if a CTA consisting of automata $A_1, \dots, A_n$ has global clocks, then its set of clocks can be thought of as 
 $\biguplus \Xx_i \uplus \mathcal{G}$ where $\mathcal{G}$ is a set of global clocks, which are accessed by all of $A_1, \dots, A_n$, 
 while clocks of $\Xx_i$ are accessible only to $A_i$.

\paragraph*{Configurations} 
 The semantics of $\Nn$ is given by a labeled transition system $\Ll_{\Nn}$.
 A configuration $\gamma$ of $\Nn$ is a tuple $((l_i,\nu_i)_{1 \leq i \leq n},c)$ where $l_i$
 is the current control location  of $\Aa_i$, and $\nu_i$ 
 gives the valuations of clocks $\pclocks_i$, $1 \leq i \leq n$, where  
  $\nu_i \in {\Nat}^{|\pclocks_i|}$. $c=(c_{i,j})$, and 
  each channel $c_{i,j}$ is represented as 
   a monotonic timed word $(a_1,t_1)(a_2,t_2) \dots (a_n,t_n)$ where $a \in \Sigma$ and $t_i \leq t_{i+1}$, 
   and $t_i \in \Nat$.  
Given a word $c_{i,j}$ and a time $t \in \Nat$, $c_{i,j}+t$ is obtained by adding $t$  
to the ages of all messages in  channel $c_{i,j}$. For $c=(c_{i,j})$, $c+t$ denotes 
the tuple $(c_{i,j}+t)$.  The states of $\Ll_{\Nn}$ are the configurations.

\paragraph*{Transition Relation of $\Ll_{\Nn}$}
    Let $\gamma_1=((l_1, \nu_1), \dots, (l_n, \nu_n), c)$ and
   $\gamma_2=((l_1', \nu_1')$, $\dots$, $(l'_n, \nu'_n)$, $c')$ be two configurations. 
      The transitions $\stackrel{}{\rightarrow}$  in $\Ll_{\Nn}$  
   are of two kinds:
   \begin{enumerate}
   \item Timed transitions $\stackrel{t}{\longrightarrow}$ :  These transitions denote the passage of time $t \in \Nat$.
    $\gamma_1 \stackrel{t}{\longrightarrow} \gamma_2$ iff $l_i=l_i'$,  
   and $\nu_i'=\nu_i+t$, for all $i$ and $c'=c+t$. 
   \item Discrete transitions $\stackrel{D}{\longrightarrow}$. These are of the following kinds:
   \begin{itemize}
   \item[(1)] $\gamma_1 
   \stackrel{g,\mathsf{nop},Y}
   {\longrightarrow} \gamma_2$ : 
 there is a  transition $l_i\stackrel{g,\mathsf{nop},Y}{\longrightarrow} l'_i$ 
    in $E_i$, $\nu_i \models g$, $\nu'_i=\nu_i[Y:=0]$, for some $i$.
    Also,  $l_k=l'_k$, $\nu_k=\nu'_k$ for all $k \neq i$, and 
   $c_{d,h}=c'_{d,h}$ for all $d, h$.  
    None of the channel contents  
   are changed. 
  \item[(2)] $\gamma_1 \stackrel{g,c_{i,j}!a,Y}{\longrightarrow} \gamma_2$ : 
   Then, $l_k=l'_k$, $\nu_k=\nu'_k$ for all $k \neq i$, and 
   $c_{d,h}=c'_{d,h}$ for all $(d,h) \neq (i,j)$.  
   The transition $l_i\stackrel{g,c_{i,j}!a,Y}{\longrightarrow} l'_i$ 
   is in $E_i$, $\nu_i \models g$, $\nu'_i=\nu_i[Y:=0]$,  
   $c_{i,j}=w \in (\Sigma \times \Nat)^*$ and $c'_{i,j}=(a,0).w$. 
   \item[(3)] $\gamma_1 \stackrel{g,c_{j,i}?(a \in I),Y}{\longrightarrow} \gamma_2$ :  
  Then, $l_k=l'_k$, $\nu_k=\nu'_k$ for all $k \neq i$, and 
   $c_{d,h}=c'_{d,h}$ for all $(d,h) \neq (j,i)$.  
   The transition $l_i\stackrel{g,c_{j,i}?(a \in I),Y}{\longrightarrow} l'_i$ 
   is in $E_i$, $\nu_i \models g$, $\nu'_i=\nu_i[Y:=0]$,  
   $c_{j,i}=w.(a,t) \in (\Sigma \times \Nat)^+$, $t \in I$ and  $c'_{j,i}=w \in (\Sigma \times \Nat)^*$.
       \end{itemize}
  \end{enumerate}

\paragraph*{The Reachability Problem}
  The initial location of $\Ll_{\Nn}$ is  given by the tuple $\gamma_0=((l^0_1,\nu^0_1), \dots, (l^0_n,\nu^0_n),c^0)$ where 
  $l_i^0$ is the  initial location of  $A_i$, $\nu_i^0=\zero$ for all $i$,  and $c^0$  is the tuple 
  of empty channels $(\epsilon, \dots, \epsilon)$. A control location $l_i \in L_i$ 
  is reachable if 
  $\gamma_0 \stackrel{*}{\longrightarrow}((s_i, \nu_i)_{1 \leq i \leq n}, c)$ 
  such that  $s_i=l_i$ (It does not matter what $(\nu_1,\dots, \nu_n)$ and  $c$ are).  
              An instance of the reachability problem asks whether
               given a CTA $\Nn$ with initial configuration $\gamma_0$, we can reach a configuration $\gamma$.

\section{Acyclic CTA}
\label{a-cta}
In this section, we look at the reachability problem in CTA whose underlying network topology $\Tt$
is somewhat restrictive. An \emph{acyclic CTA} is a CTA $\Nn=(A_1, \dots, A_n, C, \Sigma, \Tt)$  
which has no cycles in the underlying undirected graph of $\Tt$\footnote{ 
Recall that the network topology $(\{A_1, \dots, A_n\},C)$ is a directed graph; the underlying undirected graph 
 is obtained by considering all edges as undirected in this graph.}. 
Such topologies are called polyforest topologies in \cite{DBLP:conf/tacas/TorreMP08} (left of Figure \ref{fig:top}).
In this section,  we answer the reachability question in acyclic CTA 
with and without global clocks by  finding the thin boundary line which separates decidable and undecidable 
acyclic CTAs.  
 \begin{figure*}[ht]
  \begin{center}
  \includegraphics[scale=0.4]{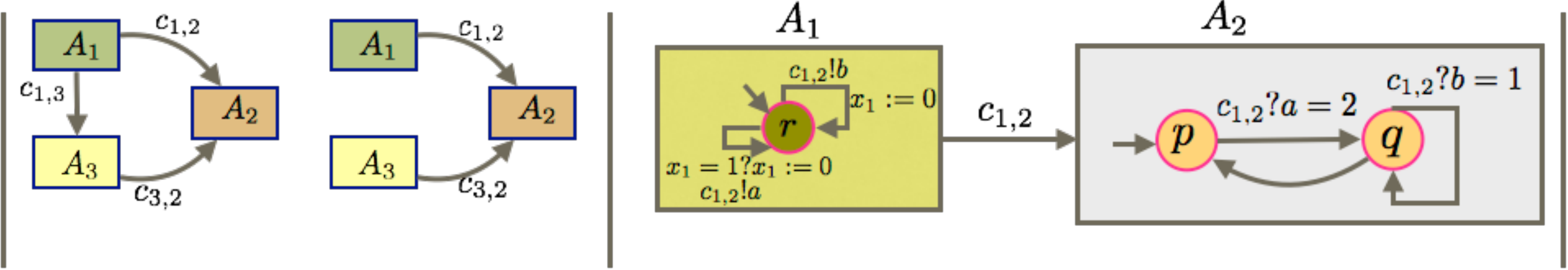}
   \end{center}
\caption{The left half of the figure contains one cyclic and one acyclic topology. 
 The right half of the figure illustrates an acyclic CTA which is not bounded context.
}
  \label{fig:top}	
  \end{figure*}
	
\subsection{Undecidable Reachability with Global Clocks}
\begin{theorem}
	 In the presence of global clocks,  reachability is undecidable for CTA  consisting of two timed automata $A_1, A_2$ connected by a  single channel. 
\label{undec:global}
\end{theorem}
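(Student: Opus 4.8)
The plan is to reduce from the halting problem for two-counter Minsky machines, which is undecidable. Given such a machine $M$ with counters $c_1,c_2$ and a finite set of instructions (increment, and decrement-and-test-for-zero), I will construct a CTA with two timed automata $A_1,A_2$, a single channel $c_{1,2}$ (so that $A_1$ only writes and $A_2$ only reads), and a small set of global clocks, such that a designated target location becomes reachable if and only if $M$ halts.

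The encoding idea is that a configuration of $M$ -- a control state together with the two counter values -- is written into the channel as a finite word over $\Sigma$: a counter of value $v$ is represented by $v$ consecutive \emph{unit} messages of a dedicated symbol, and fixed separator symbols delimit the two counter fields and successive configurations. The simulation proceeds in \emph{rounds}: during round $k$ the automaton $A_2$ reads (consumes) the encoding of the $k$-th configuration $C_k$ from the head of the channel, while $A_1$ simultaneously writes the encoding of the successor configuration $C_{k+1}$ to the tail. Since ages grow uniformly, the messages of $C_k$ remain strictly older than those of $C_{k+1}$, so the two generations never interfere and, after the round, the channel holds exactly $C_{k+1}$, ready for the next round. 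In effect the pair $(A_1,A_2)$ realises a sliding computation history in which $A_1$ produces and $A_2$ consumes one configuration per round.

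The role of the global clocks -- and the reason the construction collapses in their absence (cf.\ Theorem~\ref{dec:2}) -- is to provide the feedback from the reader $A_2$ back to the writer $A_1$ that is needed to copy an \emph{unbounded} counter faithfully from $C_k$ to $C_{k+1}$. The plan is to pace this copying in lock-step: $A_2$ reads one unit, resets a global clock to signal the event, and $A_1$, seeing the reset, writes one corresponding unit for the next configuration in the same synchronised time slot dictated by the global clock; the age constraints on the reads and writes force a one-to-one correspondence, so $A_1$ can neither insert nor drop units relative to what $A_2$ consumes. An increment or decrement is then a local $\pm 1$ adjustment of the produced count, and the bounded control information (which instruction of $M$ is applied, and which branch of a zero-test is taken) is likewise transmitted through the global clocks. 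A zero-test of $c_j$ succeeds exactly when, in the current round, $A_2$ reaches the separator of the $c_j$-field without having read any unit of type $j$; this coincidence is detected and forwarded to $A_1$ so that both automata commit to the zero-branch, while any attempt to take the zero-branch with a nonempty field desynchronises the ages and blocks the run.

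I expect the main obstacle to be precisely this faithful lock-step transfer of an unbounded counter across a generation boundary using only the low-bandwidth global-clock handshake. I must choose the round length, the initial ages assigned to written messages, and the read intervals $I$ so that (i) the two generations stay separated by their ages throughout a round, (ii) every deviation from a correct copy -- a missing or spurious unit, or a wrong zero-test branch -- violates some age constraint and hence cannot be completed, and (iii) correct simulation is always schedulable. Once this gadget is in place, the remaining steps -- initialising the channel to encode the start configuration with both counters zero, wiring the finite control of $M$ into the locations of $A_1$ and $A_2$, and letting the halting instruction enable the designated target location -- are routine, and correctness in both directions follows by induction on the number of simulated steps.
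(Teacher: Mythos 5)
Your proposal is sound in outline but takes a genuinely different route from the paper. The paper does not go back to two-counter machines: it reduces from the known undecidability of reachability for a \emph{single untimed automaton with a self-loop FIFO channel} \cite{DBLP:conf/tacas/TorreMP08}. There, $A_1$ carries the control of that automaton and performs all writes verbatim into $c_{1,2}$; the only thing the global clocks must provide is a way for $A_1$ to ``read'' a channel it cannot read: $A_1$ resets a global clock $x_m$ to order $A_2$ to consume $m$ from the head, $A_2$ reads $m$ and resets $y_m$ as an acknowledgement, and a mandatory one-unit time elapse on every other transition guarantees these clocks are zero only immediately after a deliberate reset. The channel content is then literally the self-loop channel content, so no configuration cycling or counter copying is needed, and correctness is a direct transition-by-transition simulation. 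Your construction instead rebuilds Turing power from scratch: you cycle Minsky-machine configurations through the channel and use the global clocks as a reader-to-writer handshake to copy an unbounded counter unit by unit. That gadget is realisable by exactly the same primitive the paper uses (zero-test of a just-reset global clock, with a forced unit delay between macro-steps so that stale resets cannot be reused), so I believe your plan can be completed; but note two points. First, the separation of ``generations'' and the detection of a wrong zero-test branch are really consequences of FIFO order and of reads being labelled by message symbols, not of the ages --- the ages play no essential role in your encoding, whereas the handshake clocks do all the work. Second, you must still spell out the discipline that makes the handshake injective (one written unit per read unit), which is the part you flag as the main obstacle; the paper's trick of guarding every non-handshake transition by ``local clock $=1$, reset'' is precisely what closes that gap. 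The trade-off: the paper's reduction is shorter because it leverages a known undecidable problem and keeps the channel semantics untouched; yours is self-contained but requires carrying out the full counter-copying and zero-test correctness argument by induction on rounds.
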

 \begin{figure*}[h]
  \begin{center}
 \includegraphics[scale=0.45]{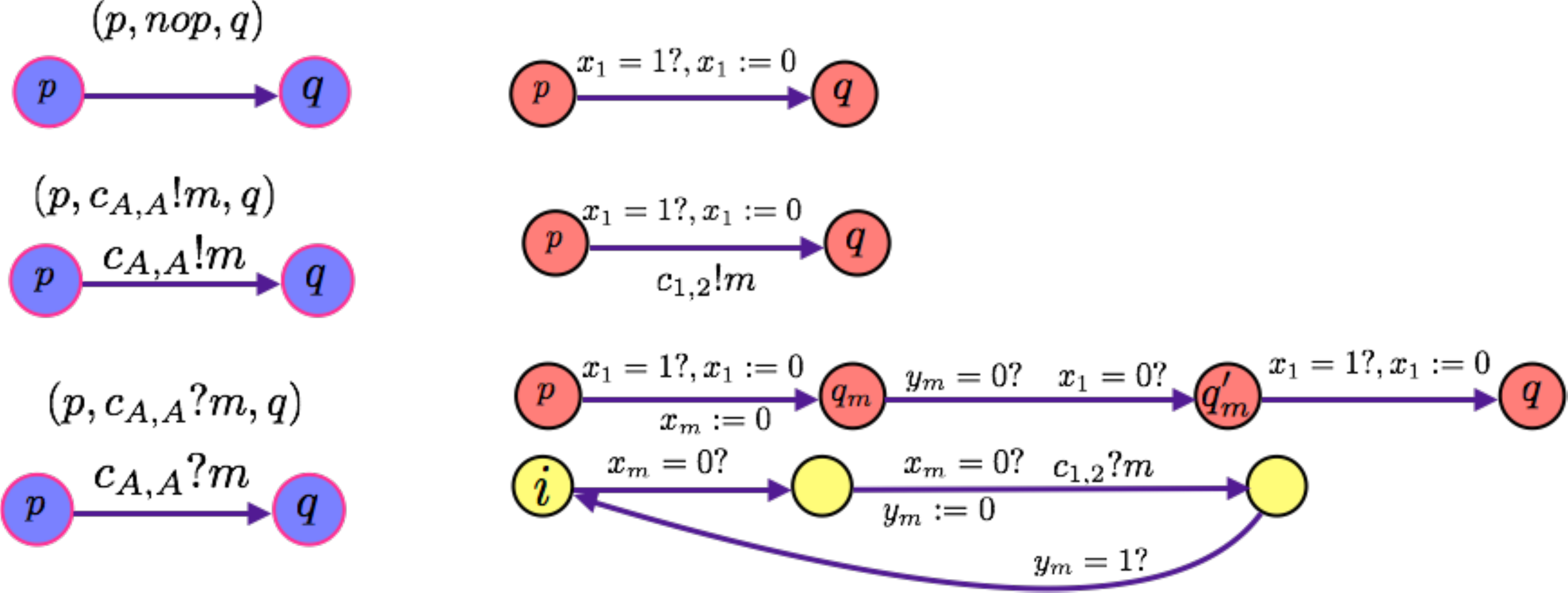}
  \includegraphics[scale=0.45]{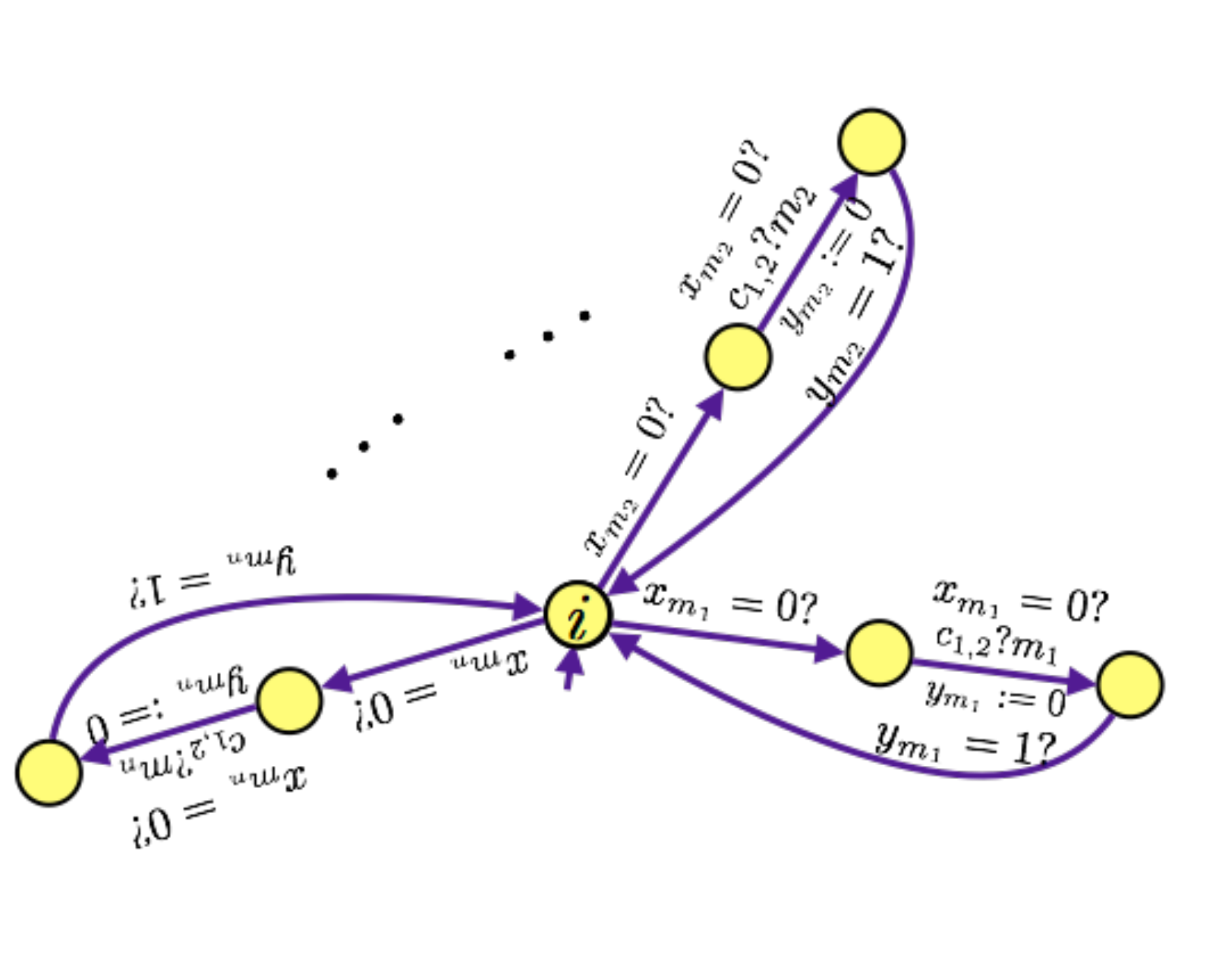}
   \end{center}
\caption{Above left, we show each transition in $A$ ($nop$
 and write transitions) and the corresponding widget in $A_1$. A read transition in $A$ has widgets in $A_1, A_2$.  
   The timed automata $A_1,A_2$ are obtained by connecting all these widgets.  
  Below, is the automaton $A_2$ of the CTA, assuming the message alphabet is $\{m_1, \dots, m_n\}$. 
   }
   \label{undec-global}
 \end{figure*}

\begin{proof}
It is known \cite{DBLP:conf/tacas/TorreMP08} that if one considers a single untimed automaton $A$
 communicating to itself via a perfect, FIFO channel, the reachability 
 is undecidable. Our undecidability result is built via a  reduction from this problem. 
 We show that global clocks can simulate the ``self-loop'' channel  which 
 behaves like a pump.

 Given an untimed automaton $A$ communicating to itself using channel 
 $c_{A,A}$,   we build a CTA $\Nn$ consisting 
of two timed automata $A_1, A_2$ with a channel $c_{1,2}$ from $A_1$ to $A_2$. 
Each time $A$ writes into $c_{A,A}$, $A_1$ writes into channel $c_{1,2}$. 
 Assume that $A$ reads message $m$ from $c_{A,A}$. Since $A_1$ cannot read message $m$ from  channel $c_{1,2}$, $A_1$ sets a special clock say $x_{m}$ to 0 (note that $x_m$ is not zero otherwise, since any other transition is guarded by $x_1=1$). 
 A read transition is triggered in $A_2$ when $x_{m}$ is 0; $A_2$ reads off the message $m$ from the head of the channel, and sets a clock $y_{m}$ to 0, signifying that it has read $m$. $A_1$ checks if $y_m$ is 0, and 
  if so, proceeds to the next transition. See Figure \ref{undec-global} : 
  on the top left are transitions of $A$; 
  on the top right, we depict corresponding transitions in $A_1$
    (the red states)  and in $A_2$  (yellow states). For $\nop$ and write transitions 
    of $A$, there are no corresponding widgets in $A_2$; 
    read transitions of $A$ have corresponding widgets in both $A_1$ and $A_2$.  
 
See Appendix \ref{app:undec-global} for  a detailed proof of Theorem  \ref{undec:global}. 
  
\end{proof}

\begin{corollary}
The number of global clocks used in the above proof is twice the size of the channel alphabet. 
However, we can see that a single global clock suffices for undecidability.  
We retain the above proof since it is easier. The single global clock undecidability 
can be seen in Appendix \ref{app:undec-global-opt}.
\label{cor}	
\end{corollary}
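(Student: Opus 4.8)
The plan is to reduce, exactly as in Theorem~\ref{undec:global}, from reachability of a single untimed automaton $A$ communicating with itself through one perfect FIFO channel \cite{DBLP:conf/tacas/TorreMP08}, but now to build a CTA $\Nn$ whose \emph{global} clocks are reduced to a single clock $z$; note that $A_1$ and $A_2$ may still use private local clocks, which are not global and so do not count against the bound. The architecture is inherited from the proof above: $A_1$ tracks the control flow of $A$ and owns the channel $c_{1,2}$, writing $m$ into $c_{1,2}$ whenever $A$ writes $m$, while $A_2$ is the only automaton that can receive, so every read of $A$ must be realised by a handshake between $A_1$ and $A_2$. All read operations use the trivial interval $[0,\infty)$, so message ages are irrelevant and the content of $c_{1,2}$ faithfully mirrors that of the self-channel of $A$. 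The $\nop$ and write steps of $A$ are simulated directly by $A_1$ (paced by a local clock) and never touch $z$.

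The only real change concerns how the handshake transmits the identity of the message to be read. In the $2n$-clock proof the index $i$ is encoded by \emph{which} clock ($x_{m_i}$ or $y_{m_i}$) is momentarily $0$; with a single clock I would encode $i$ in the \emph{value} that $z$ attains at a synchronization instant, realised by a controlled time delay. Concretely, to simulate a read of $m_i$, $A_1$ opens the handshake with a barrier by resetting $z:=0$; $A_2$, which is listening, catches this unambiguous instant (guard $z=0$), consumes the head message by taking its read-$m_j$ transition, records $j$ in its control state, and resets a private clock. Then $A_1$ lets exactly $i$ time units elapse, pacing itself with a private clock, and $A_2$ finalises the handshake precisely when the measured gap equals the index $j$ of the message it consumed. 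Since $A_1$ has forced the gap to be $i$, finalisation succeeds iff $i=j$, i.e. iff the head really was $m_i$, which is exactly the enabling condition of the corresponding read of $A$; a return reset of $z$ then plays the role of the acknowledgement and releases $A_1$ to the successor location.

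The main obstacle is precisely that a single, freely running clock can in principle be abused to produce a \emph{spurious} read: unlike the instantaneous ``which clock is $0$'' signal of the original proof, bringing $z$ up to $i$ sends $z$ through all intermediate values $1,\dots,i-1$. I would neutralise this with three design choices. First, $z$ is reset only inside a read widget, so an instant with $z=0$ is always a genuine barrier and never arises during the count-up (where $z>0$); thus $A_2$ performs at most one channel read per request, committed at the opening barrier rather than at some intermediate value. Second, the opening and the finalising barriers are disambiguated by the control states of $A_1$ and $A_2$ (listening versus holding-$j$). Third, to stop $A_1$ from idling until $z$ happens to revisit $0$, its finalising transition is made to fire immediately upon entering the waiting state (reset a fresh local clock $w$ and guard the transition by $w=0$ together with the appropriate $z$-condition), so the match is checked at a single fixed instant. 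Consequently every index mismatch simply deadlocks, which is harmless for a reachability reduction.

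Finally I would establish the two directions. For soundness, any run of $\Nn$ reaching the designated target is projected to a run of $A$ by reading off the simulated $\nop$/write/read steps of $A_1$; the handshake guarantees that each realised read consumed the true head and that its index matched the requested symbol, so the projection is a legal run of $A$. For completeness, given an accepting run of $A$ I would schedule the transitions of $\Nn$ so that each read of $m_i$ is implemented by the handshake above with $j=i$ (possible exactly because that read of $A$ was enabled, i.e. $m_i$ was at the head), exhibiting the 0-time interleaving in which $A_2$ resets $z$ just before $A_1$ reads $z=0$. Since reachability for $A$ is undecidable, so is reachability for $\Nn$, and $\Nn$ uses a single global clock, proving the corollary.
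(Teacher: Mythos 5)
Your overall strategy is the paper's: reduce from a single untimed automaton with a self-channel as in Theorem \ref{undec:global}, keep the two-automaton architecture, and replace the ``which clock is momentarily zero'' signal by encoding the message index in the \emph{value} that the single global clock attains after a controlled delay, with every mismatch deadlocking. The construction in Appendix \ref{app:undec-global-opt} does exactly this, except that it lets $A_1$ elapse $\alpha_j^j$ units ($\alpha_j$ the $j$th prime) and has $A_2$ perform the read only \emph{at the end} of the delay, guarded by $g=\alpha_j^j\geq 2$ --- a value the global clock never attains outside a read widget, since every other step of $A_1$ resets $g$ after one time unit.

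The one point where your variant as written fails is the decision to let $A_2$ consume the head message \emph{at the opening barrier} $z=0$ and only verify the index afterwards. Your safeguard ``an instant with $z=0$ is always a genuine barrier'' is contradicted by your own acknowledgement step: $A_2$ resets $z$ when it finalises, so at the instant it returns to the listening state we again have $z=0$ and the opening read guard is enabled. $A_2$ may then perform an unrequested read, and such a spurious consumption is not harmless for soundness: if $A$ is blocked because it wants to read $m_2$ while the head of the channel is $m_1$, a rogue read of $m_1$ immediately after an acknowledgement exposes $m_2$ at the head and lets the next genuine handshake succeed, so $\Nn$ reaches configurations that $A$ cannot. ``Disambiguation by control states'' does not close this, because $A_2$ is already back in the listening state at that zero-time instant. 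The repair is small --- for instance, guard the opening read additionally by $y>0$ for a private clock $y$ of $A_2$ that is reset at each acknowledgement, or simply adopt the paper's ordering in which the read itself is guarded by a clock value at least $2$ that can only occur in the middle of a deliberate delay --- but as literally described, the soundness direction of your reduction has a hole that must be patched before the corollary follows.
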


\subsection{Undecidable Reachability with no Global Clocks}
\begin{theorem}
	Reachability  is undecidable for acyclic CTA 
 consisting of three one-clock timed automata without global clocks.
 \label{undec:3}
\end{theorem}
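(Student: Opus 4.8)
The plan is to reduce from the halting problem for deterministic two-counter Minsky machines. The first thing I would fix is the right contrast with the preceding results: over untimed polyforest topologies reachability is decidable \cite{DBLP:conf/tacas/TorreMP08}, so any correct construction must make the message ages genuinely load-bearing — the ages are the only feature that can push an acyclic system across the decidability boundary. The enabling observation is that, even though no clock is global, a timed transition advances every local clock and every message age by the \emph{same} amount; hence elapsed global time is implicitly shared by all components and is recorded, step for step, in the ages of the messages resident in the channels. These ages will play exactly the role that the shared clocks $x_m,y_m$ played in the proof of Theorem~\ref{undec:global}, restoring a form of synchronization that the absence of global clocks otherwise removes.

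Architecturally I would use the line topology $A_1 \to A_2 \to A_3$ with channels $c_{1,2}$ and $c_{2,3}$ (acyclic, two channels, one clock per automaton) and run a \emph{guess-and-verify} scheme. $A_1$ nondeterministically guesses an execution of the Minsky machine and emits into $c_{1,2}$ a time-stamped trace of all counter operations (increments, decrements and zero-tests of both counters). $A_2$ reads this trace and verifies the behaviour of the first counter, while relaying the events relevant to the second counter onward into $c_{2,3}$ (restoring the needed age offsets using its own clock); $A_3$ reads $c_{2,3}$ and verifies the second counter. The designated target is the accepting location of $A_3$, which is reachable only after the entire trace has been consumed; because verification is chained $A_1 \to A_2 \to A_3$, reaching it forces both verifiers to have succeeded. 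This feed-forward structure is exactly what acyclicity permits: no verifier ever has to signal back to $A_1$, and any cheating by $A_1$ simply leaves a verifier stuck at a read guard it cannot satisfy, so reachability of the accepting location is equivalent to the existence of a globally consistent run, i.e. to the machine halting.

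The heart of the argument — and the step I expect to be the main obstacle — is enforcing \emph{exact} counter arithmetic with nothing but FIFO reads, one clock per automaton, and no feedback: each decrement must be certified against a genuinely earlier, still-unmatched increment of the same counter, and a zero-test must fire exactly when the running balance is $0$. The mechanism I would use is to time-stamp operations so that the age of a message when it reaches its verifier reconstructs the global time at which $A_1$ produced it; increment/decrement matching and emptiness of a counter are then encoded as exact age equalities and tested by read guards of the tight form (age in $[k,k]$). Getting this encoding right so that a single-clock, bounded-control verifier can perform the matching in FIFO order is the delicate part, as is proving both directions of correctness: that every legal halting computation induces an accepting global run, and conversely that every run reaching the accepting location decodes to a legal halting computation. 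Once this is established, the construction witnesses undecidability for three one-clock timed automata over two unidirectional channels with an acyclic topology and no global clocks, which is the statement of Theorem~\ref{undec:3}.
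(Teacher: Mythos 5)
Your setup --- reduction from the halting problem for two-counter machines, the line topology $A_1 \to A_2 \to A_3$ with channels $c_{1,2}$, $c_{2,3}$, and the observation that uniform time elapse makes message ages an implicit shared clock --- coincides with the paper's. But the core of the proof is exactly the step you defer, and the mechanism you sketch for it does not go through. You propose that $A_1$ emit a time-stamped trace of counter operations and that the verifiers certify each decrement ``against a genuinely earlier, still-unmatched increment'' and each zero-test ``when the running balance is $0$'' via exact age equalities of the form $[k,k]$. With a FIFO channel, one clock, finite control and no feedback edge, a verifier cannot perform this matching: the number of pending increments is unbounded, so it would have to remember unboundedly many production times, or re-enqueue consumed messages --- which the acyclic feed-forward topology forbids (that re-enqueueing is precisely what makes the untimed self-loop channel Turing-powerful, and it is unavailable here). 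An age guard on a read relates the read time only to the write time of that one message; it gives no handle on the multiset of earlier unmatched operations. As described, the reduction does not yet encode a counter, and you acknowledge as much by flagging this as the unresolved obstacle.

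The paper sidesteps the difficulty entirely: it never verifies arithmetic message-by-message. Instead the \emph{value} of $c_1$ after $i$ steps is encoded as the difference between the total time elapsed in $A_2$ and in $A_1$ when each reaches its own copy of instruction $\ell_i$ (and $c_2$ is the analogous lag between $A_3$ and $A_2$). An increment widget for $c_1$ takes $2$ time units in $A_2$ but only $1$ in $A_1$; a decrement swaps the speeds; hence the lag tracks the counter exactly, with $A_1$ always ahead of (or level with) $A_2$, which is ahead of (or level with) $A_3$. The channel messages carry only the instruction flow plus a fresh token $zero_1$ (resp.\ $zero_2$) at each zero-test, and the test collapses to checking whether the age of that single token when read is $0$ (the two automata are in sync, counter is zero) or $>0$ (the reader lags, counter is positive). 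This encoding of counter values as accumulated time lags between adjacent automata is the key idea your proposal is missing; without it, or a worked-out substitute, the argument is incomplete.
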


\begin{proof}
We prove the undecidability by reducing the halting problem 
for deterministic two counter machines. We consider the case 
of a CTA  consisting 
of timed automata $A_1, A_2,A_3$ with  channels $c_{1,2}$ from $A_1$ to $A_2$ and 
$c_{2,3}$ from $A_2$ to $A_3$. The undecidability for the other possible topologies 
are discussed in Appendix \ref{app:undec-other}.

\subsubsection{Counter Machines}
\label{app:2cm}
A two-counter machine $\mathcal{C}$ is a tuple $(L, \{c_1,c_2\})$ where $L$= $\{\ell_0,
    \ell_1, \ldots, \ell_n\}$ is the set of instructions---including a
distinguished terminal instruction $\ell_n$ called HALT---and ${
  \set{c_1, c_2}}$ are the two \emph{counters}.  The
instructions in $L$ are one of:
(i) (increment $c$ by 1) $\ell_i {:} \mathsf{inc}~c$;  goto  $\ell_k$,
(ii) (decrement $c$ by 1) $\ell_i {:}  \mathsf{dec}~c$;  goto  $\ell_k$,
(iii) (zero-check $c$) $\ell_i {:}$ if $(c {>}0)$ then goto $\ell_k$
  else goto $\ell_m$,
(iv) (Halt) $\ell_n:$ HALT, 
where $c \in \{c_1,c_2\}$, $\ell_i, \ell_k, \ell_m \in L$.
A configuration of a two-counter machine is a tuple $(l, c, d)$ where
$l \in L$ is an instruction, and $c, d$ are natural numbers that specify the value
of counters $c_1$ and $c_2$, respectively.
The initial configuration is $(\ell_0, 0, 0)$. The transition relation 
is the standard one for Minsky machines. 
The \emph{halting problem} for a two-counter machine asks whether 
its unique run starting at $(\ell_0,0,0)$ ends at $(\ell_n, n_1, n_2)$ for some $n_1, n_2 \in \mathbb{N}$.
It is well known~(\cite{Min67}) that this problem is undecidable.
 
\subsubsection{The Encoding}
 Given a two counter machine $\mathcal{C}$, we build a CTA $\Nn$ consisting 
of timed automata $A_1, A_2,A_3$ with  channels $c_{1,2}$ from $A_1$ to $A_2$ and 
$c_{2,3}$ from $A_2$ to $A_3$. Corresponding to each increment, decrement and zero check instruction, we 
have a widget in each  $A_i$. 
A widget is a ``small'' timed automaton, consisting of some locations and transitions between them.   
Corresponding to each increment/decrement  instruction $\ell_i: \mathsf{inc~or~dec}~c,~\mathsf{go to}~\ell_j$, 
or a zero check instruction $\ell_i: \mathsf{if}~c=0,~\mathsf{go to}~\ell_j~\mathsf{else~goto}~\ell_k$, 
we have a widget $\mathcal{W}^{A_m}_{i}$ in each  $A_m, m \in \{1,2,3\}$.  
The widgets  $\mathcal{W}^{A_m}_{i}$  begin in a location labelled
  $\ell_i$, and  terminate in a location $\ell_j$ for increments/decrements, 
 while for zero check,  
  they    begin in a location labelled
  $\ell_i$, and  terminate in a location $\ell_j$ or $\ell_k$.   
Each $A_m$ is hence obtained by superimposing (one of) the  terminal location 
$\ell_j$ of a widget $\mathcal{W}^{A_m}_{i}$ to the initial 
location $\ell_j$ of widget  $\mathcal{W}^{A_m}_{j}$.

We refer to initial/terminal locations (labelled $p$)
 in each $\mathcal{W}^{A_m}_{i}$  using the notation  $(\mathcal{W}^{A_m}_{i}, p)$. Note that an instruction $\ell_i$
  can appear as initial location in a widget and a terminal location 
  in another; thus, it is useful to remember the location along with the widget we are talking about. 
  $x_1,y_1,z_1$ respectively denote the clocks used in $A_1, A_2, A_3$. To argue the proof of correctness, 
  we use clocks $g_{A_1}, g_{A_2}, g_{A_3}$ respectively in $A_1, A_2, A_3$ which are never used in any transitions (hence  $g_{A_i}$ represent the total time 
 elapse at any point in $A_i$). 
 
\paragraph{Counter Values.} The value of counter $c_1$ after $i$ steps, denoted $c^i_1$ 
is stored as the difference between the value of clock $g_{A_2}$ after $i$ steps 
and the value of clock $g_{A_1}$ after $i$ steps. Denoting $l_i$ to be the instruction 
reached after $i$ steps, and thanks to the fact that we have locations $l_i$  
in each of $A_1, A_2, A_3$ corresponding to the instruction $l_i$, 
the value $c^i_1$=(value of clock $g_{A_2}$ at location $l_i$ of $A_2$)  
- (value of clock $g_{A_1}$ at location $l_i$ of $A_1$). Note that $A_1, A_2$ are not always in sync while simulating 
the two counter machine : $A_1$ can simulate the $j$th instruction $l_j$ while $A_2$ is simulating the $i$th instruction $l_i$ 
for $j \geq i$, thanks to the invariant maintaining the value of $c_1$. When they are in sync, 
the value of $c_1$ is 0. Thus, $A_1$ is always ahead of $A_2$ or at the same step as $A_2$
in the simulation. The value of counter $c_2$ is maintained in a similar manner by $A_2$ and $A_3$. 
To maintain the values of $c_1, c_2$ correctly, the speeds of $A_1, A_2, A_3$ 
are adjusted while doing increments/decrements. For instance, to increment $c_1$, 
 $A_2$ takes 2 units of time to go from $\ell_i$ to $\ell_j$ while $A_1$ takes just one unit; 
then the value of $g_{A_2}$ at $\ell_j$ is two more than what it was 
at $\ell_i$; likewise, the value of $g_{A_1}$ at $\ell_j$ is one more than what it was 
at $\ell_i$. The channel alphabet is $\{(\ell_i, c^{+}, \ell_j) \mid \ell_i:\mathsf{inc~}c~\mathsf{go to}~\ell_j\}$
$\cup \{(\ell_i, c^{-}, \ell_j) \mid \ell_i : \mathsf{dec~}c~\mathsf{go to}~\ell_j\}$
$\cup \{(\ell_i, c{=}0, \ell_j), (\ell_i, c{>}0, \ell_k) \mid \ell_i :  \mathsf{if~}c=0,~\mathsf{then~ go to}~\ell_j,~\mathsf{else~ go to}~\ell_k
\}$ $\cup\{zero_1, zero_2\}$.

\begin{enumerate}

\item Consider an increment instruction $\ell_i{:}\mathsf{inc~}c~\mathsf{go to}~\ell_j$. The widgets 
$\mathcal{W}^{A_m}_{i}$ for $m=1,2,3$ are described in Figure \ref{inc-main}. The one on the left is while incrementing $c_1$, while the one on the right is obtained while incrementing $c_2$. 
\begin{figure*}[h]
\includegraphics[scale=0.4]{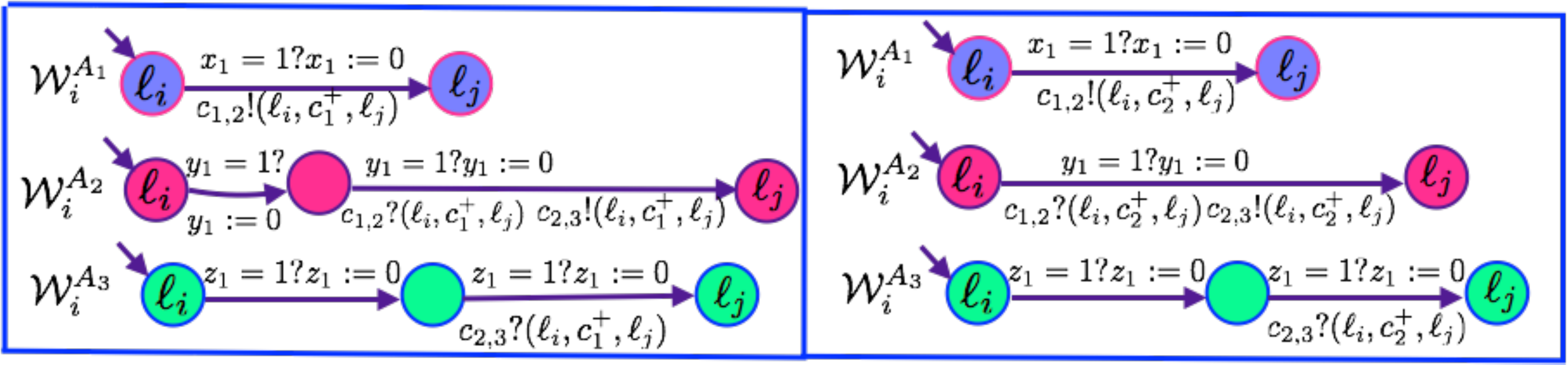}
\caption{Widgets corresponding to an increment $c_1, c_2$ instruction in $A_1, A_2,A_3$} 	
\label{inc-main}
\end{figure*}

\item The case of a decrement instruction is similar, and is obtained by swapping the speeds 
of the two automata ($A_1, A_2$ and $A_2, A_3$ respectively) in reaching $\ell_j$ from $\ell_i$ (see Figure \ref{dec}). Note that we preserve the invariant that $A_1$ is ahead of (or same as) $A_2$ which is ahead of (or same as) $A_3$ in the simulation of the two counter machine.  

\item We finally consider a zero check instruction of the form 
$\ell_i{:}\mathsf{if~}c_1{=}0,~\mathsf{then~ go to}~\ell_j,~\mathsf{else~ go to}~\ell_k$.
The widgets $\mathcal{W}^{A_m}_{i}$ for $m{=}1,2,3$ are described in Figure \ref{zero-main}. The one on the left is a zero check of $c_1$,  while the one 
on the right is a zero check of $c_2$. 
\begin{figure*}[h]
\includegraphics[scale=0.4]{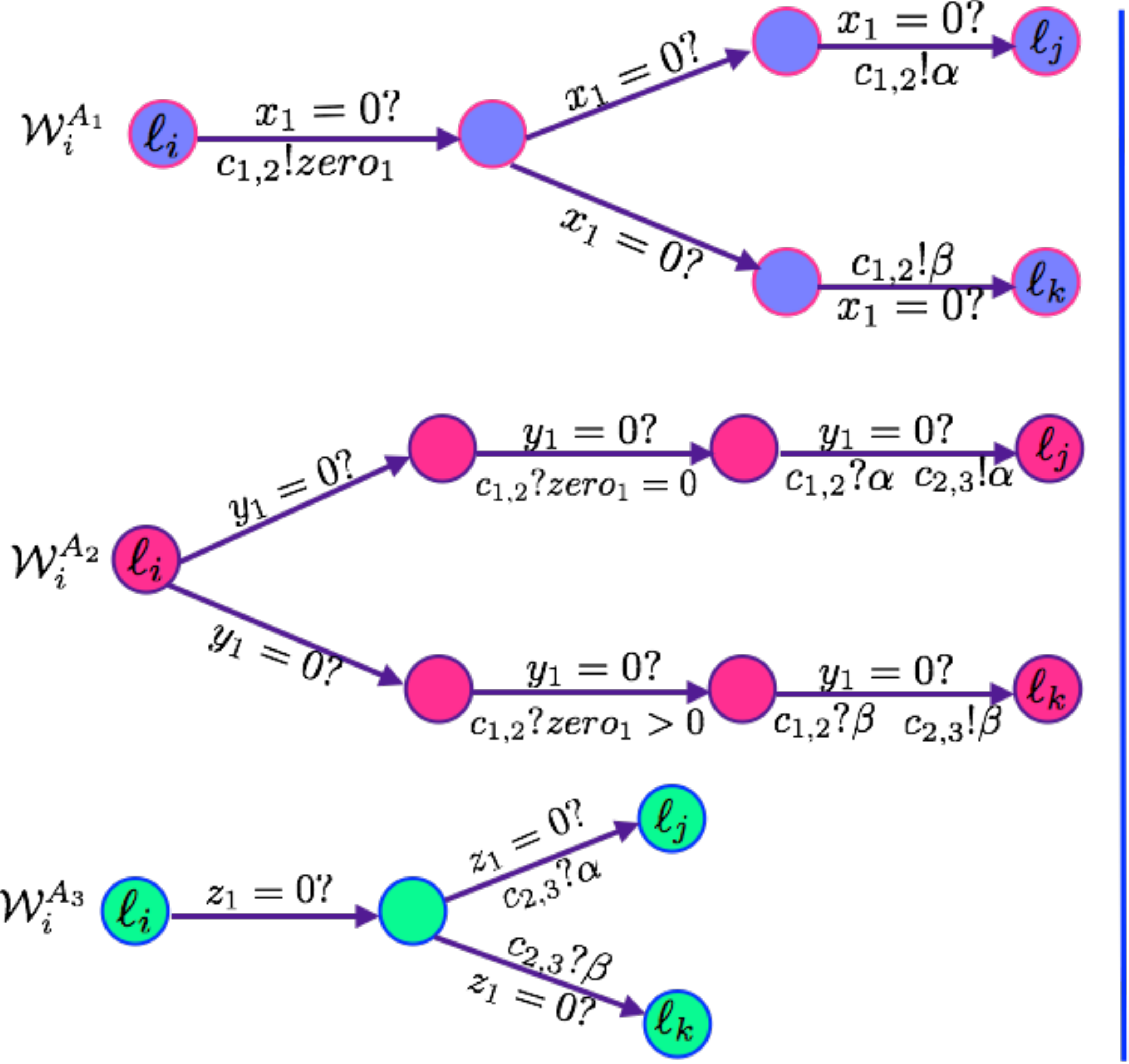}
\includegraphics[scale=0.4]{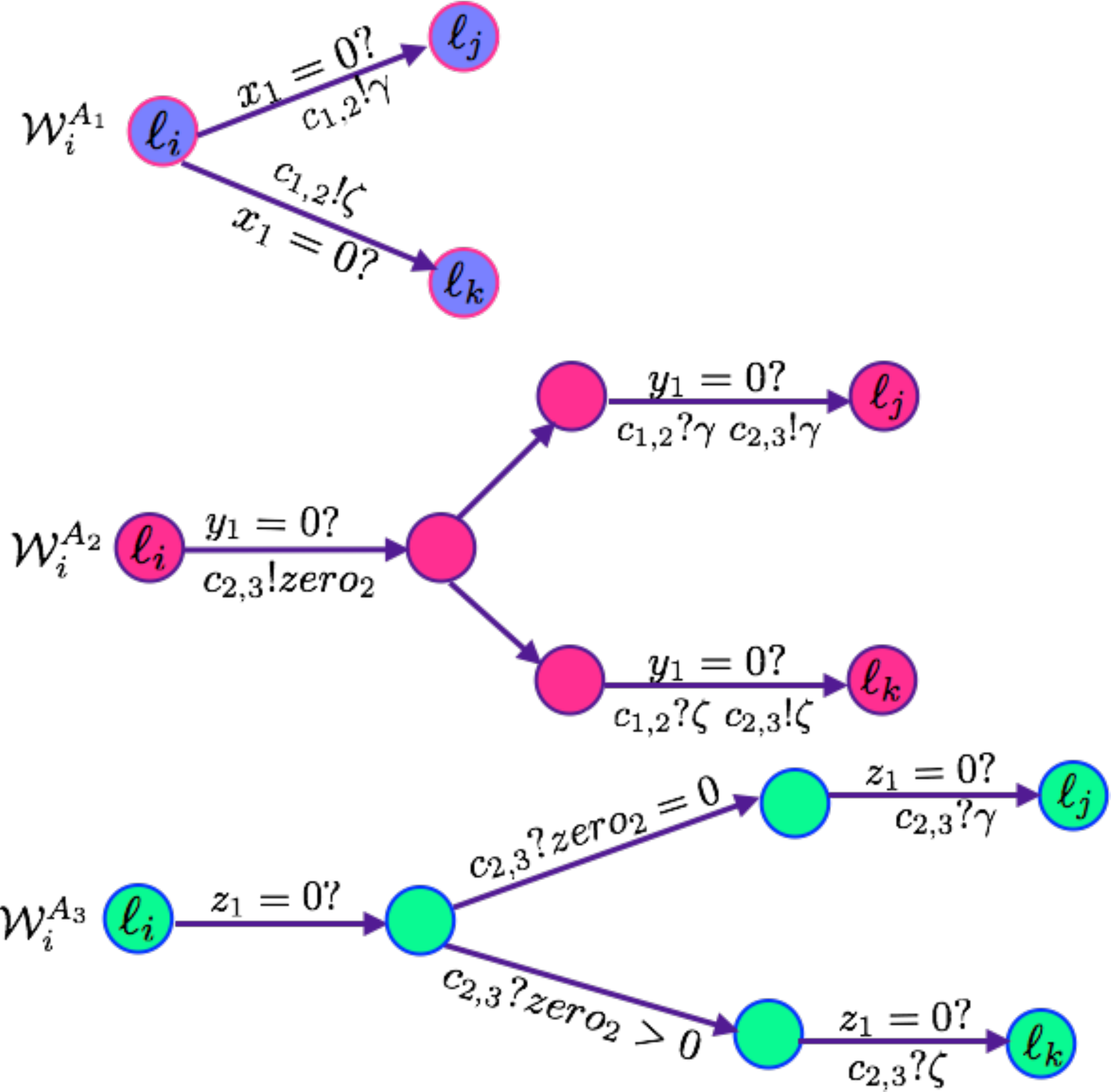}
\caption{Widgets corresponding to checking  $c_1,c_2$ is 0. Let $\alpha{=}(\ell_i,c_1{=}0,\ell_j), \beta{=}(\ell_i,c_1{>}0, \ell_k)$, 
$\gamma{=}(\ell_i, c_2{=}0, \ell_j), \zeta{=}(\ell_i,c_2{>}0, \ell_k)$.}
\label{zero-main}
\end{figure*}

\end{enumerate}

  Let $(\ell_0, 0,0), (\ell_1, c^1_1, c_2^1), \dots, 
  (\ell_h, c_1^h, c_2^h) \dots$ be the run of the two counter machine. 
  $\ell_i$ denotes the instruction seen at the $i$th step and $c_1^i, c_2^i$ respectively are the values of counters $c_1, c_2$
  after $i$ steps. Denote a block of transitions in $A_m$ leading from 
  the $i$th to the $(i{+}1)$st instruction as 
   $\mathcal{B}_{i,i+1}=\textcolor{red}{[}((\mathcal{W}^{A_m}_{i}, \ell_i), \nu^{A_m}_i), \dots, ((\mathcal{W}^{A_m}_{i}, \ell_{i+1}), \nu^{A_m}_{i+1})\textcolor{red}{]}$.  A run in each  $A_m$ is  
 $\mathcal{B}_{0,1},\mathcal{B}_{1,2}, \dots, \mathcal{B}_{h,h+1}, \dots$, where    
 each block $\mathcal{B}_{h,h+1}$
 of transitions 
 in the widget $\mathcal{W}^{A_m}_{h}$ simulate the instruction $\ell_h$, and shifts control to $\ell_{h+1}$.  
For each $m$,  $((\mathcal{W}^{A_m}_{i}, \ell_{j}), \nu^{A_m}_{j})$ 
represents  $A_m$ is at location $\ell_j$ of widget  ${\mathcal{W}^{A_m}_{i}}$ with clock valuation 
  $\nu^{A_m}_{j}$.
 \begin{lemma}
 \label{claim1-global}	
 Let ${\mathcal C}$ be a two counter machine. Let $c_1^h, c_2^h$ be the values of counters $c_1, c_2$ 
at the end of the $h$th instruction $\ell_h$. 
  Then there is a run of $\Nn$ which passes through widgets 
$\mathcal{W}^{A_m}_{0},  \mathcal{W}^{A_m}_{1}, \dots, \mathcal{W}^{A_m}_{h}$
 in $A_m, m \in \{1,2,3\}$ such that 
\begin{enumerate}
\item  $c_1^h$ is the difference between the value of clock $g_{A_2}$ on reaching the initial location 
$(\mathcal{W}^{A_2}_{h}, \ell_h)$ and the value of clock  $g_{A_1}$ on reaching the initial location 
$(\mathcal{W}^{A_1}_{h}, \ell_h)$.  $c_2^h$ is the difference between the value of clock $g_{A_3}$ on reaching the initial location 
$(\mathcal{W}^{A_3}_{h}, \ell_h)$ and the value of clock  $g_{A_2}$ on reaching the initial location 
$(\mathcal{W}^{A_2}_{h}, \ell_h)$. 
\item If $\mathcal{W}^{A_1}_{h}$ is a zero check widget for $c_1$ ($c_2$) then  
$c_1^h$ ($c_2^h$) is 0 iff one reaches a terminal location of $\mathcal{W}^{A_2}_{h}$ reading $\alpha$ ($\gamma$) and $zero_1$ ($zero_2$) with age 0.
Likewise,  $c_1^h$($c_2^h$) is $>0$ iff one reaches a terminal location of $\mathcal{W}^{A_2}_{h}$ reading $\beta$ ($\zeta$) and $zero_1$ ($zero_2$) with age $>0$.
\end{enumerate}
  \end{lemma}
Machine $\mathcal{C}$  halts iff  the halt widget $\mathcal{W}_{halt}^{A_m}$ is reached in $\Nn$,  $m{=}1,2,3$ : Appendix \ref{app:undec} has the full proof. 
\end{proof}

\subsection{Decidable Reachability}
\begin{theorem}
	The reachability problem is decidable (in $\mathsf{EXPTIME}$) for acyclic CTA 
 consisting of two timed automata  without global clocks. 
 \label{dec:2}
\end{theorem}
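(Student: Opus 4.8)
The plan is to exploit that the topology is a single unidirectional link $c_{1,2}$ with no feedback: $A_2$ never writes and $A_1$ never reads, and a write is always enabled, so the run of $A_1$ is completely independent of $A_2$. Hence reaching a target location of $A_1$ is plain timed-automaton reachability, decidable via $Reg(A_1)$, and the real content of the theorem is reachability of a target location of $A_2$, which depends only on the finite prefix of messages that $A_2$ actually consumes. First I would pass to the region abstraction of both components, replacing each $\Aa_i$ by $Reg(A_i)$ (clock values in $\pclocks_i$ capped at the maximal constant $K$, with the extra value $\infty$), and abstract the age of every queued message by its region in $\{0,1,\dots,U,\top\}$, where $U$ is the largest upper bound appearing in any read interval $I$ and $\top$ marks an age exceeding $U$. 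Uniform passage of time is then simulated by a synchronous step that simultaneously increments the clock-regions of $A_1$, of $A_2$, and the age-region of every message in the channel.

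The central object is a product graph over triples (region of $A_1$, region of $A_2$, symbolic channel), where a write of $A_1$ enqueues $(a,0)$ at the head, a read of $A_2$ dequeues the oldest message provided its age-region lies in the required interval, and time-elapse is the synchronous increment above. The structural lever making this tractable is that ages only grow while messages are consumed in FIFO order: once the oldest message reaches region $\top$ it can never be read and blocks $A_2$ forever, so along any run that reaches the $A_2$-target every consumed message is read while its age lies in $[0,U]$. This bounds, in real time, how far $A_2$ may lag behind $A_1$ on the messages that matter, and it is exactly this bounded-lag property that forbids the long-range FIFO dependencies responsible for the Turing power of untimed perfect channels (cf. Theorem \ref{undec:3}).

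The hard step, and the main obstacle, is that even with ages abstracted to the finite set $\{0,\dots,U,\top\}$ the symbolic channel is still an arbitrarily long word over $\Sigma\times\{0,\dots,U\}$, so one cannot simply store it; naively one recovers an (undecidable) perfect channel system over a finite alphabet. I expect the cleanest resolution not to store the channel explicitly but to let $Reg(A_1)$ generate its write sequence on the fly and synchronise it, through the bounded age-window, with the consumption performed by $Reg(A_2)$, retaining only a bounded interface: the letter and age-region of the message currently at the head, together with a single counter-like resource recording the current queue length. This reduces the question to reachability in the product of the two finite region automata augmented with one counter (a one-counter/pushdown-with-unary-stack reachability problem), matching the intuition that two automata and one channel encode exactly one counter, and hence one decidable counter as opposed to the two counters of Theorem \ref{undec:3}. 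Correctness would be established by a two-directional simulation lemma, in the spirit of Lemma \ref{claim1-global}, showing that a run of $\Nn$ reaches the target iff the corresponding run of the product does.

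For the complexity bound I would observe that $Reg(A_1)$ and $Reg(A_2)$ are of exponential size (in the number of clocks and in $K$), that $U$ and the age-region domain are polynomial, and that one-counter reachability is solvable in time polynomial in the automaton size; the product therefore has exponential size and the whole procedure runs in $\mathsf{EXPTIME}$, as claimed.
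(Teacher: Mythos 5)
Your high-level architecture matches the paper's: region abstraction of both components, refusal to store the channel explicitly, reduction to a one-counter (pushdown with unary stack) reachability problem, and an $\mathsf{EXPTIME}$ count. But the two load-bearing design choices are wrong as stated, and together they constitute a genuine gap. First, your counter records the \emph{queue length} while your finite control retains only the head message's letter and age-region. This interface is not self-sustaining: the moment $A_2$ dequeues the head, the new head's letter and age must come from somewhere, and neither a bare length counter nor the discarded finite-control entry provides them. You gesture at letting $Reg(A_1)$ ``generate its write sequence on the fly,'' which is indeed the right instinct, but you never reconcile it with the length counter --- if writes are generated lazily and consumed as they appear, the queue length is trivially at most one and your counter measures nothing useful. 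The quantity that is genuinely unbounded and must be counted is the \emph{global-time difference} between the reader and the writer, because that difference is exactly the age at which the pending message is read. The paper's construction (Lemmas \ref{oc-lem1} and \ref{oc-lem2}) makes this precise via a re-scheduling argument that your proposal is missing: since there are no global clocks and the channel is unidirectional, any run of $\Nn$ can be reordered so that $B$ runs ahead in time to consume each message immediately after $A$ writes it and before $A$ writes the next; then only one message ever needs to be held in the finite control, and the counter tracks how far $B$ has run ahead of $A$.

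Second, your justification for decidability --- that a message whose age exceeds the largest read bound $U$ ``can never be read and blocks $A_2$ forever,'' yielding a bounded lag --- is false in this model. Read intervals have the form ${<}\ell,u{>}$ with $u$ allowed to be $\infty$, so a message may be legitimately read at an arbitrarily large age; the paper's transitions 1(d), 2(c) and 3(g) exist precisely to test ``age $> K$'' via the unary stack. If your bounded-lag claim were true, the time difference would be bounded by the maximal constant and no counter would be needed at all; it is exactly the unbounded-above intervals that force the counter to be genuinely infinite-state. So the correct statement of why this topology is decidable is not that the lag is bounded, but that the re-scheduling collapses the channel to a single in-flight message whose age equals one unbounded integer --- one counter, as opposed to the two counters realizable with three automata in Theorem \ref{undec:3}.
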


   The proof proceeds by a reachability preserving reduction of the CTA to a one counter automaton.
  We give the proof idea here, correctness arguments  and an example  can be found in Appendix \ref{app:dec2}. 

Given CTA $\Nn$ consisting of $A=(L_A, L^0_A, \Xx_A, \Sigma, E_A, F_A)$ and 
$B=(L_B, L^0_B, \Xx_B, \Sigma, E_B, F_B)$, 
with a channel $c_{A,B}$ from $A$ to $B$, we simulate 
$\Nn$ using a  one counter automaton $\Oo$ as follows. 

\paragraph*{Intermediate Notations} We start with $Reg(A)$
 and $Reg(B)$, 
the corresponding region automata,  and run them in an interleaved fashion. 
Let $K$ be the maximal constant used in the guards of $A, B$. Let $[K]=\{0,1,\dots,K,\infty\}$.
 The locations $Q_A$ ($Q_B$) of $Reg(A)$ ($Reg(B)$) are of the form 
$L_A \times [K]^{|\pclocks_A|}$ ($L_B \times [K]^{|\pclocks_B|}$). 

\paragraph*{Transitions in $Reg(A), Reg(B)$} 
 (i) A transition $(l, \nu) \stackrel{\checkmark} {\rightarrow}(l, \nu+1)$ denotes a time elapse of 1 in both $Reg(A), Reg(B)$.
 If $\nu(x)+1$ exceeds $K$ for any clock $x$, then it is replaced with $\infty$. 
   (ii) For each transition $e=(\ell, g, c_{A,B}!a, Y, \ell')$ in $A$ we have the transition 
  $(l, \nu) \stackrel{a}{\rightarrow} (l', \nu')$ in $Reg(A)$ if $\nu \models g$, and 
  $\nu'=\nu[Y{:=}0]$.  (iii)   For each transition $e=(\ell, g, c_{A,B}?(a \in I), Y, \ell')$ in $B$ we have the transition 
  $(l, \nu) \stackrel{a \in I}{\rightarrow} (l', \nu')$ in $Reg(B)$ if $\nu \models g$, and 
  $\nu'=\nu[Y{:=}0]$. (iv)  For each internal transition $e=(\ell, g, \nop, Y, \ell')$ in $A,B$ we have the transition 
  $(l, \nu) \stackrel{\nop}{\rightarrow} (l', \nu')$ in $Reg(A),Reg(B)$ if $\nu \models g$, and 
  $\nu'=\nu[Y{:=}0]$. 
 Note that the above is an intermediate notation which will be used in the construction 
 of the one-counter automaton $\Oo$. There is no channel between $Reg(A), Reg(B)$, and 
 we have symbolically encoded all transitions of $A, B$ in $Reg(A), Reg(B)$ as above.

 \subsubsection*{Construction of $\Oo$}
 In the reduction from CTA $\Nn$ to the one counter automaton $\Oo$,  
the global time difference 
 between $A$ and $B$ is stored in the counter, such that $B$ is always 
 ahead of $A$, or at the same time as $A$. Thus, a counter value $i \geq 0$  means that $B$ is $i$ units of time ahead of $A$.
 The state space of $\Oo$ is constructed using the locations of $Reg(A), Reg(B)$, and 
the transitions of $\Oo$ will make use of the transitions described above 
 of $Reg(A), Reg(B)$. 
   Internal transitions of $A, B$ are simulated by updating the respective control 
 locations in $Reg(A), Reg(B)$. Each unit time elapse in $B$ results in incrementing the counter by 1, while 
 each unit time elapse in $A$ results in decrementing the counter.  Consider a transition in $A$ where a  message $m$ is written on the channel.  
  The counter value when $m$ is written tells us the time difference between $B, A$, and hence 
  also the age of the message as seen from $B$. Assume the counter value is $i \geq 0$. 
  If indeed $m$ must be read in $B$ when its age is exactly $i$, then 
    $B$ can move towards a transition where $m$ 
 is read, without any further time elapse. In case $m$ must be read 
 when its age is $j >i$, then $B$ can execute internal transitions as well 
 a time elapse $j-i$ so that the transition to read $m$ is enabled. However, if 
 $m$ must have been read when its age is some $k <i$, then $B$ will be unable to read $m$. 
 By our interleaved execution, each time $A$ writes a message, 
 we make $B$  read it before $A$ writes further messages,  and proceed.  Note that this does not disallow 
 $A$ writing multiple messages with the same time stamp.

 Counter values $\leq K$ are kept as part of the finite control of $\Oo$, and when the value exceeds $K$, we 
 use a unary stack with stack alphabet $\{1\}$ to keep track of the exact value $>K$.  
 Note that we have to keep track of the exact time difference between $B, A$ 
 since otherwise we will not be able to check age requirements of messages correctly.

 \paragraph*{State Space of $\Oo$}
 Let $\hat{Q}_x=\{q_{\bot}, q_1, q'_{\bot}, q'_1 \mid q \in Q_x, x \in \{A,B\}\}$.  
   Let $O_x=Q_x \cup Q^{\bot}_x$ for $x \in \{A, B\}$. 
  $O_A \times (O_B \times (\Sigma \cup \{\epsilon\})) \times ([K]\backslash\{\infty\})$ is the state space of $\Oo$, where 
     the $\Sigma \cup \{\epsilon\}$ in  
  $(O_B \times (\Sigma \cup \{\epsilon\}))$ is to remember the message (if any) written 
   by $A$, which has to be read by $B$, and the last entry in the triple denotes 
   the counter value. The stack alphabet is $\{\bot,1\}$. 
    The initial location of $\Oo$ is 
  $\{((l^0_A,0^{|\Xx_A|}), (l^0_B,0^{|\Xx_B|},\epsilon), 0) \mid l^0_A \in L^0_A, l^0_B \in L^0_B\}$ 
  and the unary stack has the bottom of stack symbol $\bot$
  in the initial configuration. 
 
\paragraph*{Transitions in $\Oo$}   The transitions in $\Oo$ are as follows : For $l, l'$ states of $\Oo$,
     internal transitions $\Delta_{int}$ consist of transitions 
  of the form $(l, l')$;  push transitions $\Delta_{push}$ consist  of transitions of the form
   $(l, a, l')$ for $a \in \{1, \bot\}$. Finally, we also have pop
  transitions  $\Delta_{pop}$ of the form $(l, a, l')$ for $a \in \{1, \bot\}$. We now describe the transitions. 
 \begin{enumerate}
 \item Pop transitions $\Delta_{pop}$ :  Pop transitions simulate time elapse in $Reg(A)$
as well as checking the age of a symbol being $K$ or $>K$ while it is read from the channel. 

\begin{itemize}
\item[(a)] 	If $(p, \nu_1) \stackrel{\checkmark}{\rightarrow} (p, \nu_1+1)$ 
in $Reg(A)$, and if the counter value as stored in the finite control is $K$, 
and if the stack is non-empty, then we pop 
the top of the stack to decrement the counter. 
For $l=((p,\nu_1), (q,\nu_2,\alpha),K)$, 
$l'=((p,\nu_1+1), (q,\nu_2,\alpha),K)$, 
$(l,1,l') \in \Delta_{pop}$. 
 \item[(b)] If $(p, \nu_1) \stackrel{\checkmark}{\rightarrow} (p, \nu_1+1)$ 
in $Reg(A)$, and if the counter value as stored in the finite control is $K$, 
and if the stack is empty,  we pop $\bot$, 
reduce $K$ in the finite control to $K-1$, and 
push back $\bot$ to the stack.  We remember that $\bot$ has been popped 
in the finite control, so that we push it back immediately. 
For $l=((p,\nu_1), (q,\nu_2,\alpha),K), l'=((p_{\bot},\nu_1+1), (q,\nu_2,\alpha),K-1)$, 
$(l, \bot,l') \in \Delta_{pop}$.
The location $p_{\bot}$ tells us that $\bot$ has to be pushed back immediately.
\item[(c)] To check that a message has age $K$ when read, we need  $i=K$, along 
with the fact that the stack is empty (top of stack=$\bot$). In this case, we pop $\bot$ and remember 
it in the finite control, and push it back. 
For $l=((p,\nu_1), (q,\nu_2,\alpha),K)$, $l'=((p,\nu_1), (q_{\bot},\nu_2,\alpha),K)$, 
$(l, \bot, l') \in \Delta_{pop}$. 
\item[(d)] To check that a message has age $>K$ when read, we need  $i=K$, along 
with the fact that the stack is non-empty (top of stack=1). In this case, we pop 1 and remember 
it in the finite control, and push it back. 
For $l=((p,\nu_1), (q,\nu_2,\alpha),K), l'=((p,\nu_1), (q_1,\nu_2,\alpha),K)$, 
$(l, 1,l') \in \Delta_{pop}$. 
\end{itemize}

\item Push transitions $\Delta_{push}$ :  Push transitions 
simulate time elapse in $Reg(B)$, and also aid in simulating 
checking the age of a symbol being $K$ or $>K$ while being read from the channel. 
\begin{itemize}
\item[(a)] Push $\bot$ to the stack while reducing counter value from $K$ to $K-1$ (1(b)). 
 For 
 $l=((p_{\bot},\nu_1), (q, \nu_2, \alpha),K{-}1)$ and $l'=((p,\nu_1), (q, \nu_2, \alpha),K{-}1)$, 
 $(l, \bot,l') {\in} \Delta_{push}$. 
 
 \item[(b)] Push $\bot$ to the stack before checking the age of a message is $K$ (1(c)).  For
 $l=((p,\nu_1), (q_{\bot}, \nu_2, \alpha),K)$ and $l'=	((p,\nu_1), (q'_{\bot}, \nu_2, \alpha),K))$, $(l, \bot,l') {\in} \Delta_{push}$. 
 
 \item[(c)] Push 1 to the stack before checking the age of a message is $>K$ (1(d)). For 
 $l= ((p,\nu_1), (q_1, \nu_2, \alpha),K)$ and $l'=((p,\nu_1), (q'_1, \nu_2, \alpha),K)$, 
 $(l, 1, l') {\in} \Delta_{push}$.

\item[(d)] If $(q, \nu_2) \stackrel{\checkmark}{\rightarrow} (q, \nu_2+1)$ 
in $Reg(B)$, and if the counter value as stored in the finite control is $K$, then we push a 1 
on the stack to represent the counter value is $>K$. That is, 
$(l, 1, l')  \in \Delta_{push}$ for $l=((p,\nu_1), (q,\nu_2,\alpha),K)$ and   
$l'=((p,\nu_1), (q,\nu_2+1,\alpha),K)$.  
\end{itemize}
\item Internal transitions $\Delta_{int}$: 
  Transitions of $\Delta_{int}$ simulate internal transitions of $Reg(A), Reg(B)$ as well as $\checkmark$- 
 transitions as follows: 
 \begin{itemize}
 \item[(a)] Let $l=((p,\nu_1), (q,\nu_2,\alpha),i)$, $l'=((p',\nu'_1), (q,\nu_2,\alpha),i)$ be states of $\Oo$. 
 $(l,l') \in \Delta_{int}$ 
  if $(p, \nu_1) \stackrel{\nop}{\rightarrow} (p',\nu'_1)$
 is an internal transition in $Reg(A)$. The same can be said of internal transitions 
 in $Reg(B)$ updating $q, \nu_2$, leaving $\alpha, i$ and $(p, \nu_1)$ unchanged. 
\item[(b)]  For $l=((p,\nu_1), (q,\nu_2,\alpha),i)$ 
with $0{<} i{<}K$, and $l'=((p,\nu_1), (q,\nu_2+1,\alpha),i+1)$,
 $(l,l') \in \Delta_{int}$ if $(q, \nu_2) 
 \stackrel{\checkmark}{\rightarrow} (q,\nu_2+1)$ 
 is a  $\checkmark$-transition in $Reg(B)$. Note that $i+1 \leq K$. 
 \item[(c)] For $l=((p,\nu_1), (q,\nu_2,\alpha),i)$ with $0{<}i{<}K$,
   and $l'=((p,\nu_1+1), (q,\nu_2,\alpha),i-1)$, 
 $(l, l') \in \Delta_{int}$ if $(p, \nu_1) 
 \stackrel{\checkmark}{\rightarrow} (p,\nu_1+1)$ 
 is a  $\checkmark$-transition in $Reg(A)$.
 \item[(d)] For $l=((p,\nu_1), (q,\nu_2,\epsilon),i), l'=((p',\nu'_1), (q,\nu_2,a),i)$, 
 $(l,l') \in \Delta_{int}$ if $(p, \nu_1) 
 \stackrel{a}{\rightarrow} (p',\nu_1')$ 
 is a  transition in $Reg(A)$ corresponding to a transition 
 from $p$ to $p'$ which writes $a$ onto the channel $c_{A,B}$.  
 \item[(e)] For $i<K$, and $i \in I$,  
 $l= ((p,\nu_1), (q,\nu_2,a),i), l'= ((p,\nu_1), (q',\nu'_2,\epsilon),i)$,
 $(l,l') \in \Delta_{int}$ if $(q, \nu_2) 
 \stackrel{a \in I}{\rightarrow} (q',\nu'_2)$ 
 is a  transition in $Reg(B)$ corresponding to a transition 
 from $q$ to $q'$ which reads $a$ from the channel $c_{A,B}$ 
and checks its age to be in interval $I$. 
\item[(f)] To check that a message has age $K$ when read, we need  the counter value $i$ to be $K$, along 
with the top of stack=$\bot$. 
See 1(c), 2(b), and then use transition $(l,l') \in \Delta_{int} $
for $l=((p,\nu_1), (q'_{\bot}, \nu_2, m),K),l'=((p,\nu_1), (r, \nu'_2, \epsilon),K)$, 
if \\ $(q, \nu_2) \stackrel{m \in [K,K]}{\rightarrow} (r, \nu'_2)$ is a read transition 
in $Reg(B)$. 
\item[(g)] To check that a message has age $>K$ when read, we need  $i=K$, along 
with the fact that the stack is non-empty (top of stack=1). 
See 1(d), 2(c), and then $(l,l') \in \Delta_{int}$ 
for 
$ l=((p,\nu_1), (q'_1, \nu_2, m),K), l'=((p,\nu_1), (r, \nu'_2, \epsilon),K)$, 
if $(q, \nu_2) \stackrel{m \in (K, \infty)}{\rightarrow} (r, \nu'_2)$ is a read transition 
in $Reg(B)$. (age requirements $\geq K$ are checked using this or the above).
  \end{itemize}
 \end{enumerate}

The correctness of the construction  is proved in Appendix \ref{app:dec2} using Lemmas \ref{oc-lem1} and 
\ref{oc-lem2}. 
  
\begin{lemma}
If  $((l_A,\nu_A), (l_B,\nu_B,a),i)$ is a configuration in  $\Oo$, along with  a stack consisting of 
  $1^j\bot$, then message $a$ has age $i+j$,  $A$ is at $l_A$, $B$ is at $l_B$, and    
   $B$ is $i+j$ time units ahead of $A$.
\label{oc-lem1}
  	\end{lemma}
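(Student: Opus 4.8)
The plan is to prove the statement as an invariant of the reachable configurations of $\Oo$, by induction on the number of transitions of a run of $\Oo$ that reaches the given configuration. Throughout I would write $T_A$ and $T_B$ for the total amounts of time elapsed so far in the $Reg(A)$- and $Reg(B)$-parts of the simulation, and set $\delta = T_B - T_A$. Two preliminary observations organize the argument. First, claims (2) and (3) are immediate from the shape of the state space: the first component of every state of $\Oo$ is precisely the current location/region of $Reg(A)$ and the second is that of $Reg(B)$, and no transition alters these except those designed to mimic a move of $Reg(A)$ or $Reg(B)$. Second, claims (1) and (4) can be fused: while a message $a$ is pending (that is, $a \ne \epsilon$), the interleaving discipline of $\Oo$ freezes $A$ — only $B$-moves are taken until $a$ is read — so $T_A$ stays equal to the time $T_A^{w}$ at which $a$ was written, whence the age of $a$, namely $T_B - T_A^{w} = T_B - T_A = \delta$, coincides with the lead. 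It therefore suffices to maintain the single numerical invariant $\delta = i + j$ together with the bookkeeping fact ``$a$ has age $\delta$'', where $i$ is the value in the finite control and $j$ is the number of $1$'s above $\bot$ on the stack (with $j = 0$ whenever $i < K$, and $i = K$ whenever $j \ge 1$).

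For the base case, the initial configuration has $i = 0$, stack $\bot$ (so $j = 0$), $a = \epsilon$, and both automata at their initial locations with $T_A = T_B = 0$; hence $\delta = 0 = i+j$, the location claims hold, and the age claim is vacuous. For the inductive step I would group the transitions of $\Oo$ by their effect on $\delta$ and on $a$. The $\nop$-transitions $3(a)$ change only a control location of $Reg(A)$ or $Reg(B)$, leaving $i$, $j$, $T_A$, $T_B$ and $a$ untouched, so the invariant is preserved verbatim. A unit time elapse in $B$ — realised by $3(b)$ when $0 < i < K$ ($i \mapsto i+1$), or by $2(d)$ when $i = K$ (push a $1$, so $j \mapsto j+1$) — increases both $T_B$ and $i+j$ by exactly $1$, so $\delta = i+j$ is maintained and, if $a$ is pending, its age rises by $1$ in step with the lead. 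Symmetrically, a unit time elapse in $A$ — realised by $3(c)$ when $0 < i < K$, by $1(a)$ when $i = K$ and the stack top is $1$ ($j \mapsto j-1$), or by the pair $1(b)/2(a)$ at the boundary where the stack is just $\bot$ (pop $\bot$, drop the control from $K$ to $K-1$, push $\bot$ back, so $i \mapsto i-1$, $j = 0$) — increases $T_A$ by $1$ and decreases $i+j$ by $1$; these moves are taken only when $a = \epsilon$, so $\delta = i+j$ is maintained and the age claim stays vacuous.

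The write $3(d)$ turns $\epsilon$ into $a$ and advances the $Reg(A)$-location without touching $i$ or $j$; by the induction hypothesis the lead is $\delta = i+j$ at this instant, and since $a$ is created at $A$'s current time $T_A^{w} = T_A$, its age is immediately $T_B - T_A^{w} = \delta = i+j$, as required. The reads dispose of $a$ and advance the $Reg(B)$-location while leaving $i, j, T_A, T_B$ fixed, and the guard of each read matches the age recorded by the invariant: $3(e)$ fires only when $i < K$ (hence $j = 0$ and age $= i \in I$); $3(f)$ fires only when $i = K$ with stack top $\bot$ (hence $j = 0$ and age $= K$, matched against $[K,K]$); and $3(g)$ fires only when $i = K$ with stack top $1$ (hence $j \ge 1$ and age $= K + j > K$, matched against $(K,\infty)$). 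Finally, the auxiliary stack-peeking moves $1(c)$, $1(d)$, $2(b)$ and $2(c)$ pop a symbol and immediately push the same symbol back, recording the observed top only in the finite-control decoration ($q_{\bot}, q_1, q'_{\bot}, q'_1$); they leave $i$, $j$, both times and $a$ unchanged, so they preserve the invariant while enabling the correct comparison in $3(f)/3(g)$. The hard part I expect is not any single case but the coordination at the $i = K$ boundary: one must verify that the hand-off between the finite control and the unary stack (the pair $1(b)/2(a)$ and the peeking pairs) changes $i+j$ by exactly the intended amount and never corrupts the recorded stack contents $1^{j}\bot$, and that it is precisely the ``read before $A$ resumes'' discipline that keeps the age of a pending message pinned to $\delta$ rather than drifting. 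Once every case is checked, the invariant holds at all reachable configurations, which is the claim.
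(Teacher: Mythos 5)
Your proposal is correct and follows essentially the same route as the paper's own argument: the counter value plus the number of $1$'s on the stack is maintained as the exact time lead of $B$ over $A$ (incremented by $B$-elapses, decremented by $A$-elapses, with the hand-off between finite control and stack at the $K$ boundary), and the age of a pending message equals that lead because $A$ is frozen between a write and the corresponding read. Your version is simply a more explicit induction with a full case analysis over the transition types, and it relies on the same interleaving discipline (no $A$-move while $a\neq\epsilon$) that the paper invokes informally, so nothing is missing.
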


  \begin{lemma}
  Let $\Nn$ be a CTA with timed automata $A, B$ connected by a channel $c_{A,B}$ from $A$ to $B$. 
  Assume that starting from an initial configuration $((l^0_A,0^{|\Xx_A|})$, $(l^0_B,0^{|\Xx_B|}), \epsilon)$
  of $\Nn$,  we reach configuration $((l_A,\nu_1), (l_B,\nu_2), w.(m,i))$ such that 
 $w \in (\Sigma \times \{0,1,\dots,i\})^*$, and 
   $(m,i) \in \Sigma \times [K]$ is read off by $B$ from $(l_B,\nu_2)$. 
   Then, from the initial configuration $((l^0_A,0^{|\Xx_A|}), (l^0_B,0^{|\Xx_B|}, \epsilon),0)$ with stack contents $\bot$
   of $\Oo$,  we reach one of the following configurations
  \begin{itemize}
  \item[(i)]  $((p_A,\nu'_A), (l_B,\nu_2,m),i)$ with stack contents  
  $\bot$  if $i \leq K$, 
  \item[(ii)] $((p_A,\nu'_A), (l_B,\nu_2,m),h)$ with stack contents 
  $1^j\bot$, $j >0$  if $i> K$ and $h+j=i$.   
     \end{itemize}
Moreover, it is possible to reach $(l_A, \nu_1)$ from $(p_A, \nu'_A)$ in $A$ after elapse of 
$i$ units of time. The converse is also true. 
\label{oc-lem2}
 \end{lemma}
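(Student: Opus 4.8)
The plan is to prove both directions by induction, taking the invariant of Lemma~\ref{oc-lem1} as the property to be maintained step by step: throughout the simulation the value stored by $\Oo$ (its finite-control component plus the number of $1$'s on the stack) equals the lead of $B$ over $A$ in elapsed time, and this lead equals the age recorded for the message currently held in the middle component $\alpha$. I would induct on $k$, the number of messages $B$ has already received when the run reaches $\gamma=((l_A,\nu_1),(l_B,\nu_2),w.(m,i))$. Since the channel is FIFO and its ages are monotone — the message to be read, $(m,i)$, carries the largest age $i$, and every symbol of $w$ has age $\le i$ — this index is well defined and $(m,i)$ is exactly the $(k{+}1)$-st message to be read in the run.

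For the forward direction I would fix a CTA run reaching $\gamma$ and split $A$'s computation at the transition that writes $m$; let $(p_A,\nu'_A)$ be the state of $A$ immediately after that write. By the induction hypothesis applied to the moment $B$ read the $k$-th message (or from the initial configuration of $\Oo$ when $k=0$), $\Oo$ reaches the configuration associated with the previous head. From there I replay, using $\Delta_{int}$, the internal and write transitions of $A$ up to the write of $m$ (rule~3(d), which sets $\alpha{:=}m$), interleaved with $B$'s internal transitions and with all unit time elapses. Each unit elapse of $B$ raises the value (3(b), or 2(d) past the bound) and each unit elapse of $A$ lowers it (3(c), or 1(a) past the bound); choosing the interleaving so that $B$ stays ahead keeps the value $\ge 0$, and, because writes precede their matching reads and reads occur in FIFO order, the net lead can be driven to the target value $i$ at the read of $m$. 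The passage between the two storage regimes is handled by the marker states: while the value is below $K$ it lives in the finite control, crossing $K$ upward pushes $1$'s and crossing $K$ downward pops them, the forced pop-then-push pairs 1(b)/2(a) (through $p_\bot$) serving only to keep $\bot$ at the bottom. This leaves $\Oo$ in configuration~(i) when $i\le K$ and in configuration~(ii) with $j=i-K$ when $i>K$; the read itself is then carried out by the matching age-check rule, 3(e) for $i<K$, 3(f) for $i=K$ (via 1(c),2(b) and the $q'_\bot$ marker), or 3(g) for $i>K$ (via 1(d),2(c) and the $q'_1$ marker), verifying $i\in I$.

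The clause that $(l_A,\nu_1)$ is reachable from $(p_A,\nu'_A)$ in $A$ after exactly $i$ time units is immediate in this direction: in the original run $A$ continues from $(p_A,\nu'_A)$, writing the symbols of $w$ and elapsing time, and arrives at $(l_A,\nu_1)$ exactly $i$ units later, since those $i$ units are precisely the ageing of $m$ between its write and its read. For the converse I would reverse the correspondence: from an $\Oo$-run ending in (i) or~(ii) I decode $i=h+j$ and the enabling age check $i\in I$ of $B$, undo the simulated read, and splice back the \emph{deferred} $A$-computation certified by the last clause — the $i$-unit run from $(p_A,\nu'_A)$ to $(l_A,\nu_1)$ — thereby producing a genuine CTA run whose channel head is $(m,i)$. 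I expect the main obstacle to be the coordination between the two representations of the counter: one must verify that the auxiliary states $p_\bot,q_\bot,q'_\bot,q_1,q'_1$ and the forced push/pop pairs never drop the bottom marker, never permit reading an over- or under-aged message, and never force the value below $0$, so that the invariant of Lemma~\ref{oc-lem1} survives every single $\Oo$-transition — together with exhibiting, across all messages, one global interleaving that simultaneously keeps $B$ ahead and realises each target age.
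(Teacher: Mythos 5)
Your proposal matches the paper's proof in all essentials: both arguments decouple $A$ and $B$ by freezing $A$ at $(p_A,\nu'_A)$ immediately after the write of $m$, letting $B$ run ahead so that the counter (finite control plus the number of $1$'s on the stack, i.e.\ the invariant of Lemma~\ref{oc-lem1}) records $B$'s lead and hence the age of the single in-flight message, processing one written message at a time before $A$ is allowed to proceed, and obtaining the final clause from the observation that the original in-sync run itself carries $A$ from $(p_A,\nu'_A)$ to $(l_A,\nu_1)$ in exactly the $i$ units during which $m$ ages. The one justification the paper states explicitly and you leave implicit is why this reordering of the interleaving is legitimate at all: $A$ and $B$ share no clocks, so $B$'s guards never depend on $A$'s valuation and the two computations commute — worth adding a sentence for, but it is the same proof.
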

   
 \paragraph*{Complexity : Upper and Lower bounds}
The $\mathsf{EXPTIME}$ upper bound is easy to see, thanks to the 
exponential blow up incurred in the construction of $\Oo$ using the regions 
of $A$ and $B$, and the fact that reachability in a push down automaton 
is linear.  The best possible lower bound we can achieve as of now is 
$\mathsf{NP}$-hardness, as described below. 

The proof is by reduction from the subset sum problem. 
An instance of the subset sum problem consists 
of a set $S$ of positive integers $S=\{a_1, a_2, \dots, a_n\}$ 
and a number $c$. The question to    
 be solved is whether there exists a subset $T$ of $S$  such that the sum of the elements of $T$ is equal to $c$. 
 Given $S$, we construct a CTA with processes $A, B$ as follows. 
 There is a channel $c_{A,B}$ from $A$ to $B$, and the channel alphabet is 
 $S$.  $A$ consists of locations $s_{a_i}$ for $i=1, \dots n$ and hence has 
 $|S|$ locations. There are no clocks in $A$. 
  $s_{a_1}$ is the unique initial location. 
 The transitions of $A$ are as follows. For all $1 \leq i \leq n-1$, 
  $A$ writes $a_i$ to the channel $c_{A,B}$ and goes 
  from location $s_{a_i}$   to location 
$s_{a_{i+1}}$. The final location is $s_{a_n}$. $B$ has two clocks $x, y$, and has locations 
$r_{a_i}$ for $i=1, \dots, n$ and a final location $r_f$. The initial location is $r_{a_1}$. Transitions 
in $B$ are as follows. In location $r_{a_i}$, for $1 \leq i \leq n-1$, $B$ has the following transitions:
\begin{enumerate}
\item $B$ reads $a_i$ from the channel $c_{A,B}$ and checks if clock $x$ is equal to $a_i$, and if so resets $x$, and proceeds to location 
$r_{a_{i+1}}$ for $1 \leq i \leq n-1$,   
\item $B$ reads $a_i$ from the channel $c_{A,B}$ and checks if clock $x$ is equal to 0, and proceeds to location 
$r_{a_{i+1}}$ for $1 \leq i \leq n-1$.  
\end{enumerate}
On reaching location $r_{a_n}$, we  
check if  $x=0$ and $y=c$, and if so, 
go to the final location $r_f$. It is clear that 
$B$ spends time $a_i$ at a location $r_{a_i}$ if it wishes to add $a_i$ to the sum. 
The clock $y$ which is never reset, holds the sum. The final location is reached iff $y=c$.

\section{Bounded Context Switching }
\label{sec:bndcntxt}
In this section, we show that if one considers bounded context CTA, then the reachability problem 
is decidable even when having global clocks. 

 Given a CTA, a \emph{context} is a sequence of transitions in the CTA where only one automaton is \emph{active} viz., 
reading  from atmost one fixed channel, but possibly writing to many channels that it can write to, 
except from the one it reads from (in case of self-loops in the topology).
Thus, (a) a context is simply a sequence of transitions where a single automaton 
$A_i$ performs channel operations, and (b) in a context, $A_i$ can read from atmost one channel. 
A \emph{context switch} happens when we have transitions 
$C_{g} \stackrel{+}{\rightarrow}C_{i}$ and $C_{i}\stackrel{}{\rightarrow}C_{i+1}$ such that 
(a) or (b) is true.

\begin{itemize}
\item[(a)] 	$C_{i+1}$ is a configuration obtained when some automaton $A_k$ performs some channel 
operation,  
and $C_i$ is the configuration obtained by a channel operation 
in an automaton $A_t \neq A_k$, or,  there is a configuration $C_g, g\leq i-1$, obtained by a channel operation 
in an automaton $A_t \neq A_k$, and
the only channel operations in configurations $C_{g+1}, \dots, C_i$ are by $A_k$
when it reads from some fixed channel $c$ or it writes to any channel other than $c$ (if it reads from $c$). 
It is important that $c$ is a fixed channel 
from which $A_k$ reads (if it does) in configurations $C_{g+1}, \dots, C_i,C_{i+1}$. 
 
\item[(b)] In this case, assume there is a unique automaton $A_k$ which is active and 
involved in channel operations in configurations $C_g, \dots, C_i, C_{i+1}$. Let $C_{i+1}$ be the configuration obtained when $A_k$ reads from a channel $c$.
\begin{itemize}
\item  The first possibility for a context switch is that 
 $C_i$ is obtained when $A_k$ reads from a channel $c' \neq c$. 
\item  The second possibility  is that there is a configuration $C_g, g \leq i-1$, where 
  $A_k$ reads from a channel $c' \neq c$
and, configurations 
$C_{g+1}, \dots, C_i$ either have no channel operations, or 
  $A_k$ only writes to its channels in  $C_{g+1}, \dots, C_i$.  
    \end{itemize}
   \end{itemize}
     
  \begin{definition}
  A CTA  $\Nn$ is bounded context, if the number 
of context switches in any run of $\Nn$ is bounded above by some $B \in \mathbb{N}$.
  \end{definition}
 See the right part of Figure \ref{fig:top} for an example 
of a CTA consisting of two processes $A_1, A_2$, where $A_1$ writes 
on $c_{1,2}$ to $A_2$. This acyclic CTA is not bounded context.  
 There is a  run where $A_1$ writes an $a$ after every one time unit, and $A_2$ reads an $a$ once in two time units. 
     There is also a run where $A_1$ writes $b$ onto the channel whenever it pleases and $A_2$ reads it 
     one time unit after it is written.  
     \begin{theorem}
\label{thm:dec2}
 Reachability  is decidable for bounded context CTA with global clocks and any number of processes.  
\end{theorem}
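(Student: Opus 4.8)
The plan is to guess the bounded context decomposition and then reduce the resulting object to an \emph{untimed} bounded-context queue system, whose reachability is decidable by \cite{DBLP:conf/tacas/TorreMP08}. Since the number of context switches in any run is bounded by some $B$, and a context is completely described by the single active automaton $A_i$ together with the (at most one) channel $c$ it is allowed to read from, there are only finitely many context \emph{types}; a run therefore realises a sequence $\kappa_1,\dots,\kappa_{B+1}$ of contexts that I fix once and for all by a nondeterministic guess at the start. Inside a single context only $A_i$ performs channel operations, while every other automaton keeps its control location frozen and only its clocks (and the ages of the pending messages) advance under the uniform timed transitions $\stackrel{t}{\longrightarrow}$. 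So the reduction needs to faithfully simulate the active automaton of each context and, crucially, to carry the shared data---control locations, clock valuations, and channel contents---correctly across context boundaries.

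First I would eliminate the clocks. Because clocks are integral and every guard compares a clock against a constant at most $K$, I replace each $A_i$ by its region automaton $Reg(A_i)$ over $[K]=\{0,1,\dots,K,\infty\}$, exactly as in the construction behind Theorem~\ref{dec:2}. Global clocks cause no trouble for finiteness: the region component of each clock in $\mathcal{G}$ is stored in a \emph{shared} part of the global control and updated by whichever automaton checks or resets it, so the finite control of the product remains finite even when $\mathcal{G}\neq\emptyset$. After this step the only genuinely unbounded data left are the channel contents, which are monotonic timed words over $\Sigma\times\Nat$ with nondecreasing timestamps from the newest (tail of the word) to the oldest (head).

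Next I would discretise the message ages. A read $c?(a\in I)$ only ever compares an age against constants $\le K$, so it suffices to track, for each pending message, its age \emph{region} in $[K]$, and monotonicity of the channel word means these age regions are nondecreasing along the channel. The genuine difficulty is that all ages advance \emph{simultaneously} under a timed step, whereas a queue entry is immutable until it is dequeued, so I cannot rewrite the stored age regions as time passes. I would resolve this exactly in the spirit of the one-counter construction of Theorem~\ref{dec:2}: instead of storing an absolute age, I tag each message, at the moment it is written, with the age region it will have when it is eventually read (a nondeterministic guess in $\{0,\dots,K,{>}K\}$), and I verify at read time that this guess equals the global time that has actually elapsed between the write and the read. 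Since only comparisons with constants $\le K$ matter, elapsed time can be capped at ``${>}K$'', and the running time difference relevant to the channel currently being read is maintained with the capped counter/finite-control bookkeeping already used for the two-automaton case---now one such counter per active read-channel, and at most one read-channel per context.

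With ages turned into tags, every timed read becomes an untimed \emph{guarded} read of a symbol from $\Sigma\times[K]$, every write becomes an untimed write of such a tagged symbol, and a timed step becomes a finite-control update of the region information together with the capped time-difference counters. The result is a bounded-context system of FIFO queues with finite control, whose reachability is decidable by \cite{DBLP:conf/tacas/TorreMP08}; ranging over the finitely many context-sequence guesses then yields decidability of reachability for the CTA, proving Theorem~\ref{thm:dec2}. The step I expect to be the main obstacle is precisely the age bookkeeping: I must certify the ages of \emph{unboundedly many} in-flight messages that may be written in one context and read several contexts later, while simultaneously respecting FIFO order and the nondecreasing age ordering within each channel, without smuggling in unbounded information that would break the queue model. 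Bounded context is what makes this finite-state---each channel is read in at most $B$ contexts, so the guessed-and-verified elapsed-time data attaches to only boundedly many ``time regimes'', and the consistency check never needs to remember more than the capped counters that the queue model already tolerates.
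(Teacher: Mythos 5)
Your reduction target and your age bookkeeping are where the argument breaks. The central difficulty is exactly the one you flag at the end, and your proposed resolution --- guess each message's age region at write time and ``verify at read time that this guess equals the global time that has actually elapsed between the write and the read'' using \emph{capped} time-difference counters --- does not work. The quantity you must compare against $K$ at read time is $t_r - t_w$ for that particular message, and the bookkeeping value you would maintain (the running difference between the reader's current time and the accumulated birth times of dequeued messages) is genuinely unbounded: a message can sit in a channel for arbitrarily long, pushing the counter far above $K$, and a later message in the same channel can then be young, requiring the counter to be decremented by an arbitrarily large amount back below $K$. Once capped at ``${>}K$'' that decrement cannot be performed soundly. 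The paper makes precisely this point already in the two-automaton case: ``we have to keep track of the exact time difference between $B,A$ since otherwise we will not be able to check age requirements of messages correctly,'' which is why Theorem~\ref{dec:2} needs a genuine one-counter automaton rather than finite control. In the bounded-context setting with many channels and global clocks you cannot even fall back on the decoupling trick of Theorem~\ref{dec:2} (reading each message immediately after it is written), because the paper explicitly notes that trick is unsound in the presence of global clocks. If you replace the capped counters by exact ones, your target model becomes bounded-context FIFO systems \emph{with counters}, and the decidability result of \cite{DBLP:conf/tacas/TorreMP08} no longer applies; so either way the final reduction step is unsupported.

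The paper's actual proof sidesteps this by choosing a different target model: a bounded-\emph{phase} multistack pushdown system. Each channel $c_{i,j}$ is simulated by two stacks $W_{i,j}$ and $R_{i,j}$; every unit of elapsed time is pushed as a symbol onto \emph{all} stacks, interleaved with the messages. Because a stack is LIFO, when $W_{i,j}$ is reversed into $R_{i,j}$ the time symbols encountered while popping down to a message sum to exactly the time elapsed since that message was written, so each message arrives in $R_{i,j}$ tagged with its true (capped) age --- no guessing and no unbounded counter in the finite control; the unbounded timing information lives in the stacks. The bounded-context hypothesis then bounds the number of phase changes by $3B$, and decidability follows from reachability for bounded-phase multistack pushdown systems (Lemma~\ref{lem:mpsmain}), not from bounded-context queue systems. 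Your region-automaton preprocessing and your observation that global clocks only enlarge the finite control are fine; the missing idea is a data structure in which the age of a message can be recovered exactly at read time, which is what the stack-reversal construction provides and a FIFO queue with capped counters cannot.
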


\paragraph*{The Idea} Let $K$ be the maximal constant used 
in the CTA with bounded context $\leq B$, and let 
$[K]=\{0,1,\dots,K, \infty\}$.  
For $1 \leq i \leq n$, let $A_i=(L_i, L^0_i, Act, \Xx_i, E_i, F_i)$  
be the $n$ automata in the CTA.  
Let $c_{i,j}$ denote the channel to which  $A_i$ writes to and $A_j$
reads from.  We translate the CTA into a bounded phase, multistack pushdown system  
(BMPS) $\Mm$ preserving reachability. A multistack pushdown system (MPS) is 
a timed automaton with multiple untimed stacks.  
A \emph{phase} in an MPS is one where a fixed stack is popped, while 
pushes can happen to any number of stacks. A change of phase occurs when there is a change in the stack which is popped. 
See Appendix \ref{app:mps} for a formal definition. We use Lemma \ref{lem:mpsmain} (proof in Appendix \ref{app:mps}) to obtain decidability 
after our reduction. 
\begin{lemma}
\label{lem:mpsmain}
The  reachability problem is decidable for BMPS.  
\end{lemma}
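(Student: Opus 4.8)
The plan is to strip away the timing by the region construction of Section~\ref{sec:defns} and thereby reduce the timed BMPS $\Mm$ to a purely untimed bounded-phase multistack pushdown automaton, for which reachability is already known to be decidable. The single fact that makes this clean is that the stacks of an MPS are \emph{untimed}: stack symbols carry no age, so the clock valuations of the underlying (discrete) timed automaton evolve independently of the stack contents. Hence the region abstraction, which only quotients the clocks, commutes with every push and pop.

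Concretely, I would first take the finite control of $\Mm$ together with its clock set $\pclocks$ and maximal constant $K$, and build the region automaton $Reg$ with locations $L \times [K]^{|\pclocks|}$, where $[K]=\{0,1,\dots,K,\infty\}$, exactly as in the Preliminaries. Each time-elapse transition $\stackrel{\checkmark}{\rightarrow}$ becomes an internal untimed move that touches no stack, while every discrete transition of $\Mm$---a $\nop$, a push onto some stack, or a pop from some stack---is copied verbatim onto the region control, yielding a multistack pushdown automaton $\Mm'$ whose states are regions. Since clocks are integral and capped at $K$, $Reg$ is finite, and the standard region correctness ($Reg(\Aa)$ empty iff $\Aa$ empty) lifts to show that a configuration of $\Mm$ is reachable iff the matching region configuration of $\Mm'$ is reachable with identical stack contents.

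The crucial point is that this translation preserves the phase structure. A phase is delimited solely by the identity of the stack being \emph{popped}; the region construction neither creates nor destroys pops, since each pop of $\Mm$ maps to one pop of $\Mm'$ on the same stack and the newly added internal moves (images of $\checkmark$-transitions) pop nothing. Thus a run of $\Mm$ with at most $B$ phases maps to a run of $\Mm'$ with at most $B$ phases and conversely, so $\Mm'$ is an untimed $B$-phase multistack pushdown automaton.

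It then remains only to invoke the known decidability of reachability (equivalently, emptiness) for bounded-phase multistack pushdown automata, which transfers decidability back to $\Mm$. The one step demanding care is the phase bookkeeping of the previous paragraph---one must verify that the inserted time-elapse moves genuinely stay inside a single phase rather than splitting or merging phases---but this is immediate once one observes that such moves perform no pop. I do not expect a deeper obstacle: the discreteness of the clocks is precisely what lets the timing be absorbed into a finite region control without perturbing the stacks.
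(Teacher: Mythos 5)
Your proposal is correct and follows essentially the same route as the paper's proof in Appendix~\ref{app:mps}: apply the standard region construction to the finite control, observe that the result is an untimed multistack pushdown automaton that is bounded-phase iff the original BMPS is, and invoke the known decidability of reachability for bounded-phase multistack pushdown systems. Your additional care about why the inserted time-elapse moves do not split or merge phases is a welcome elaboration of what the paper asserts without comment.
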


\paragraph*{Encoding into BMPS} 
The BMPS $\Mm$ uses two stacks $W_{i,j}$ and $R_{i,j}$
to simulate channel $c_{i,j}$. The control locations of $\Mm$ 
keeps track of the locations 
and clock valuations of all the $A_i$, as $n$ pairs $(p_1, \nu_1), \dots, (p_n, \nu_n)$ with $\nu_i \in [K]$ for all $i$; 
in addition, we also keep an ordered pair $(A_w,b)$ 
consisting of 
 a bit $b \leq B$ to count the context switch in the CTA and also remember 
the active automaton $A_w, w \in \{1,2,\dots,n\}$. To simulate the transitions of each $A_i$, we use 
the pairs $(p_i, \nu_i)$, keeping all pairs $(p_j, \nu_j)$ unchanged for $j \neq i$. 
An initial location of $\Mm$ has the form $((l^0_1,\nu_1), \dots, (l^0_n, \nu_n), (A_i,0))$ 
where $l^0_i \in L^0_i$, $\nu_i=0^{|\pclocks_i|}$; the pair $(A_i,0)$
denotes context 0, and $A_i$ is some automaton which is active in context 0 ($A_i$ writes to some channels).

\paragraph*{Transitions of $\Mm$} The internal transitions $\Delta_{in}$ of $\Mm$ correspond to any internal transition 
 in any of the $A_i$s and change some $(p, \nu)$ to $(q, \nu')$ where $\nu'$ is obtained by resetting some clocks 
from $\nu$. These  take place irrespective of context switch.

The push and pop transitions ($\Delta_{push}$ and $\Delta_{pop}$) of $\Mm$  are more interesting. 
Consider the $k$th context where $A_j$ is active in the CTA. In $\Mm$, 
this information is stored as $(A_j,k)$. 
In the $k$th context, $A_j$ can read from atmost one fixed channel $c_{l,j}$;
it can also write to several channels 
$c_{j,i_1}, \dots, c_{j,i_k} \neq c_{l,j}$, apart from time elapse/internal transitions.
All automata other than $A_j$ participate only in time elapse and internal transitions. 
When  $A_j$ writes a message $m$ to channel $c_{j,i_h}$ in the CTA, it is simulated 
by pushing message $m$ to stack $W_{j,i_h}$. 
All time elapses $t \in [K]$
are captured by pushing $t$ to all stacks. $\Delta_{push}$ 
has transitions pushing a message $m$ on a stack $W_{i,j_k}$, or pushing time elapse $t \in [K]$ on all stacks.

When $A_j$ is ready to read from channel $c_{l,j}$ (say), 
the contents of stack $W_{l,j}$ are shifted to stack $R_{l,j}$ if 
the stack  $R_{l,j}$ is empty. 
Assuming $R_{l,j}$  is empty, we transfer contents 
of $W_{l,j}$ to $R_{l,j}$.  The stack to be popped 
is remembered in the finite control of $\Mm$ : the pair $(p, \nu)$,  
$p \in L_j$ is replaced with $(p^{W_{l,j}}, \nu)$. 
As long as we keep reading symbols $t \in [K]$ 
from $W_{l,j}$, we remember it in the finite control of $\Mm$ 
by adding a tag $t$ to locations $(p^{W_{l,j}}, \nu)$ ($p \in L_j$) making it $((p^{W_{l,j}})_t, \nu)$. 
When a message $m$ is seen on top of $W_{l,j}$, with $((p^{W_{l,j}})_t, \nu)$ in the finite control of $\Mm$, 
we push $(m,t)$ 
to stack $R_{l,j}$, since $t$ is the indeed the time that elapsed after $m$ was written to channel 
$c_{l,j}$. When we obtain $t' \in [K]$ as the top of stack $W_{l,j}$, 
with $((p^{W_{l,j}})_t, \nu)$  in the finite control,
we add $t'$ to the finite control obtaining $((p^{W_{l,j}})_{t+t'}, \nu)$.  The next message 
$m'$ has age $t+t'$ and so on, and stack $R_{l,j}$ is populated. 
When $W_{l,j}$ becomes empty, the finite control is updated to $(p^{R_{l,j}}, \nu)$
 and $A_j$ starts reading from $R_{l,j}$. 
If $R_{l,j}$ is already non-empty when $A_j$ starts reading, it is read off first, and 
when it becomes empty, we transfer $W_{l,j}$ to $R_{l,j}$.
A time elapse $t''$ between reads and/or reads/writes of $A_j$ is simulated by 
pushing $t''$ on all stacks, to reflect the increase in age of all messages stored 
in all stacks.

\paragraph*{Phases of $\Mm$ are bounded}  Each context switch in the CTA results in $\Mm$ 
simulating a different automaton, or simulating the read from a different channel. 
 Assume  
that every context switch of the CTA results in some automaton reading off from some channel.
Correspondingly in $\Mm$,  we pop the corresponding $R$-stack, and 
if it goes empty, pop the corresponding $W$-stack filling up the $R$-stack. Once the $R$-stack is filled up, we continue popping it.
This results in atmost two phase changes (some $R_{i,j}$ to $W_{i,j}$ and $W_{i,j}$ to $R_{i,j}$) for each context in the CTA. 
An additional phase change is incurred on each context switch (a different stack $R_{k,l}$ is popped in the next context).
Note that $\Mm$ does not pop a stack unless a read 
takes place in some automaton, and the maximum number of stacks popped is 2 per context. 
$\Mm$ is hence a  $3B$ bounded phase MPS. A detailed proof of correctness and an example can be seen in Appendices \ref{app:bdconxt}, 
\ref{app:illus}.

\vspace{-.1cm}
\section{Discussion}
\vspace{-.1cm}
In this paper, we have studied the reachability problem for timed processes communicating through perfect timed channels. 
We have shown that in the absence of global clocks, 3 processes with 2 channels 
already give the undecidability of the reachability problem, while with 2 processes the reachability problem becomes decidable. 
Our work gives an exhaustive  characterisation for the decidability border of the reachability problem
in terms of number of processes and the underlying topology\footnote{the graph where each node is associated to a process and a directed edge between two nodes exists iff there is a channel between their associated processes} in the case of discrete  
timed systems. Given our undecidability results, the only question that remains open 
in the case of dense time is the decidability of reachability 
for 2 processes connected by a unidirectional channel, where the processes are Alur-Dill style 
timed automata and the ages of the messages can also be non-integral values.  
The tightness of the lower bound ($\mathsf{NP}{-}\mathsf{hard}$ness) of our decidability result
  (Theorem \ref{dec:2}) is also open.
  
We mention the possible extensions to the model of CTA as studied in this paper which will preserve
the decidability result in Theorem \ref{dec:2}. 
\begin{enumerate}
\item If we allow diagonal constraints of the form $x-y \sim c$ where $x, y$ are clocks and $c \in \mathbb{N}$, 
	Theorem \ref{dec:2} continues to hold. In the proof, given a CTA $\Nn$ 
	consisting of timed automata $A, B$ connected by the channel $c_{A,B}$ from $A$ to $B$, we construct a 
	one counter automaton $\Oo$ using $Reg(A)$ and $Reg(B)$. We can easily track the difference between two clocks $x, y$ 
	in $Reg(A)$ or $Reg(B)$, thereby handling diagonal constraints. 
\item The initial age of a newly written message in a channel is set to 0. This can be generalized in two ways : (i) allowing 
the initial age of a message to be some $j \in \mathbb{N}$, or (ii) assigning the value of some clock $x$ as the initial age. 
The construction of $\Oo$ is such that each time $A$ writes a message $m \in \Sigma$ to the channel, $m$ is remembered 
in the finite control of $\Oo$ (transition 3(d) in the proof of Theorem \ref{dec:2}).
While simulating the read by $B$ of the message $m$ (transitions 3(e), (f), (g) in the proof of  Theorem \ref{dec:2}), 
the value $i$ in the finite control of $\Oo$ along with the top of the stack 
determines whether the age of $m$ is $<K, =K$ or $>K$, where $K$ is the maximal constant 
used in $A, B$. This is used to see if the age constraint 
   of $m$ is met; the age of $m$ when it is read is same as the time difference between $B, A$.  
   We can adapt this for an initial age $j>0$,  
 by remembering $(m, j)$ in the finite control of $\Oo$. 
 If the counter value is $i<K$, then the age of the message is $j+i$, 
   while if it is $K$ and the top of stack is $\bot$, then the age of $m$ is $j+K$, 
   and it is $>j+K$ if the top of stack is not $\bot$. Checking the age constraint of $m$ correctly now boils down to using
    $j+i$ and verifying if the constraint is satisfied. 
 \end{enumerate}

\bibliographystyle{plain}
\bibliography{papers.bib}

\newpage

\appendix
\newpage
\appendix

\centerline{\bf {\Large {Appendix}}}
 
\section{Proof of Theorem \ref{undec:global}}
\label{app:undec-global}
Given an untimed automaton $A$ with a perfect channel feeding into itself, 
the reachability problem is known to be undecidable. We reduce reachability of such a system to the reachability 
in a CTA consisting of two timed automata $A_1, A_2$ connected by a unidirectional channel, allowing global clocks.   

 \begin{figure}[!h]
  \includegraphics[scale=0.45]{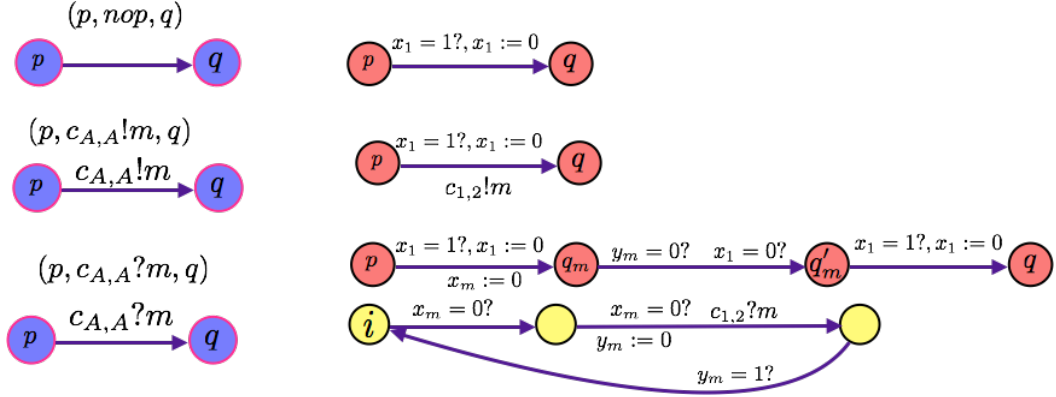}
 \caption{ On the left, we show each transition in $A$ ($nop$
 and write transitions) and on the right, the corresponding widget in $A_1$. A read transition in $A$ has widgets in both $A_1, A_2$.  
   $A_1, A_2$ are obtained by connecting all these widgets. }
   \label{undec-global-app}
 \end{figure}

 \begin{figure}[h]
 \begin{center}
  \includegraphics[scale=0.45]{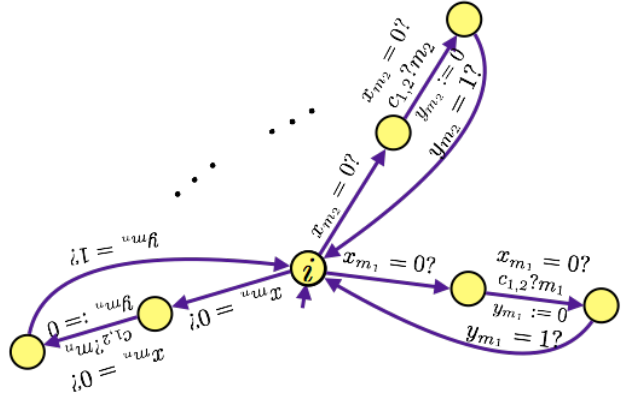}
  \end{center}
 \caption{The automaton $A_2$ of the CTA, assuming the message alphabet is $\{m_1, \dots, m_n\}$. }
   \label{undec-A2}
 \end{figure}

Figure \ref{undec-A2} describes the timed automaton $A_2$ of the CTA $\Nn$. $A_1$ is obtained by composing all the widgets 
drawn  for each transition in $A$.  Let the channel alphabet 
of $A$ be $\{m_1, \dots, m_n\}$. Then $A_1$ has clocks 
$x_1$ and clocks $x_{m_1}, \dots, x_{m_n}$ while $A_2$ has clocks 
$y_{m_1}, \dots, y_{m_n}$. The clocks $x_{m_i}, y_{m_i}$ will be used 
while respectively writing/reading message $m_i$. 
For each transition in $A$, we have a widget in $A_1$ as seen in Figure \ref{undec-global-app}. The initial location of $A_1$ is the same as $A$, let it be $s_0$. Each transition in $A$ from a location $p$ 
to $q$ also has a corresponding transition in $A_1$ from $p$ to $q$ (or a sequence of transitions in $A_1$
 from $p$ to $q$). 
$A_2$ has widgets only corresponding to read transitions in $A$. 
The automaton $A_2$ is  star-shaped obtained by joining widgets at a location $i$ (this is the central 
node in Figure \ref{undec-A2}). $i$ is also the initial location of $A_2$. 
 Each read operation of $A$ corresponds to a widget 
in $A_2$.  

\begin{enumerate}
	\item Consider a transition $(p, nop, q)$ in $A$. Correspondingly, we 
	have in $A_1$, a transition from $p$ to $q$ that checks if $x_1$ is 1 and resets it. 
	This time elapse ensures that the clocks $x_{m_i}$ and $y_{m_i}$ grow, and are non-zero. 
	\item Consider a transition $(p, c_{A,A}!m, q)$ in $A$. Correspondingly, we 
	have in $A_1$, a transition from $p$ to $q$ that checks if $x_1$ is 1 and resets it, and writes message $m$ to $c_{1,2}$. This time elapse ensures that the clocks $x_{m_i}$ and $y_{m_i}$ grow, and are non-zero. 
	\item Consider a transition $(p, c_{A,A}?m, q)$ in $A$. Correspondingly, we 	have in $A_1$, a transition from $p$ to an intermediate location $q_m$, where $x_1$ grows to 1 and is reset. The clock $x_{m}$ is also reset to 0. The automaton $A_2$ at location $i$, 
	 checks that $x_m$ is 0, and moves from location $i$ into the widget for message $m$. 
	 It reads $m$ from $c_{1,2}$ and sets clock $y_m$ to 0. $A_1$ checks if $y_m$ is 0  and then moves to location $q'_m$ with no time elapse. From $q'_m$, $A_1$ moves to $q$ elapsing a unit of time, resetting $x_1$.
	 	 	$A_2$ also goes back to $i$, elapsing a unit of time. 
	
	Note that $A_2$ cannot read a message $m$ unless $A_1$ tells it to; the way $A_1$ tells $A_2$ to read $m$ is by setting clock $x_m$ to 0. Note also that every transition involves a time elapse, and so in general, 
	none of the clocks $x_m, y_m$ will be 0. $x_m$ is 0 only when 
	$A_1$ resets it; $A_2$ reads $m$ and resets $y_m$. This is the only time when $y_m$ can be 0.  
	\end{enumerate}
	The correctness of the construction is proved using Lemma \ref{undec-global-lemma}. 
 \begin{lemma}
Let $A$ be an untimed automaton  with the perfect channel $c_{A,A}$ connecting $A$ to itself.  
Let $\rho$ be a run of $A$ beginning with the initial configuration 
$(s_0, \epsilon)$, reaching some configuration $(p, w)$, $w \in \Sigma^*$. 
Then we have a corresponding 
run $\rho'$ in the constructed CTA $\Nn$ starting with $(s_0, i, \epsilon)$ 
and reaching 
configuration $(p, i, w')$, $w' \in (\Sigma \times \Nat)^*$  such that 
$untime(w')=w$. The converse direction simulating a run of $\Nn$ in $A$ holds similarly. 
 \label{undec-global-lemma}
  \end{lemma}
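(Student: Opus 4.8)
The plan is to prove both directions by induction on the length of the run, exhibiting a tight step-by-step correspondence between a single transition of $A$ and a short ``block'' of transitions of $\Nn$. First I would isolate a \emph{simulation invariant} $\Phi$ linking a configuration $(p,w)$ of $A$ reached after $k \geq 1$ steps to the matching configuration $\Gamma = ((p,\nu_1),(i,\nu_2),c)$ of $\Nn$: (i) $A_1$ sits at the location $p$ mirroring $A$, and $A_2$ sits at its central location $i$; (ii) $untime(c)=w$, i.e. the sequence of symbols in $c_{1,2}$ (ignoring ages) is exactly the word $w$ of $c_{A,A}$, in the same FIFO order; (iii) the driver clock satisfies $\nu_1(x_1)=0$, so that precisely one unit of time must elapse before any outgoing transition of $A_1$ can fire (each such transition is guarded by $x_1=1$); and (iv) for every message symbol $m$ we have $\nu_1(x_m)\geq 1$ and $\nu_2(y_m)\geq 1$. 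Part (iv) is the crucial bookkeeping: because each simulated step forces a unit time elapse, the signalling clocks $x_m,y_m$ are strictly positive at rest and become $0$ only at the exact instant they are deliberately reset.

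For the forward direction I would induct on $k$. The base case $k=0$ is immediate: $\Nn$ starts in $(s_0,i,\epsilon)$, and since the channel is empty the fact that all clocks equal $0$ is harmless (no read is enabled). In the inductive step I split on the type of the $(k{+}1)$-st transition of $A$. A $\nop$-step $(p,\nop,q)$ and a write step $(p,c_{A,A}!m,q)$ are each simulated by one unit of time elapse (making $x_1=1$) followed by the single corresponding transition of $A_1$ (resetting $x_1$, and in the write case appending $(m,0)$ to $c_{1,2}$), with $A_2$ idling at $i$; the ages of messages already in $c_{1,2}$ grow, but since $untime$ ignores ages, (ii) is preserved. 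The read step $(p,c_{A,A}?m,q)$ is the delicate one: $A_1$ elapses one unit, moves to the intermediate location $q_m$ resetting $x_1$ and $x_m$ (so $x_m=0$); then, with no intervening elapse, $A_2$ takes the guard $x_m=0$ from $i$ into the $m$-widget, performs the read $c_{1,2}?(m\in I)$ of the oldest message and resets $y_m=0$; then $A_1$ reads $y_m=0$ and moves to $q'_m$ (again with no elapse); finally a last unit elapse returns $A_1$ to $q$ and $A_2$ to $i$, restoring $x_m,y_m\geq 1$ and $x_1=0$. The key correctness point is that, by (ii) and the FIFO discipline, the oldest symbol of $c_{1,2}$ equals the symbol $A$ reads from the head of $c_{A,A}$, namely $m$, so the read does fire on exactly that message and re-establishes $untime(c')=w'$.

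The converse direction is where the real work lies, and I expect it to be the main obstacle: I must show $\Nn$ has \emph{no spurious runs}, i.e. every run reaching $(p,i,w')$ factors into the blocks above and hence projects to a genuine run of $A$. The argument rests on invariant (iv): since every transition of $A_1$ (and $A_2$'s return) consumes a unit of time, $x_m$ and $y_m$ are positive except transiently, so (a) $A_2$ cannot fire a read without $A_1$ having just reset the matching $x_m$, and (b) $A_1$ cannot leave $q_m$ without $A_2$ having just reset $y_m$ via the read. Hence $A_1$ and $A_2$ are locked in lockstep: once $A_1$ enters a read widget it is forced through the full handshake before returning to a main location, and $A_2$ reads only the message, and in the order, dictated by $A_1$. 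I would then show by induction on the number of completed blocks that whenever $\Nn$ is in a configuration with $A_2$ at $i$ and $A_1$ at a main location $p$, the projection $(p, untime(c))$ is reachable in $A$, which yields the stated correspondence. The points needing care are the initial transient (clocks $0$ but channel empty, so no read is enabled) and checking that the freedom to interleave discrete transitions cannot break the handshake — both are controlled by the guards $x_m=0$ and $y_m=0$ together with the forced unit elapses.
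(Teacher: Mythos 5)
Your proposal is correct and follows essentially the same route as the paper: a block-wise simulation in which each $\nop$/write step of $A$ becomes one unit-elapse transition of $A_1$ and each read becomes the four-step $x_m{=}0$ / $y_m{=}0$ handshake between $A_1$ and $A_2$, established by induction on the length of the run. The only difference is one of rigour: you make the simulation invariant (in particular that $x_m, y_m \geq 1$ outside the handshake, forced by the $x_1{=}1$ guards) explicit and actually carry out the converse direction by showing every run of $\Nn$ factors into these blocks, whereas the paper states the same observations informally and dismisses the converse with ``can be proved similarly by the construction.''
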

	
	We give here, the proof from $A$ to $\Nn$. 
The proof is by construction. It is clear that corresponding to an initial configuration $(s_0, \epsilon)$ of $A$, we  
are in an initial configuration $(s_0, i, \epsilon)$ in $\Nn$. All internal transitions and write transitions in $A$ from $p$ to $q$  result 
in a transition in $A_1$ from $p$ to $q$. In the case of an internal transition in $A$, 
we have an internal transition in $A_1$; a write in $A$ translates to a write in $A_1$. 
In both these cases, $A_2$ does not move (assume that in the initial configuration, it moves and enters some widget, since all clocks are 0.
Then  it will get stuck trying to read some message $m_i$ since nothing is written so far. If it tries to 
read the message at a later time, it will be successful only if $A_1$ indeed set $x_{m_i}$ to 0 and no time elapse happened after that).  
Clearly, as long as there are no reads, the contents of channels $c_{A,A}$ and $c_{1,2}$ are the same.

Consider now a read transition from $p$ to $q$ in $A$, where message $m_i$ is being read.  Correspondingly we are at location $p$ 
in $A_1$ and at $i$ in $A_2$. The first transition is a time elapse one, where $A_1$ moves from $p$ to $q_{m_i}$.
To simulate the read, $A_1$ resets clock $x_{m_i}$ while going to $q_{m_i}$. $A_2$, on checking $x_{m_i}$ as 0, 
moves from $i$ into the widget corresponding to $m_i$. It then  resets $y_{m_i}$, and reads $m_i$ with no time elapse. 
$A_1$, from $q_{m_i}$, checks if $y_{m_i}$ is 0, and if so, moves to $q'_{m_i}$. A unit time elapse takes $A_1$ to $q$, while 
$A_2$ goes back to $i$. Note that to move out of $i$, some $x_{m_i}$ must become 0, 
and when $A_2$   returns to $i$, none of the clocks $x_{m_j}, y_{m_j}$ are zero. 
Thus, when we reach $q$ in $A_1$, we have simulated a read of the channel.

 It is clear that $\Nn$ simulates $A$, and if we reach some location $p$ of $A$ with some channel contents $w$, 
 then we reach the same location in $A_1$, and if we ignore the ages of the messages in channel $c_{1,2}$, 
 we have the same content $w$.
 The converse direction from $\Nn$ to $A$ can be proved similarly by the construction of $\Nn$.

\section{Corollary \ref{cor}: The case of a single global clock}
\label{app:undec-global-opt}
In this section, we show that even if there is only one global clock in the proof of Theorem \ref{undec:global}, we obtain undecidability.

Let $g$ denote the global clock and we assume that the messages in the channel alphabet 
are indexed $m_1, \dots, m_k$.  
The proof idea is same as in Theorem \ref{undec:global}, namely,  to simulate 
an untimed automaton $A$ with a channel.  As in the proof of Theorem \ref{undec:global}, we construct a CTA $\Nn$ with timed automata $A_1$ and $A_2$, connected by the channel $c_{1,2}$ 
from $A_1$ to $A_2$. $A_1$ has all locations of $A$, and some extra locations 
to simulate transitions of $A$. 
$A_2$ has $k+1$ locations, 
of which $init_{A_2}$ is the initial location. The other $k$ locations are used to facilitate 
the reading of messages $m_1$ through $m_k$.  The channel alphabet 
of the CTA is $\{(m_j,j) \mid 1 \leq j \leq k\}$. $A_1$ has a local clock $x$ and $A_2$ has a local clock $y$.

  An internal transition of $A$ is simulated by $A_1$ by elapsing one unit of time, and 
  both $g$ as well as $A_1$'s local clock $x$, are reset. 
 Whenever $A$ writes a message $m_j$ to its channel,  the first automaton $A_1$ 
writes $(m_j,j)$ to the channel $c_{1,2}$. Again, 
one unit of time elapses, and $g, x$ are set to 0 after that. 
To simulate a read transition $(p, c_{A,A}?m_j, q)$ 
in $A$ of the message $m_j$, $A_1$ moves to a location $q_j$ from $p$.  
From here, it elapses $\alpha_j^j$ units, where 
$\alpha_j$ is the $j$th prime number (for $j=1$, $\alpha_1=2$, for $j=2$, $\alpha_2=3$, for $j=3$, $\alpha_3=5$ and so on).  
See Figure \ref{fig:global}. The squiggly transition 
\includegraphics[scale=0.2]{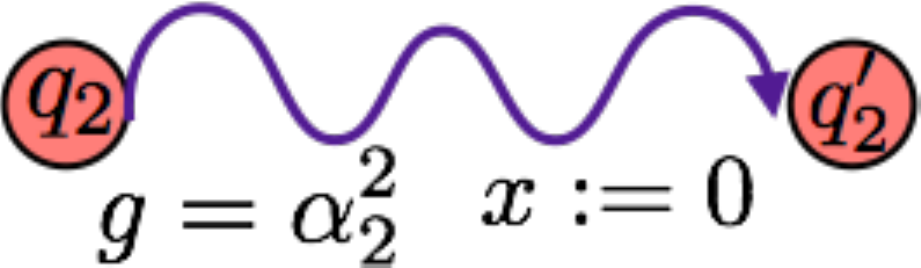} from $q_2$ to $q'_2$ in Figure \ref{fig:global} (when $A_1$ is simulating the read of $m_2$) is expanded as \includegraphics[scale=0.2]{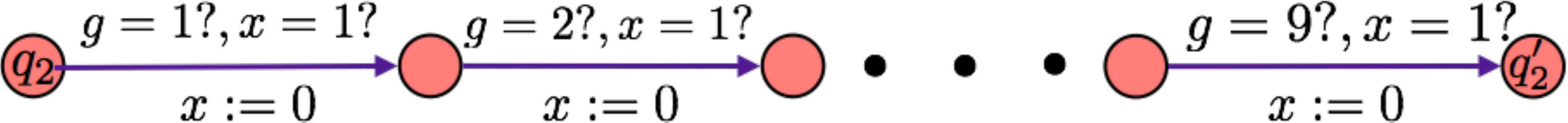}.  

$A_2$ guesses a message it is going to read by choosing a branch 
  and resets its local clock $y$. Assume $A_2$ chooses the  correct branch guessing that $m_j$ is at the head of the channel. 
  Once a branch is chosen, $A_2$ will wait to check that $g$ is $\alpha_j^j$; this time elapse takes place 
  between locations $q_j$ to $q'_j$ of $A_1$. $x$ is reset to 0. 
  Once $g=\alpha_j^j$, with no time elapse, $A_2$ moves ahead, and reads message $(m_j, j)$ 
 and resets $g$. $g=0$ is the signal for $A_1$ that the message has been read by $A_2$. 
 
\begin{enumerate}
\item Assume that $A_2$ guesses a wrong branch. That is, it chooses the branch for 
message $m_j$ when $A_1$ was trying to simulate the read of $m_i$. If indeed $m_i$ is at the head 
of the channel, then $A_2$ will get stuck. Note that once $A_2$ chooses a branch, there is no escape, and 
the message must be read with no time elapse.  
\item Assume now that we have a read transition $(p, c_{A,A}?m_i, q)$ in $A$, 
when the head of the channel $c_{A,A}$ actually contains $m_j$.  In this case, $A$ will get stuck. 
Our construction will be correct if the CTA  $\Nn$ also gets stuck. 
The transitions of $A_1$ are obtained from $A$, so in $A_1$, we will go from $p$ to location $q_i$.  
Below, we check that the simulation gets stuck somewhere in the CTA as well. 

\begin{enumerate}
\item The easiest case is when $A_2$ faithfully guesses that it must read  $m_i$, and chooses that branch. 
In this case, it gets stuck since the head of the channel is not $m_i$. 
\item The same holds when $A_2$ chooses any branch other than $m_j$. Below we consider what happens when $A_2$ chooses 
the branch to read corresponding to $m_j$.
\begin{itemize}
\item Assume that $j < i$. Then $\alpha_j^j < \alpha_i^i$. Since $A_2$ has chosen the branch corresponding to $m_j$, 
 when $g$ becomes equal to $\alpha_j^j$, $A_2$ can move forward checking $g=\alpha_j^j$ and $y=0$ on its chosen branch. 
At this time, $A_1$ is somewhere in the path between $q_i$ and $q'_i$, with $g=\alpha_j^j$ and $x=0$. 
If $A_2$ goes inside when  $g=\alpha_j^j$ and $y=0$, 
it reads $(m_j,j)$ from $c_{1,2}$, and resets $g$ to 0. 
$A_1$ will now be stuck : to enable its next transition, it will check $g=\alpha_i^i+1$ and $x=1$ simultaneously, which 
will not be satisfied, since we have $g=0$ and $x=0$, and a unit time elapse will make $x=g=1$. 
\item Assume that $j >i$. In this case, $A_2$ must check $g=\alpha_j^j > \alpha_i^i$ to be able to read $m_j$. 
Since $A_1$ will simulate the transition   $(p, c_{A,A}?m_i, q)$, it will go from $q_i$ 
to $q'_i$, obtaining $g=\alpha^i_i$. This is insufficient for $A_2$ to read $(m_j,j)$ where it needs $g$ to be $\alpha_j^j$. 
$A_1$ cannot proceed further since it needs $g=0$ and $x=0$. To obtain $g=0$ in $A_1$, we need $A_2$  
to read the message and reset $g$. The latter cannot happen since if $A_2$ elapses time $\alpha_j^j -\alpha_i^i$ 
from $init_{A_2}$, then $x$ will be non-zero, disallowing $A_1$ to move forward  to $q$. 
 Hence, the CTA will get stuck. 
\end{itemize}
\end{enumerate}
The correctness of the construction can be proved in a similar way as done in Lemma \ref {undec-global-lemma}. 
\end{enumerate}

\begin{figure}[h]
\includegraphics[scale=0.3]{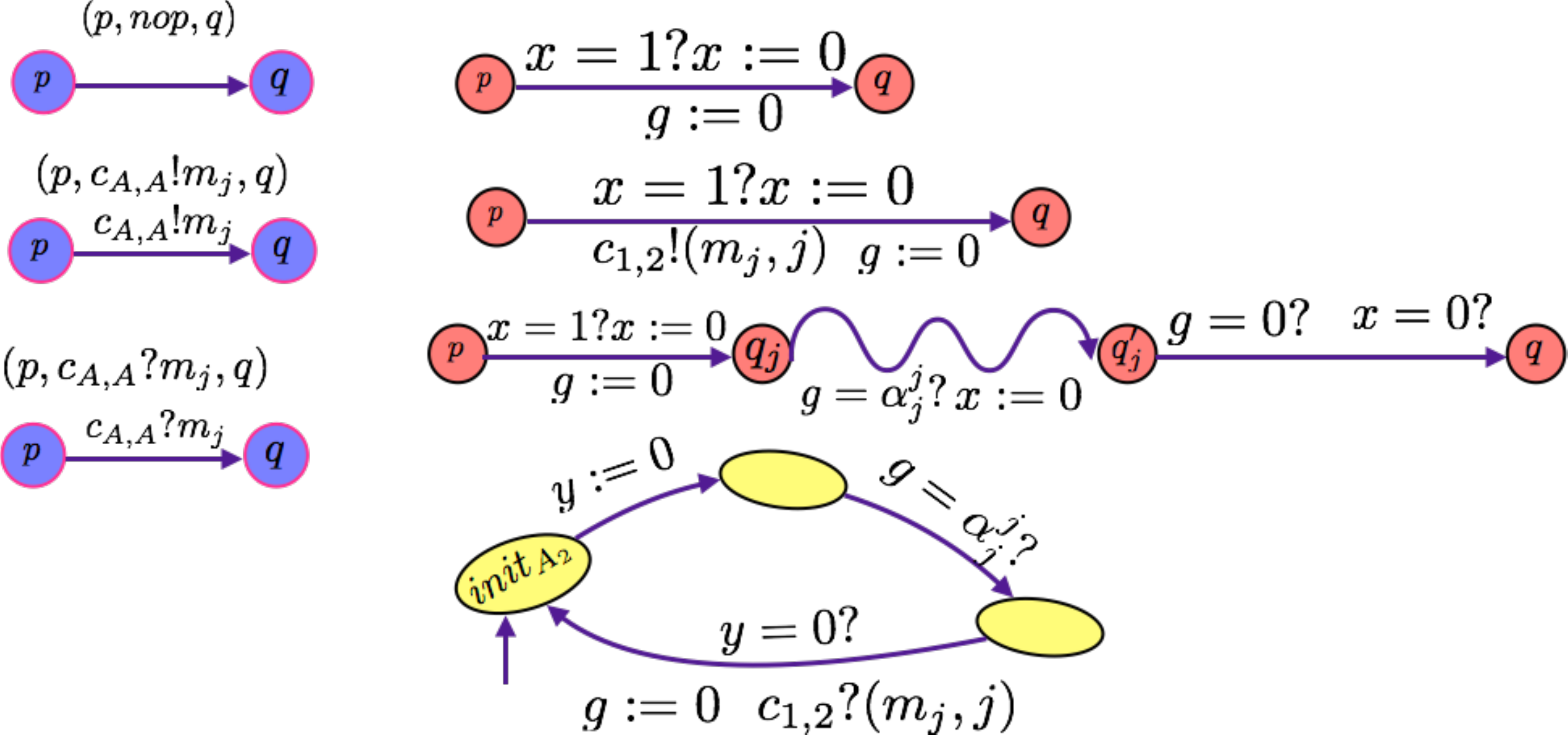}	
\includegraphics[scale=0.3]{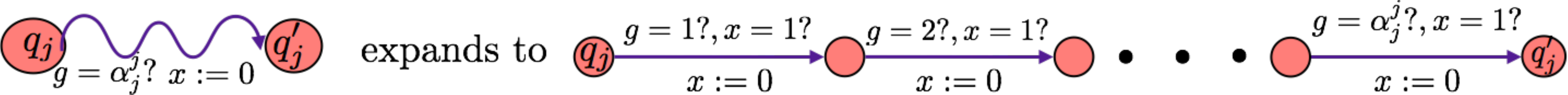}

\caption{On the left, are the transitions of $A$.  On the right, the red locations are those of $A_1$, and the yellow ones that of $A_2$. 
$A_2$ is enabled only on read transitions of $A$. $\alpha_j$ denotes the $j$th prime number.  The squiggly 
transition from $q_j$ to $q'_j$ is expanded as  above, and consists of $\alpha_j^j$ transitions. 
}
\label{fig:global}
\end{figure}

\begin{figure}[h]
\includegraphics[scale=0.3]{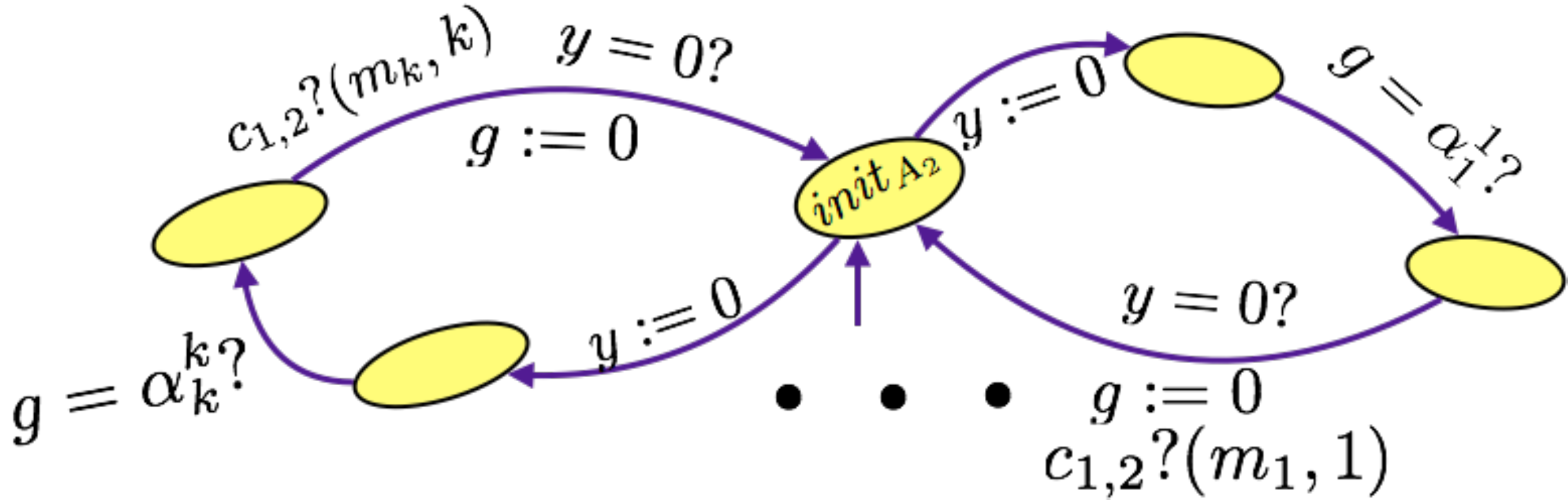}	
\caption{The automaton $A_2$ consists of widgets for reading messages $m_1, \dots, m_k$. Once a branch is chosen correctly, $A_2$ can come back to $init_{A_2}$ only after reading the head of the channel.}
\label{fig:global}
\end{figure}

\section{Proof of Theorem \ref{undec:3}}
\label{app:undec}

\subsection{Counter Machines}
\label{app:2cm}
A two-counter machine $\mathcal{C}$ is a tuple $(L, \{c_1,c_2\})$ where ${L = \set{\ell_0,
    \ell_1, \ldots, \ell_n}}$ is the set of instructions---including a
distinguished terminal instruction $\ell_n$ called HALT---and ${
  \set{c_1, c_2}}$ is the set of two \emph{counters}.  The
instructions $L$ are one of the following types:
\begin{enumerate}
\item (increment $c$) $\ell_i : c := c+1$;  goto  $\ell_k$,
\item (decrement $c$) $\ell_i : c := c-1$;  goto  $\ell_k$,
\item (zero-check $c$) $\ell_i$ : if $(c >0)$ then goto $\ell_k$
  else goto $\ell_m$,
\item (Halt) $\ell_n:$ HALT.
\end{enumerate}
where $c \in \{c_1,c_2\}$, $\ell_i, \ell_k, \ell_m \in L$.
A configuration of a two-counter machine is a tuple $(l, c, d)$ where
$l \in L$ is an instruction, and $c, d$ are natural numbers that specify the value
of counters $c_1$ and $c_2$, respectively.
The initial configuration is $(\ell_0, 0, 0)$.
A run of a two-counter machine is a (finite or infinite) sequence of
configurations $\seq{k_0, k_1, \ldots}$ where $k_0$ is the initial
configuration, and the relation between subsequent configurations is
governed by transitions between respective instructions.
The run is a finite sequence if and only if the last configuration is
the terminal instruction $\ell_n$.
Note that a two-counter  machine has exactly one run starting from the initial
configuration. 
The \emph{halting problem} for a two-counter machine asks whether 
its unique run ends at the terminal instruction $\ell_n$.
It is well known~(\cite{Min67}) that the halting problem for
two-counter machines is undecidable.

We reproduce the widgets here for convenience. 

\begin{enumerate}
\item Consider an increment instruction $\ell_i:~\mathsf{inc~}c~\mathsf{go to}~\ell_j$. The widgets 
$\mathcal{W}^{A_m}_{i}$ for $m=1,2,3$ are described in Figure \ref{inc}. The one on the left is while incrementing $c_1$, while the one 
on the right is obtained while incrementing $c_2$. 
\begin{figure}[h]
\includegraphics[scale=0.5]{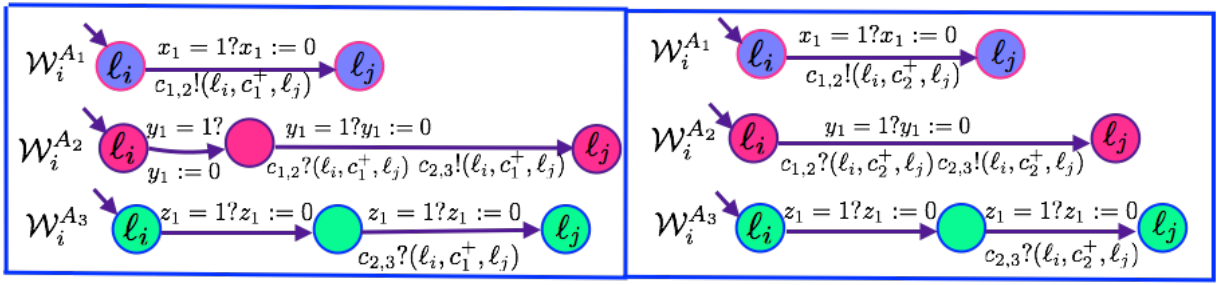}
\caption{Widgets corresponding to an increment $c_1, c_2$ instruction in each process. The 
overload of notation when there is a write and a read on the same transition for $A_2$ can be easily split into two transitions. We keep it this way for conciseness.} 	
\label{inc}
\end{figure}

\item The case of a decrement instruction is similar, and is obtained by swapping the speeds 
of the two automata in reaching $\ell_j$ from $\ell_i$. 
  Consider a decrement instruction $\ell_i:~\mathsf{dec~}c~\mathsf{go to}~\ell_j$. The widgets 
$\mathcal{W}^{A_m}_{i}$ for $m=1,2,3$ are described in Figure \ref{dec}. The one on the left is while decrementing $c_1$, while the one on the right is obtained while decrementing $c_2$. 
\begin{figure}[h]
\includegraphics[scale=0.5]{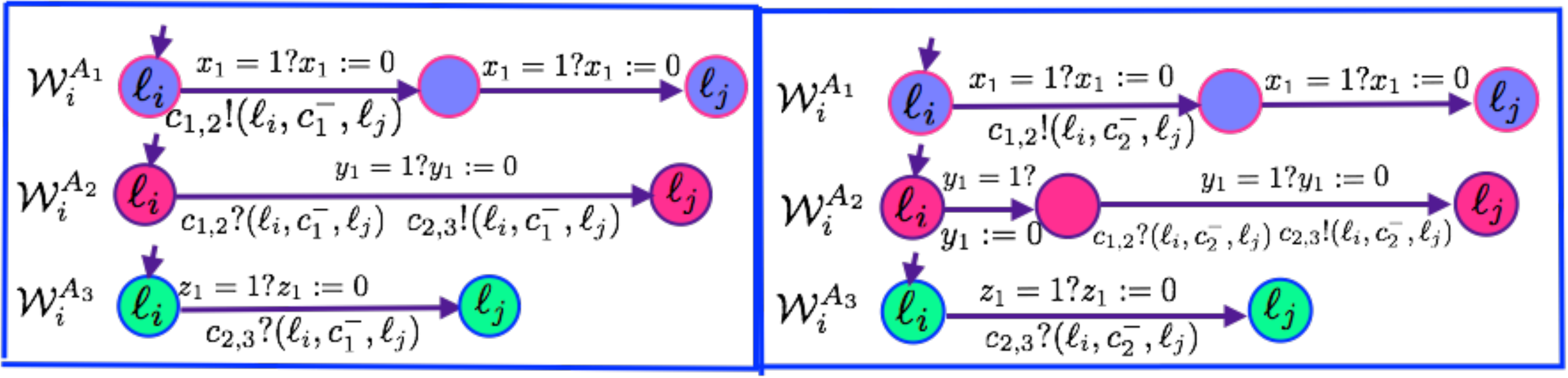}
\caption{Widgets corresponding to a decrement $c_1, c_2$ instruction in each process} 	
\label{dec}
\end{figure}

\item We finally consider a zero check instruction of the form 
$\ell_i:~\mathsf{if~}c_1=0,~\mathsf{then~ go to}~\ell_j,~\mathsf{else~ go to}~\ell_k$.
The widgets $\mathcal{W}^{A_m}_{i}$ for $m=1,2,3$ are described in Figure \ref{zero}. The one on the left is a zero check of $c_1$,  while the one 
on the right is a zero check of $c_2$. 
\begin{figure}[!h]
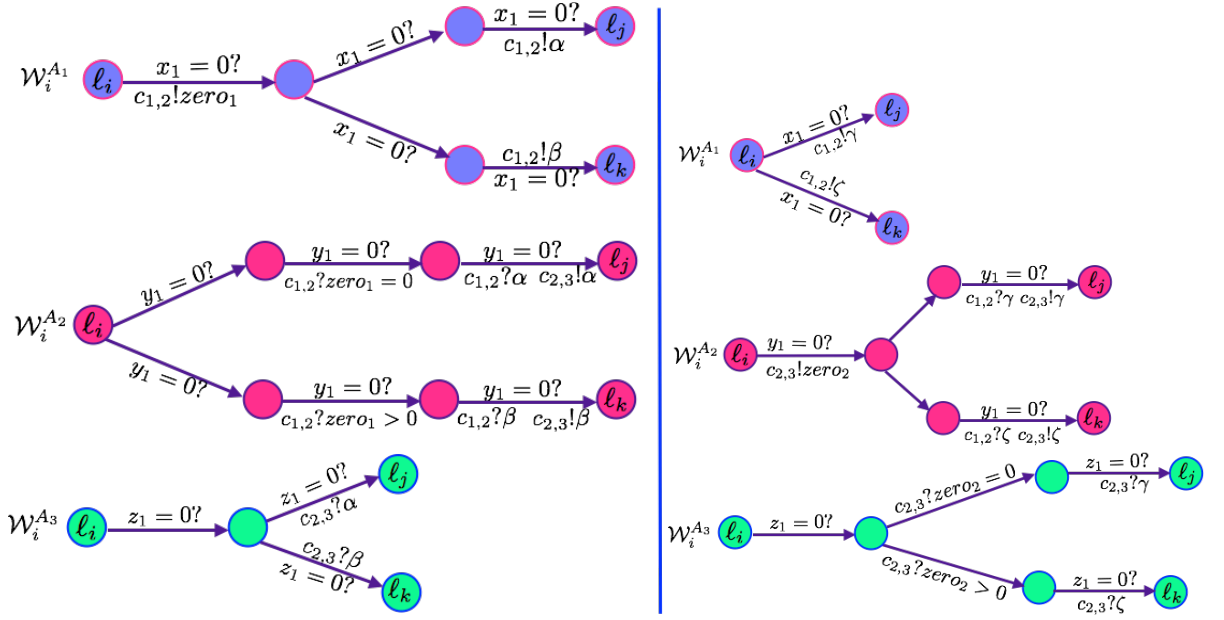

\includegraphics[scale=0.4]{figs/zero-c1}
\includegraphics[scale=0.34]{figs/zero-c2}
\caption{Widgets corresponding to checking  $c_1,c_2$ is 0.
$\alpha{=}(\ell_i, c_1{=}0, \ell_j)$, $\beta{=}(\ell_i, c_1{>}0, \ell_k)$,  $\gamma{=}(\ell_i, c_2{=}0, \ell_j)$
and $\zeta{=}(\ell_i, c_2{>}0, \ell_k)$. 
} 	
\label{zero}
\end{figure}
\end{enumerate}


\subsection{Proof of Lemma \ref{claim1-global}}
Consider a run of the two counter machine $(\ell_0,0,0), (\ell_1, c^1_1, c_2^1), \dots, (\ell_h, c_1^h, c_2^h), \dots$. 
The CTA $\Nn$ is made up of three automata $A_1, A_2, A_3$, and in the initial configuration, 
all three automata are respectively in  
$(\mathcal{W}_{0}^{A_1}, \ell_0)$, $(\mathcal{W}_{0}^{A_2}, \ell_0)$, $(\mathcal{W}_{0}^{A_3}, \ell_0)$. The value 
of clocks $g_{A_1}, g_{A_2}, g_{A_3}$ are all 0. 

\begin{enumerate}
	\item \emph{Handling increment instructions}. We start with $\ell_0$. 
Assume $\ell_0$ is an increment $c_1$ instruction. $A_1$ completes 
the widget $\mathcal{W}_{0}^{A_1}$ in one time unit, while 
$A_2$ takes two units of time to complete $\mathcal{W}_{0}^{A_2}$. It can be seen that $A_1$ reaches 
$(\mathcal{W}_{1}^{A_1}, \ell_1)$ when $g_{A_1}=1$, while 
$A_2$ reaches $(\mathcal{W}_{1}^{A_2}, \ell_1)$ when 
$g_{A_2}=2$. Clearly, $g_{A_2}-g_{A_1}=1$, the value of $c_1$ after one step. Likewise, 
$A_3$ reaches $(\mathcal{W}_{1}^{A_3}, \ell_1)$ when 
$g_{A_3}=2$. $g_{A_3}-g_{A_2}=0$, the value of $c_2$ after one step.
In general, for each $\ell_i:~\mathsf{inc~}c_1~\mathsf{go to}~\ell_j$
 instruction, the widget $\mathcal{W}_{i}^{A_1}$ progresses by one time unit, 
 incrementing $g_{A_1}$ by 1, while 
 the widget $\mathcal{W}_{i}^{A_2}$ progresses by two time units. This ensures the difference 
 between $g_{A_2}$, $g_{A_1}$ at $\ell_j$ is one more than the difference 
 at $\ell_i$. Likewise, since widgets $\mathcal{W}_{i}^{A_2}$, $\mathcal{W}_{i}^{A_3}$ progress 
 by two time units, the difference between $g_{A_2}$ and $g_{A_3}$ remains constant, 
 preserving the value of counter $c_2$. The argument is same for an increment $c_2$ instruction $\ell_i:~\mathsf{inc~}c_2~\mathsf{go to}~\ell_j$. 
 The widgets $\mathcal{W}_{i}^{A_1}$, $\mathcal{W}_{i}^{A_2}$ progress by one unit, preserving 
 the value of $c_1$, and $\mathcal{W}_{i}^{A_3}$ progresses by two time units, incrementing 
 $g_{A_3}-g_{A_2}$ by one. 
 \item \emph{Handling decrement instructions}. 
Assume $\ell_i:~\mathsf{dec~}c_1~\mathsf{go to}~\ell_j$  
is a decrement $c_1$ instruction. $A_1$ completes 
the widget $\mathcal{W}_{i}^{A_1}$ in two time units, while 
$A_2$ takes one unit of time to complete $\mathcal{W}_{i}^{A_2}$. 
This ensures the difference  between $g_{A_2}$, $g_{A_1}$ at $\ell_j$ is one less than the difference 
 at $\ell_i$. Likewise, since widgets $\mathcal{W}_{i}^{A_2}$, $\mathcal{W}_{i}^{A_3}$ progress 
 by one time unit, the difference between $g_{A_2}$ and $g_{A_3}$ remains constant, 
 preserving the value of counter $c_2$. The argument is same for a decrement $c_2$ instruction $\ell_i:~\mathsf{dec~}c_2~\mathsf{go to}~\ell_j$. 
 The widgets $\mathcal{W}_{i}^{A_1}$, $\mathcal{W}_{i}^{A_2}$ progress by two units, preserving 
 the value of $c_1$, and $\mathcal{W}_{i}^{A_3}$ progresses by one time unit, decrementing
 $g_{A_3}-g_{A_2}$ by one. 
 
\item \emph{The instruction flow in $A_1, A_2, A_3$}. Each time $A_1$ shifts control to an instruction, 
it writes to channel $c_{1,2}$ the instruction switch information. For example, 
if $A_1$ moves from $\ell_i$ to $\ell_j$ after incrementing $c_1$, it writes the tuple 
$(\ell_i, c_1^+, \ell_j)$ in $c_{1,2}$. This guides $A_2$ to follow the same 
path, and $A_2$ writes the same in channel $c_{2,3}$ which will be followed by $A_3$. 
This is true for each instruction. If we observe the sequence $\dots (\ell_i, c^+_1, \ell_j)(\ell_j,c_2^-, \ell_k) \dots$
of messages written in $c_{1,2}$, it will be the same for $c_{2,3}$. Atleast when considering increment/decrement instructions, we 
can be sure that $A_1, A_2, A_3$ follow the same path/run of the two counter machine. 
The case of zero check is yet to be verified, which we do below.

 \item \emph{Handling Zero-Check}. Consider a zero check instruction 
  $\ell_i:~\mathsf{if~}c_1=0,~\mathsf{then~ go to}~\ell_j,~\mathsf{else~ go to}~\ell_k$. By the above two cases, the values 
  of counters $c_1, c_2$ are correctly encoded when $A_1, A_2, A_3$ reach $\ell_i$ in widget 
  $\mathcal{W}_{i}^{A_m}$, $m \in \{1,2,3\}$.    
\begin{itemize}
\item Assume $c_1=0$. Then by the correctness of the encoding seen above, we know that the control of $A_1, A_2$ are respectively at 
$(\mathcal{W}_{i}^{A_1},\ell_i)$ and $(\mathcal{W}_{i}^{A_2},\ell_i)$  and  $g_{A_2}=g_{A_1}$.  No time is elapsed in widgets 
$\mathcal{W}_{i}^{A_1}, \mathcal{W}_{i}^{A_2}$. The channel
$c_{1,2}$ is empty, and $A_1$ writes in a message $zero_1$ in $c_{1,2}$. 
Control switches non-deterministically, and a guess is made by $A_1$ whether 
$c_1$ is zero or not. If $c_1$ is guessed to be 0, then control switches to the upper part of $\mathcal{W}_{i}^{A_1}$, and a message 
$\alpha=(\ell_i,c_1{=}0, \ell_j)$  is written on the channel $c_{1,2}$.  
In $A_2$, control switches non-deterministically 
from $(\mathcal{W}_{i}^{A_2}, \ell_i)$ to one of the successor locations. If control switches to the upper successor, 
indeed we get a successful move since the age of $zero_1$ is 0. In this case, 
$\alpha$ is read off $c_{1,2}$ and $\alpha$ is written to $c_{2,3}$. This 
is to help process $A_3$ decide the next instruction $\ell_j$ correctly. 
Note that a wrong guess made in $\mathcal{W}_{i}^{A_1}$ affects the rest of the computation, since in this case, $\beta=(\ell_i, c_1{>}0, \ell_k)$
 is written on $c_{1,2}$, and this cannot be read off in $\mathcal{W}_{i}^{A_2}$ since the lower part of $\mathcal{W}_{i}^{A_2}$ will be disabled.

\item Assume $c_1>0$. In this case, we know that $g_{A_2}-g_{A_1}>0$ when control respectively reaches $(\mathcal{W}_{i}^{A_1},\ell_i)$ and $(\mathcal{W}_{i}^{A_2},\ell_i)$. Hence, when $A_1$ reaches 
 $(\mathcal{W}_{i}^{A_1},\ell_i)$, $A_2$ will be in some widget  
  $\mathcal{W}_{d}^{A_2}$, and $\ell_d$ is an instruction earlier than $\ell_i$ ($\ell_d$ comes before $\ell_i$). Since no time elapse is possible 
  in $(\mathcal{W}_{i}^{A_1},\ell_i)$,  $A_2$ waits wherever it is, while $A_1$ completes the widget $\mathcal{W}_{i}^{A_1}$. 
 Since non-zero time elapse is necessary for $A_2$ to reach widget 
 $\mathcal{W}_{i}^{A_2}$, the age of $zero_1$ will be $>0$ 
 when $A_2$ reads off from $c_{1,2}$. The guess of $A_1$ in the widget $\mathcal{W}_{i}^{A_1}$ is crucial here:  $A_1$ must choose the lower half of the widget and write $\beta$. This will ensure that $A_2$ also writes $\beta$ in $c_{2,3}$, and ensures that all three automata $A_1, A_2, A_3$ choose the instruction $\ell_k$. 
\end{itemize}
Note that  the value of $c_2$ is immaterial in the above. If $c_2$ and $c_1$ are both zero, then all three automata will be in $\ell_i$ in the respective widget $\mathcal{W}^{A_m}_i$ at the same time. If $c_2>0$, then $A_3$ will ``catch up'' 
and reach widget $\mathcal{W}^{A_3}_i$; however, the guess made by $A_1$ (which is verified by $A_2$) guides $A_3$ to the correct next instruction.  
The zero-check for $c_2$ is similar. Note that the sequence consisting of messages ($(\ell_i,c^+_1, \ell_j)$, 
$(\ell_i,c^+_2, \ell_j)$, $(\ell_i,c^-_1, \ell_j)$, $(\ell_i,c^-_2, \ell_j)$, $(\ell_i,c_1{=}0, \ell_j)$, $(\ell_i,c_1{>}0, \ell_j)$,
$(\ell_i,c_2{=}0, \ell_j)$ and $(\ell_i,c_2{>}0, \ell_j)$) written in $c_{1,2}$ by $A_1$ 
and read by $A_2$, and written by $A_2$ on $c_{2,3}$ and read by $A_3$ ensures that all 3 automata follow the same sequence of instructions of the two counter machine. In particular, if the guesses made by $A_1$ regarding zero-check go wrong, then the computation stops. \\

Some important points regarding checking if $c_2$ is zero or not. 
 
\begin{itemize}
\item[(1)] If $c_1=0=c_2$ and $\ell_i$ is an instruction checking 
if $c_2$ is zero. 
Then $A_1, A_2$ are both at $\ell_i$ and $A_3$ is also at $\ell_i$. 
Analogous to $\alpha$ and $\beta$, we have $\gamma=(\ell_i, c_2{=}0, \ell_j)$
and $\zeta=(\ell_i, c_2{>}0, \ell_k)$. 
Then $A_1$ guesses if $c_2$ is zero or not by writing $\gamma$  or $\zeta$ in $c_{1,2}$. The guess of $A_1$ propagates 
to $A_2$ and $A_3$, and the correctness of the guess made by $A_1$ is verified by $A_3$. If $c_2$ was indeed 0, and $A_1$ chose to write $\gamma$, 
and if $A_2$ also made the same guess ($A_2$ must agree with $A_1$; otherwise, the computation stops)
 and reads the $\gamma$ on $c_{1,2}$ and wrote $\gamma$ on $c_{2,3}$, then 
indeed $A_3$ will proceed smoothly, since it expects a $\gamma$ when the age of $zero_2$ is 0. 
  
\item[(2)] If $c_1>0$, but $c_2=0$, and $\ell_i$ is an instruction checking 
if $c_2$ is zero. Then $A_1$ will have moved ahead  
from the widget $\mathcal{W}_{i}^{A_1}$ when $A_2, A_3$ reach  $(\mathcal{W}_{i}^{A_2},\ell_i),$   
$(\mathcal{W}_{i}^{A_3},\ell_i)$ together. The guesses of $A_1$ are already made, and one of  $\zeta, \gamma$ will have been written in $c_{1,2}$, by the time $A_2, A_3$ reach $\mathcal{W}_{i}^{A_2}$, $\mathcal{W}_{i}^{A_3}$. 
The rest of the computation 
is smooth only if $A_1$ wrote $\gamma$, since $A_3$ will read $zero_2$ when its age is 0, and 
will hence expect to read $\gamma$.  
\item[(3)] If $c_1=0$, but $c_2>0$ and $\ell_i$ is an instruction checking 
if $c_2$ is zero. 
Then 
$A_1, A_2$ are together at $(\mathcal{W}_{i}^{A_1}, \ell_i)$,
$(\mathcal{W}_{i}^{A_2}, \ell_i)$ respectively, while $A_3$ is in a widget 
$\mathcal{W}_{g}^{A_3}$ where $\ell_g$ is an instruction earlier than $\ell_i$.  
In this case, a correct computation requires $A_1$ to take the lower branch 
of $\mathcal{W}_{i}^{A_1}$ and write a $\zeta$, since the age of $zero_2$ will be $>0$ 
when $A_3$ reads it, and then $c_{2,3}$ must have a $\zeta$. 
\item[(4)] If $c_1>0$ and $c_2>0$,
and $\ell_i$ is an instruction checking 
if $c_2$ is zero. 
 Then $A_1$ is at the widget 
 $\mathcal{W}_{i}^{A_1}$, while 
 $A_2$ is in some widget $\mathcal{W}_{d}^{A_2}$ for some 
 instruction $\ell_d$ before $\ell_i$, and 
 $A_3$ is in  some widget $\mathcal{W}_{f}^{A_2}$ for some 
 instruction $\ell_f$ before $\ell_d$. In this case again, $A_1$ must choose the lower branch 
 of  $\mathcal{W}_{i}^{A_1}$, and write a $\zeta$.  This $\zeta$ will be read 
 by $A_2$ when it catches up and reaches $\mathcal{W}^{A_2}_i$, and 
 the $\zeta$ written by $A_2$ will be read by $A_3$ when it catches up 
 a while later after $A_2$. When $A_3$ catches up, the age 
of $zero_2$ is $>0$, and it will read the $\zeta$ written  by $A_2$. 
  
\end{itemize}
Note that the check on the age of $zero_1, zero_2$ is useful in checking if $c_1, c_2$ are 
0 or not, and writing $\alpha, \beta$ ensures that all three processes are in agreement 
in their choices of instructions while simulating the two counter machine.  
\end{enumerate}

\begin{lemma}
The two counter machine $\mathcal{C}$  halts iff  the halt widget $\mathcal{W}_{halt}^{A_m}$ is reached in $\Nn$,  $m{=}1,2$	
\end{lemma}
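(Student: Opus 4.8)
The plan is to reduce the statement to the determinism of $\mathcal{C}$ together with the invariant already established in Lemma~\ref{claim1-global}. Since a two-counter machine has \emph{exactly one} run from $(\ell_0,0,0)$, it suffices to prove that some run of $\Nn$ enters a halt widget if and only if this unique run visits the terminal instruction $\ell_n$. I would treat the two directions separately.

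For the ``only if'' direction (if $\mathcal{C}$ halts then a halt widget is reached), I would induct on the length $h$ of the halting run $(\ell_0,0,0),\dots,(\ell_h,c_1^h,c_2^h)$ with $\ell_h=\ell_n$. The induction hypothesis is precisely the existence statement of Lemma~\ref{claim1-global}: after simulating the first $h$ instructions there is a run of $\Nn$ threading the widgets $\mathcal{W}^{A_m}_0,\dots,\mathcal{W}^{A_m}_h$ in each $A_m$ and arriving at the common location $\ell_h$ while maintaining $c_1^h=g_{A_2}-g_{A_1}$ and $c_2^h=g_{A_3}-g_{A_2}$. Increment and decrement widgets (Figures~\ref{inc},~\ref{dec}) always complete, since only the time offsets change; for a zero-check, $A_1$ guesses the genuine outcome, and by part~2 of Lemma~\ref{claim1-global} the ages of $zero_1$/$zero_2$ make the matching branch of $\mathcal{W}^{A_2}_i$/$\mathcal{W}^{A_3}_i$ enabled, so the widget completes. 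When the run reaches $\ell_n$ all three automata enter their halt widgets.

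For the ``if'' direction (if a halt widget is reached then $\mathcal{C}$ halts), I would argue that any such run of $\Nn$ is forced to simulate $\mathcal{C}$ faithfully. First, the FIFO channels synchronize instruction choices: each instruction switch of $A_1$ writes its switch tuple to $c_{1,2}$, $A_2$ consumes these tuples in order and re-emits them to $c_{2,3}$, and $A_3$ consumes them in turn, so the three automata follow one common instruction sequence. Second, the nondeterministic zero-check guesses of $A_1$ must be correct: if $A_1$ wrote $\alpha$ (resp.\ $\beta$) against the true value of $c_1$, then when $A_2$ reaches $\mathcal{W}^{A_2}_i$ the age of $zero_1$ (which by part~2 of Lemma~\ref{claim1-global} is $0$ iff $c_1=0$ and $>0$ otherwise) disables the branch $A_2$ needs, and the run deadlocks; the symmetric statement for $c_2$ holds via $A_3$ and $zero_2$. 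Hence every surviving run makes exactly the transition prescribed by $\mathcal{C}$ at each step and therefore realises the unique run of $\mathcal{C}$; since it reaches a halt widget, that run visits $\ell_n$, i.e.\ $\mathcal{C}$ halts.

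I expect the main obstacle to lie in the deadlock-on-wrong-guess argument of the ``if'' direction, specifically in ruling out any ``recovery'' from an incorrect guess. The key point to nail down is that once the verifying automaton ($A_2$ for $c_1$, $A_3$ for $c_2$) is stuck on a disabled branch, the FIFO discipline prevents it from ever skipping or reordering the pending channel contents, so no later behaviour can resynchronize the three automata; combined with the invariant that $A_1$ stays ahead of (or level with) $A_2$ which stays ahead of (or level with) $A_3$ throughout the simulation, this shows that correctness of all guesses is not merely sufficient but necessary for reaching a halt widget.
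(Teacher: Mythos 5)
Your proposal is correct and follows essentially the same route as the paper: both directions are derived from the invariants of Lemma~\ref{claim1-global} (counter values as differences of the $g_{A_m}$ clocks, zero-checks certified by the age of $zero_1$/$zero_2$, wrong guesses causing deadlock, and the FIFO channels forcing all three automata through one common instruction sequence), combined with the determinism of $\mathcal{C}$. The paper states this more tersely, deferring the deadlock-on-wrong-guess and synchronization arguments to the proof of Lemma~\ref{claim1-global} itself, whereas you make the induction and the faithfulness argument explicit, but the substance is the same.
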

By Lemma \ref{claim1-global}, we know that in any successful computation of $\Nn$, all three automata
$A_1, A_2$ and $A_3$ go through the same sequence of widgets corresponding to the sequence of instructions 
witnessed by the two counter machine. Hence, if the two counter machine  
reaches the halt instruction, then all three processes reach the halt widget. The halt widget consists 
of the single location $\ell_{halt}$, with no constraints. Note that when all processes reach this location 
in the halt widget, the difference between the values of $g_{A_2}, g_{A_1}$ will be the value of counter $c_1$, while 
 the difference between the values of $g_{A_3}, g_{A_2}$ will be the value of counter $c_2$.

Likewise, if the two counter machine does not halt, then $\Nn$ also loops through the widgets 
corresponding to the sequence  of instructions visited by the two counter machine.

\subsection{Undecidability with other PolyForest Topologies}
\label{app:undec-other}
\noindent{\bf The Star Topology}.  The star topology is one 
where there is a central timed automaton $A_0$ which writes to all other 
timed automata $A_i$ on a channel $c_{0,i}$, and there is no communication between 
these other automata.  

It can be seen that even if we consider a CTA $\Nn$ with a star-topology with a central node (this central node is a timed automaton $A_1$) writing to timed automata $A_2, A_3$ through channels $c_{1,2}$ and $c_{1,3}$, 
the above undecidability result continues to hold good. In this case, the value 
of counter $c_1$ after $i$ steps of the two counter machine will be encoded as the 
difference of the value of $g_{A_2}$ when at $l_i$ in $A_2$ and 
the value of $g_{A_1}$ when at $l_i$ in $A_1$. Likewise, 
the value of counter $c_2$ after $i$ steps of the two counter machine will be encoded as the 
difference of the value of $g_{A_3}$ when at $l_i$ in $A_3$ and 
the value of $g_{A_1}$ when at $l_i$ in $A_1$.  
For the zero check instruction, $A_1$ passes on its guess, that is, whether it is $\alpha, \beta, \gamma$ or $\zeta$ 
to both $A_2$ and $A_3$ whenever it decides. The choice made if incorrect, will make 
one of $A_2$ or $A_3$ stuck, and that will in turn stop the computation. 
A correct guess will ensure that there is a smooth 
simulation of the two counter machine. 
 
\noindent{\bf The Broom Topology}.  The broom topology is one 
where there is a central timed automaton $A_0$ to which all other 
timed automata $A_i$ write to, on respective channels $c_{i,0}$, and there is no communication between 
these other automata.  
We can similarly encode the value of $c_1$ after $i$ instructions as the difference between  
the values of clocks $g_{A_1}$ when at $l_i$ in $A_1$ 
and $g_{A_2}$ when at $l_i$ in $A_2$. Similarly for $c_2$, 
the value of $c_1$ after $i$ instructions as the difference between  
the values of clocks $g_{A_1}$ when at $l_i$ in $A_1$ 
and $g_{A_3}$ when at $l_i$ in $A_3$. The main challenge is during a zero check. 
Note that both $A_2, A_3$ will be ahead of (or equal to) $A_1$ 
in the simulation of the two counter machine. 
Since $A_2, A_3$ are not communicating with each other, we must ensure that when 
a zero check instruction $\ell_i$ is reached, all three automata follow the same sequence 
of instructions. Assume that $\ell_i$ is an instruction which checks if $c_1$ is zero 
and accordingly, chooses $\ell_j$ or $\ell_k$.  Since $A_2$ takes care of $c_1$, it will write the message $zero_1$ 
on the channel $c_{2,1}$, and follow it up with $\alpha$ or $\beta$. The correctness 
of this guess (age of $zero_1$ being 0 when read by $A_1$ and $\alpha$ being written, or 
age of $zero_1$ being $>0$ when read by $A_1$ and $\beta$ being written) follows as in the 
existing proof. The issue however is that, when $A_3$ encounters $\ell_i$ (it will, before $A_1$ does, or when $A_1$  
does), it will make a choice of writing one of $\alpha, \beta$ on the channel $c_{3,1}$.
 $A_3$ will not write $zero_1$, since this check is carried out by $A_2$. 
 If $A_3$ writes $\alpha$, it will move to location $\ell_j$ 
 while if it writes $\beta$, it will move to location $\ell_k$. If the guess made by $A_3$ is not the same as made 
 by $A_2$, then we must stop the computation, since 
 it will mean that the sequence of instructions followed by all three machines are not the same. 
 Note that when $A_1$ reaches $\ell_i$, it will have at the head of channel 
 $c_{2,1}$, the message $zero_1$, followed by one of $\alpha, \beta$. 
 Likewise, the head of channel $c_{3,1}$ will be one of $\alpha, \beta$. 
 The zero-check widget in $A_1$ is one with no time elapse. 
 $A_1$ will first read $zero_1$, check its age, and 
 if the age is 0, it will expect to read $\alpha$ at the head of both channels. 
  Otherwise, it will be stuck. Likewise, 
 if the age of $zero_1$ is $>0$, it will expect to read $\beta$ at the head of both channels. 
  This ensures the correctness of zero check for $c_1$. The case 
  of zero check for $c_2$ is similar, with 
  $zero_2$ and $\gamma, \zeta$ playing analogous roles.

\section{Proof of Theorem \ref{dec:2}}
\label{app:dec2}
To prove the correctness of the construction of $\Oo$, we prove lemmas \ref{oc-lem1} and \ref{oc-lem2}.
\subsection{Proof of Lemma \ref{oc-lem1}}
 \begin{proof}
The initial configuration in $\Oo$ is   $((l^0_A,0^{|\Xx_A|}), (l^0_B,0^{|\Xx_B|},\epsilon),0)$. 
All clock values are 0 in $A, B$; the channel is empty and $A, B$ are at the same global time 0. 
By construction of $\Oo$, we allow $A$ to elapse time only when the counter value is $i >0$. 
That is, for $A$ to elapse time, $B$ must have already elapsed some time. 
$B$ is allowed to elapse time whenever it wants, and each such time elapse increases the counter value  by 1
till it reaches $K$; further increase in time is stored in the stack. 
Thus, if $B$ moves ahead for $i$ units of time from the initial configuration, then 
the counter value is $i$, and it does represent the difference in time between $B, A$. 
If $A$ elapses $k$ units of time, then the counter value decreases by $k$. Assume 
that $A$ writes a message $m$ when we have $i$ in the finite control and there are $j$ 1's in the stack. 
Then $i+j$ is the time difference between $B, A$. If there is no time elapse in $A$ after $m$ was written, 
 then it means that in $B$, $i+j$ time has elapsed since the time  
$m$ was written, which is the age of $m$. 
    \end{proof}

\subsection{Proof of Lemma \ref{oc-lem2}}
\begin{proof}
Let $\Nn$ be a CTA with timed automata $A, B$ connected by a channel $c_{A,B}$ from $A$ to $B$. 
  Starting from the initial configuration $((l^0_A,0^{|\Xx_A|}), (l^0_B,0^{|\Xx_B|}), \epsilon)$
  of $\Nn$, assume that we reach configuration $((l_A,\nu_1), (l_B,\nu_2), w.(m,i))$ such that 
 $w \in (\Sigma \times \{0,1,\dots,i\})^*$. Also, assume that from $(l_B,\nu_2)$, 
 there is an enabled  read transition which reads $m$ and checks that the age of $m$ is $i$.

We start in $\Oo$ with  $((l^0_A,0^{|\Xx_A|}), (l^0_B,0^{|\Xx_B|},\epsilon),0)$ and  stack contents $\bot$. 
 Till $A$ writes a message onto the channel, the 
simulation of $\Oo$ consists of time elapse and internal transitions of $A, B$.
By construction of $\Oo$, $B$ is always ahead of $A$, or at the same global time as $A$.
 If $A$ writes its first message say $a$ when no time elapse has happened in $A,B$, 
   then the age of $a$ is 0 
 in $B$. Till $B$ reads this message, we disallow further writes from $A$. In fact, we disallow any transition 
 in $A$, and allow time elapse/internal transitions in $B$ 
 until the transition for reading $a$ is enabled. Note that this is fine since there is no clock 
 interference between $A, B$ (if we had global clocks, we cannot do this, since 
 a transition in $B$ may depend on the current value of a clock in $A$). 
  If $a$ is to be read when its age is some $i$, then we allow time elapse of $i$ in $B$ 
 after $A$ has written $a$; at this time, the counter value will be $i$ in $\Oo$, 
 and we obtain some configuration $((p_A,\nu'_A), (l_B,\nu_2,a),i)$ and a stack with just $\bot$
 if $i \leq K$. Let us assume $i \leq K$. 
   Once $B$ enables this transition, $a$ is read, and we obtain 
    a configuration $((p_A,\nu'_A), (l'_B,\nu'_2,\epsilon),i)$. $(p_A, \nu'_A)$ is the location 
     reached in $Reg(A)$ after writing $a$ on the channel. In general, if $A$ writes a message when the 
     counter value is $i$, then it means that the age of the message in $B$ is $i$.  
     
     Assume that the counter value is $i$, and $B$ just read a message that was written by $A$. 
     If more messages need be written on the channel with no further time elapse, it can be done, since 
     they can be read off in $B$ only when their age is atleast $i$. In this case, each message is written, and 
     $A$ waits until it is read by $B$. If the current message has to be read when its age is $j>i$, and 
     the next message must be read when its age is $j-h$ for some $h<j$, then 
   $B$ moves ahead by $j-i$ units of time, making the age of the message $j$ and  reads it off. The time difference between $B$ and $A$ is now $j$.
   $A$ can now elapse $h$ units of time and write the message, in which case it will be read by $B$ as soon as it is written.  
   We can continue this till $A$ catches up with $B$; 
     if none of the messages written in this time duration $i$ need to be read when their ages are 
     bigger than the time difference between $B$ and $A$.

      We know that in $\Nn$, the two automata $A, B$ are always in-sync; let $(l_A, \nu_A)$ be the location of $Reg(A)$ 
      when we are at $(l_B, \nu_2)$ in $Reg(B)$, when $a$ is read. Going with the above discussion, 
      indeed it is possible to reach $(l_A, \nu_A)$ from $(p_A, \nu'_A)$ after elapsing $i$ units of time. In particular,
      each time $A$ writes a message, $B$ moves ahead exactly by the time needed to read the message satisfying its age requirements.
 
 After $A$ has written its last message and $B$ has read it, $A$ can catch up with $B$ so that the time difference between $B,A$ is 0; 
 this leads to a configuration $((l_1,\nu_1), (l_2,\nu_2,\epsilon), 0)$ in $\Oo$ with  stack contents $\bot$ 
 iff in $\Nn$ we reach the configuration 
 $((l_1,\nu_1), (l_2,\nu_2),\epsilon)$. The same sequence of transitions are taken 
 in $Reg(A), Reg(B)$ in both $\Oo$ and $\Nn$, with the only difference being that 
 in $\Nn$, the two automata move in-sync, while in $\Oo$, $B$ is made to run ahead of $A$
whenever  $A$ writes a message. In $\Oo$, we always keep atmost one message in the finite control, 
and when $B$ has moved ahead and read that one, then we allow $A$ to move ahead. 
The main difference between $\Nn$ and $\Oo$ is thus that in $\Oo$, $A, B$ are ``de-coupled'', 
while in $\Nn$ they are in-sync.
 \end{proof}

\subsection{Example Illustrating Theorem \ref{dec:2}}
\label{app:eg-last}

We give an example illustrating Theorem \ref{dec:2}. Figure \ref{fig:example} gives a CTA consisting of automata $A, B$, and also
the respective region automata $Reg(A), Reg(B)$. 
Consider the run \\

$\Nn_0{=}((s_1,0),(q_1,0),\epsilon) {\stackrel{}{\rightarrow}}\Nn_1{=}$$((s_2,0),(q_1,0),(a,0))
{\stackrel{*}{\rightarrow}} \Nn_2{=}((s_3,1),(q_1,1),(c,0)(a,1))$
${\stackrel{*}{\rightarrow}} \\ \Nn_3{=}((s_2,0),(q_3,\infty),(b,0)(a,0)(c,2))
{\stackrel{*}{\rightarrow}}$
$\Nn_4{=}((s_2,0),(q_2,\infty),(b,0)(a,0))
{\stackrel{*}{\rightarrow}} \Nn_5{=}((s_2,1),(q_2,\infty),\epsilon)$. 

The table illustrates the sequence of configurations in the counter automaton $\Oo$. 

\begin{figure}[h]
\includegraphics[scale=0.4]{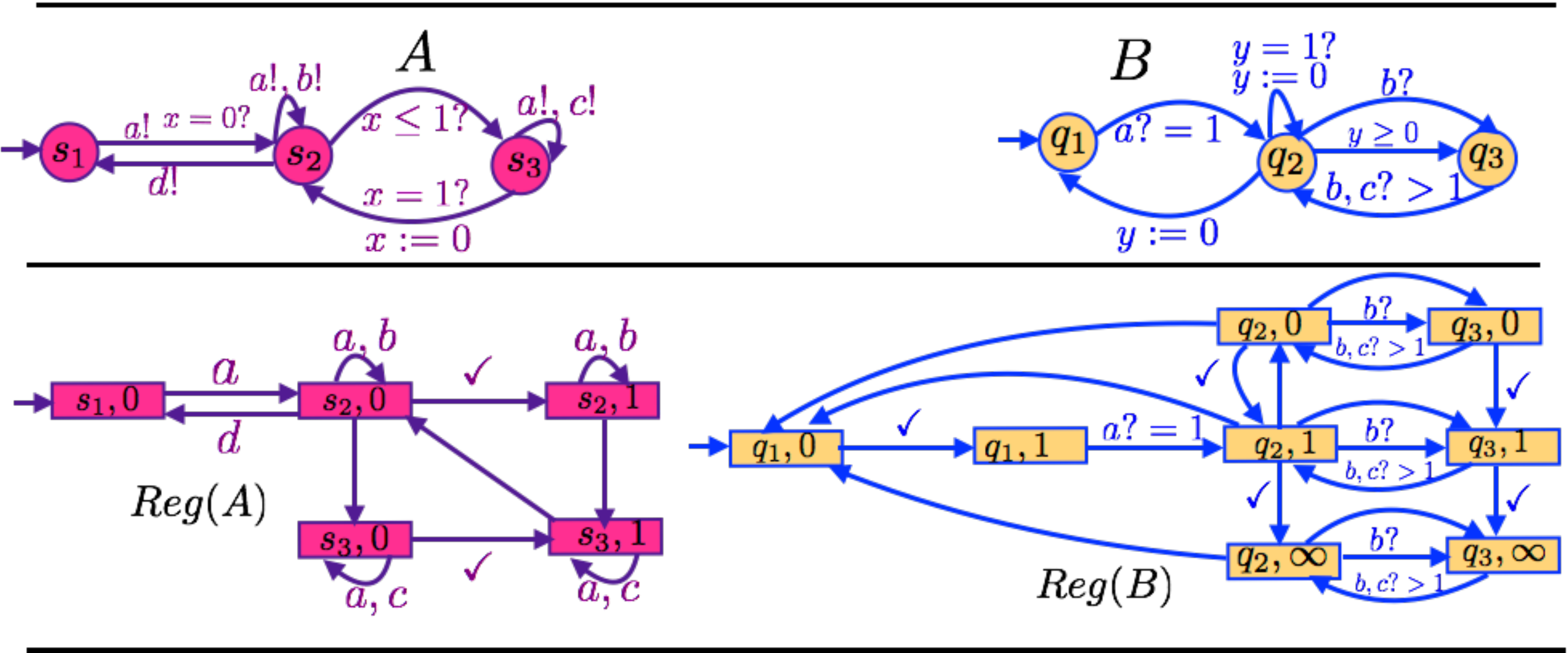}	
\caption{Timed automata $A, B$ in a CTA $\Nn$. Both have a single clock. The region graphs are below. The checkmark
represents unit time elapse. 
}
\label{fig:example}
\end{figure}
\scalebox{0.85}{
\begin{tabular}{|lll|lll|l|}
\hline
 $\Oo_0$ & $((s_1,0),((q_1,0),\epsilon), 0)$ & $\bot$ & $\Oo_1$ & $((s_2,0),((q_1,0),a), 0)$ & $\bot$ & $\Nn_0{=}((s_1,0),(q_1,0),\epsilon)$ \\ 
\hline 
$\Oo_2$ & $((s_2,0),((q_1,1),a), 1)$ & $\bot$ & $\Oo_3$ &  $((s_3,1),((q_3,1),c), 0)$ & $\bot$ & $\Nn_1{=}((s_2,0),(q_1,0),(a,0))$\\
         &$((s_2,0),(({q_1}_{\bot},1),a), 1)$ & $\epsilon$ &  &  $((s_3,1),((q_3,\infty),c), 1)$ & $\bot$ & $\Nn_2{=}((s_3,1),(q_1,1),(c,0)(a,1))$ \\
                &$((s_2,0),(({q'_1}_{\bot},1),a), 1)$ & $\bot$  &   &  $((s_3,1),((q_3,\infty),c), 1)$ & $1 \bot$ & $\Nn_3{=}((s_2,0),(q_3,\infty),(b,0)(a,0)(c,2))$\\
                          &$((s_2,0),((q_2,1),\epsilon), 1)$ & $\bot$  &   &  $((s_3,1),(((q_3)_1,\infty),c), 1)$ & $\bot$ & $\Nn_4{=}((s_2,0),(q_2,\infty),(b,0)(a,0))$\\
                                          &$((s_2,1),((q_2,1),\epsilon), 0)$ & $\bot$  &  &  $((s_3,1),(((q'_3)_1,\infty),c), 1)$ & $1 \bot$ &   $\Nn_5{=}((s_2,1),(q_2,\infty),\epsilon)$\\
          &$((s_3,1),((q_2,1),\epsilon), 0)$ & $\bot$ & 
            &  $((s_3,1),((q_2,\infty),\epsilon), 1)$ & $1\bot$& $\Nn_i \stackrel{*}{\rightarrow}\Nn_{i+1}$ $\forall~ 0 \leq i \leq 4$ in the CTA $\Nn$\\
   &&&       &$((s_2,0),((q_2,\infty),\epsilon), 1)$ &$1 \bot$ &  \\
   &&&       &$((s_2,1),((q_1,0),\epsilon), 1)$ & $\bot$&  \\
   &&&       &$((s_2,1),((q_1,0),\epsilon), 0)$ & $\bot$ & \\
 \hline  
 $\Oo_4$ &  $((s_2,0),((q_1,0),a), 0)$ & $\bot$ &  $\Oo_5$ & $((s_2,0),((q_2,1),b), 1)$ & $\bot$ &  $\Oo_i \stackrel{*}{\rightarrow} \Oo_{i+1}$ forall $0 \leq i \leq 4$ in $\Oo$ \\
         &$((s_2,0),((q_1,1),a), 1)$ & $\bot$ & & $((s_2,0),((q_2,1),b), 1)$ & $\bot$ & Each $\Oo_i$ has several steps \\
         &$((s_2,0),(({q_1}_{\bot},1),a), 1)$ & $\epsilon$ & &$((s_2,0),((q_3,\infty),b), 1)$ & $1\bot$  & A message is written and read in each $\Oo_i$\\
     &$((s_2,0),(((q'_1)_{\bot},1),a), 1)$ & $\bot$ && $((s_2,0),((q_2,\infty),\epsilon), 1)$ & $\bot$ & $1 \leq i \leq 5$\\
          &$((s_2,0),((q_2,1),\epsilon), 1)$ & $\bot$
          && $((s_2,1),((q_2,\infty),\epsilon), 0)$ & $\bot$ & \\
   \hline
\end{tabular}
\label{tab:example-1}
}

\begin{enumerate}
\item It is easy to see that $\Oo_o, \Oo_1$ 
exactly correspond to $\Nn_0, \Nn_1$. $a$ is read in $\Oo_1$ obtaining $((s_2,0),((q_1,0),a),0)$. Neither $A$ nor $B$ 
have elapsed any time, and the stack is $\bot$. 

\item If we look at $\Nn_2$, there are two messages in the channel, 
$(c,0)$  and $(a,1)$. This means that $A$ has moved ahead writing two messages, while $B$ has not yet read any. 
By construction of $\Oo$, until the first message is read, we do not write the second message. 
Thus, $\Oo_2$ will be a configuration obtained when $(a,1)$ is read. 
Recall that $a$ was written in $\Oo_1$. Reading $(a,1)$ amounts to 
elapsing time in $B$, increasing the counter value and the age of $a$, and then checking that the age of $a$ is 1. 
The time elapse of $B$ results in the configuration namely, $((s_2,0),((q_1,1),a),1)$.  
Since $K=1$, and 1 is remembered in the finite control,  
checking that the age of $a$ is exactly 1 
amounts to checking the top of stack $\bot$, remembering it in the finite control, and then pushing it back. We do this, 
and once we are sure that the age of 1, we move to $q_2$ from ${q'_1}_{\bot}$. 
After reading $a$,  we elapse a unit of time in $A$, reducing the counter value to 0 from 1. 
We also move from $(s_2,1)$ to $(s_3,1)$ to read $c$, the next message read in $\Nn$. This gives the configuration $\Oo_2$ where we have 
$(s_3,1)$ in $A$, $(q_2,1)$ in $B$, counter value 0 indicating that $B$ is not ahead of $A$, and the top of stack being $\bot$. 
That is, $((s_3,1),((q_2,1),\epsilon),0)$ with the stack holding $\bot$.

\item  $\Nn_3$ is the configuration obtained when $(a,1)$ has been read, 
the age of $c$ is 2, and in addition, two new messages 
$b, a$ have been written, making the channel contain 3 messages $b,a,c$.  2 units of time has elapsed since $\Nn_2$. 
In the simulation of $\Oo$, the message $c$ will be written first, then 2 time units elapsed, and $c$ read. 
We are currently at $((s_3,1),((q_2,1),\epsilon),0)$. 
$c$ is written from $(s_3,1)$.  This gives $((s_3,1),((q_2,1),c),0)$.
$B$ moves from $(q_2,1)$ to $(q_3,1)$ with no time elapse. 
When $B$ elapses one unit of time, $(q_3,1)$  becomes $(q_3,\infty)$, and the counter value becomes 1, the age of $c$ is 1.
This gives $((s_3,1),((q_3,\infty),c),1)$, and a stack $\bot$. 
  One more unit time elapse makes the age of $c$ 2, and  1 is pushed on the stack. 
  This makes the configuration $((s_3,1),((q_3,\infty),c),1)$ along with the stack 1$\bot$. 
  To read the $c$ from $(q_3, \infty)$, 
we check the age of $c$ by checking if the top of stack is a 1, given that the counter value is  1.  
The 1 in the counter along with the top of stack 1 ensures that the age of $c$ is $>1$.  
   This check results in popping 1 from the top of stack, remembering it in the finite control, and then pushing it back, and then simulating the read 
from $((q'_3)_1,\infty)$. The finite control of $B$ moves to $(q_2,\infty)$
reading the $c$ obtaining $((s_3,1),((q_2,\infty),\epsilon),1)$ with stack $1\bot$. 
Then $A$ moves from 
$(s_3,1)$ to $(s_2,0)$. $A$ elapses a unit of time obtaining  $(s_2,1)$ in the finite control, and the 1 is popped off the stack to 
keep track of the time difference between $B$ and $A$. 
This gives $((s_2,1),((q_2,\infty),\epsilon),1)$ with stack $\bot$. 
 The finite control of $B$ moves from $(q_2, \infty)$ to $(q_1,0)$, 
 obtaining  $((s_2,1),(q_1,0,\epsilon),1)$ with stack $\bot$. 
 In $A$, we move from $(s_2,1)$ to $(s_2,1)$ elapsing a unit of time (for this it moves from $(s_2,1)$ to $(s_3,1)$ and back to $(s_2,0)$, and 
 elapses a unit) reducing the counter value to 0. 
This results in $\Oo_3$, where we have $((s_2,1), ((q_1,0),\epsilon),0)$ 
 with top of stack $\bot$. 
 
 \item $\Nn_4$ is the configuration where $c$ has been read, and there are messages $b,a$ in the channel with age 0. 
 In $\Oo_3$ we read $c$, but have not yet written $a,b$. 
 In $A$, the finite control moves from $(s_2,1)$ to $(s_2,0)$, where an $a$ is written (by passing through $(s_3,1)$).
  A unit time elapse in $B$ results 
 in the age of $a$ to be 1, the counter value 1, and the finite control as  	$(q_1,1)$. 
 This results in $((s_2,0),((q_1,1),a),1)$ with stack $\bot$. 
 A sequence of transitions as seen in the case of $\Oo_2$ (where $\theta$ is remembered in the finite control)
 	takes place, and eventually, $a$ is 1 after checking its age as 1. The control of $B$ moves to $(q_2,1)$ reading off $a$.
 	This results in $\Oo_4$ with $((s_2,0), ((q_2,1),\epsilon),1)$ with the stack 
 	$\bot$.
  \item $\Nn_5$ is the configuration where $b$ is read, and the channel is empty, with $A$ at $(s_2,1)$, $B$ at $(q_2, \infty)$ and 
 an empty channel. In $\Oo$, we have to write $b$ from $\Oo_4$ and read it when its age is $>1$. This is done in a manner similar to what we did in 
 $\Oo_3$ where the topmost 1 in the stack is read and remembered in the finite control. 
  	It can be seen that we obtain $\Oo_5$ with $((s_2,1), ((q_2, \infty), \epsilon),0)$ 
  	and stack $\bot$.

\end{enumerate}
The main difference between configurations in $\Nn$ and $\Oo$ is thus the fact that in $\Nn$, we can choose to write several messages in the channel
and read them later on, as long as their age requirements are met. In the case of $\Oo$, we write a message, and advance only $B$ to read it, thereby, 
de-synchronizing $A, B$. We elapse time in $A$ separately, and write a message only when the message which is written has already been read.

\section{Timed Multistack Pushdown Systems(MPS)}
\label{app:mps}
A timed multipushdown system is a timed automaton equipped with multiple untimed stacks. 
Formally, it is 
a tuple $\Mm=(S, S_0, St, \Gamma, \pclocks, \Delta)$ where 
 $S$ is a finite set of locations, $S_0 \subseteq S$ is the set of initial  locations, $St$ is a finite set of stacks, $\Gamma$ 
 is a finite stack alphabet, $\pclocks$ is a finite set of clocks, $\Delta=\Delta_{int} \cup \Delta_{push} \cup \Delta_{pop}$ is the transition relation with 
 $\Delta_{int} \subseteq S \times \rect(\pclocks) \times 2^{\pclocks} \times S$, 
 $\Delta_{push} \subseteq S \times \rect(\pclocks) \times 2^{\pclocks} \times St \times \Gamma \times S$ and 
 $\Delta_{pop} \subseteq S \times \rect(\pclocks) \times 2^{\pclocks} \times St \times \Gamma \times S$. A configuration of $\Mm$ 
 is a tuple $(s, \nu, \{\sigma_{st}\}_{st \in St})$ where $s \in S$ is the current control location, $\nu$ is the current valuation 
  of all the clocks, and for every $st \in St$, $\sigma_{st} \in \Gamma^*$ denotes the contents of stack $St$. The initial configuration 
  is $(s_0, 0^{|\Xx|}, \{\sigma_{st}\}_{st \in St})$ with $\sigma_{st}=\epsilon$ for all $st \in St$. The semantics of $\Mm$ is given by 
  defining the transition relation induced by $\Delta$ on the set of configurations of $\Mm$. A transition relation 
  is written as $(s, \nu, \{\sigma_{st}\}_{st \in St}) \stackrel{}{\rightarrow} (s', \nu', \{\sigma'_{st}\}_{st \in St})$   with one of the following cases:
  \begin{enumerate}
\item Internal Move : All the  stack contents remain unchanged, and we have the transition 
  $(s, g, Y, s') \in \Delta_{int}$. To make the move, we check if $\nu \models g$, $\nu'=\nu[Y:=0]$ and 
  the control moves to $s'$. 
  \item Push to stack $st_i$ : The transition has the form $(s, g, Y, st_i, a,s') \in \Delta_{push}$. The contents of stack $st_i$   
  changes from $w$ to $aw$ (the left most position denotes the top of the stack), all other stack contents stay unchanged, 
  $\nu \models g$, $\nu'=\nu[Y:=0]$ and control moves to $s'$.
  \item  Pop from stack $st_i$: The transition has the form $(s, g, Y, st_i, a,s') \in \Delta_{pop}$.
 The top of stack $st_i$ is popped. Thus, the contents of $st_i$ changes from $aw$ to $w$ after the pop,   
 all other stack contents stay unchanged,   $\nu \models g$, $\nu'=\nu[Y:=0]$ and control moves to $s'$.
\end{enumerate}
  
\noindent A run of $\Mm$ is a sequence of transitions $c_0 \stackrel{}{\rightarrow} c_1 \stackrel{}{\rightarrow} c_2 \dots \stackrel{}{\rightarrow}c_n$ connecting configurations. 
A state $s \in S$ is reachable iff there is a run with $c_0$ being the initial configuration, and $c_n$ is a configuration $(s, \nu,  \{\sigma_{st}\}_{st \in St})$.  
 A \emph{phase} of a run is  part of the run where all the pop moves are from the same stack. A $k$-phase run is one where the run is composed of atmost $k$-phases.  
If a run is $k$-phase, then we can compose the run as $\alpha_1 \alpha_2 \dots \alpha_k$, where in each subrun $\alpha_i$, there is a fixed stack $st \in St$ 
that is popped. Thus, in a $k$-phase run, there are atmost $k-1$ changes of the stack which is being popped. 
A MPS is bounded-phase (BMPS) if every run of the MPS is a $k$-phase run for some $k$.  
 Reachability in a BMPS is shown decidable by reducing it to the bounded-phase reachability problem for untimed multipushdown systems. The proof (below, section \ref{app:mps}) follows using a standard region construction. 
\subsection{Proof of Lemma \ref{lem:mpsmain}}
\label{app:mps}
Let $\Mm=(S, s_0, St, \Gamma, \pclocks, \Delta)$ be a BMPS. The first step is to convert $\Mm$ to $Reg(\Mm)$ by the standard region construction. 
The states of $Reg(\Mm)$ have the form $(l, \nu)$ where $l \in S$ and $\nu \in \Nat^{|\pclocks|}$.  The internal transitions, 
push and pop transitions are now from locations $(l, \nu)$ to $(l', \nu')$. It is easy to see that $Reg(\Mm)$ is an  untimed 
multistack push down automaton, which is bounded-phase iff $\Mm$ is. Moreover, given any $l \in S$, 
we can reach $l$ from some $s_0 \in S^0$ iff we can reach some $(l, \nu)$ from $(s_0, \zero)$, preserving the stack contents. 
Using known results \cite{DBLP:conf/tacas/TorreMP08}  
we know that the reachability in $Reg(\Mm)$ is decidable.   Hence, reachability in $\Mm$ is also decidable.

\section{Proof of Theorem \ref{thm:dec2}}
\label{app:bdconxt}
Given a bounded context CTA $\Aa$, we first give the construction of an MPS $\Mm$ in section \ref{app:cons}, and 
show its correctness (preserves reachability and is bounded phase) in section \ref{app:corr}.  
\subsection{Construction of BMPS $\Mm$}
\label{app:cons}
Let the bounded context CTA $\Aa$ consist of $n$ automata $A_1, A_2, \dots, A_n$. Let $c_{i,j}$ 
denote the channel from $A_i$ to $A_j$. 
Without loss of generality,  we assume that there is atmost one channel 
from any $A_i$ to $A_j$; our construction will work even when there are many channels from $A_i$ to $A_j$. 
Assume $\Sigma$ is the channel alphabet of $\Aa$. 
 Let $A_i=(L_i, L^0_i, Act, \mathcal{X}_i, E_i, F_i)$ for $0 \leq i \leq n$, $K$ be the maximal constant used 
 in any of the $A_i$, and let $[K]=\{0,1,2,\dots,K, \infty\}$. 
 Let $B$ be the maximal number of context switches in any run of $\Aa$. 
 We construct the  MPS $\Mm=(S, S_0, St, \Gamma, \Delta)$ where 
\begin{enumerate}
\item  $S$ is a finite set of locations $(L'_1 \times [K]^{|\pclocks_1|}) \times  \dots (L'_n \times [K]^{|\pclocks_n|}) \times  (A_w \times p)$, where 
 $w \in \{1,\dots,n\}$ represents the active automaton and  $0 \leq p \leq B$ is a number that keeps track of context switches in the CTA.
 \item 
 $L'_i=L_i \cup \{l_{t}, l^p_t, l^p_{ta} \mid l \in L_i, t \in [K], a \in \Sigma, p \in \{W_{j,i}, R_{j,i} \mid 1 \leq j \leq n\}\}$. 
  \item  The set of initial locations $S_0$ is \\
  $(L^0_1 \times 0^{|\pclocks_1|}) \times  \dots \times (L^0_n \times 0^{|\pclocks_n|}) \times \bigcup_{1 \leq p \leq n}(A_p \times 0)$,
  \item  $St$ is a finite set of stacks : each channel $c_{i,j}$ of $\Aa$ 
   is simulated in the MPS using stacks  $W_{i,j}$ and $R_{i,j}$. 
  \item $\Gamma=\Sigma \cup [K] \cup (\Sigma \times [K])$ 
 is a finite stack alphabet,  and $\Delta=\Delta_{int} \cup \Delta_{push} \cup \Delta_{pop}$ is the transition relation. 
\end{enumerate}

For $i_0,i_1, \dots, i_B \in \{1,2,\dots,n\}$, let $A_{i_j}$ represent the active automaton in context $0 \leq j \leq B$. We now explain 
below the transitions in the MPS $\Mm$.  For each run in the CTA $\Aa$, we show that there is a run in the BMPS $\Mm$ preserving reachability; 
moreover, the content of each channel $c_{i,j}$ is retrieved from stacks $W_{i,j}, R_{i,j}$ in $\Mm$.

\noindent{\bf Context 0 in the CTA}. In the 0th context of the CTA, $A_{i_0}$ writes into some of the channels to which it can write, and also does some internal transitions.  All automata other than $A_{i_0}$ only participate in internal transitions. In $\Mm$, let us start from the location 
 $((l_0^1,0^{|\pclocks_1|}) \dots, (l_0^n, 0^{|\pclocks_n|}), (A_{i_0},0))$, and all stacks empty. 
 Internal transitions in any $A_i$ are handled by updating the 
 corresponding pair $(l_i, \nu_i)$ in $\Mm$,  $l_i \in L_i$ 
 by updating the control locations $l_i$, and the tuple $\nu_i$  
 taking care of resets.   These transitions are all in $\Delta_{int}$. 
 
 Consider the first transition involving a write into some channel $c_{i_0,j}$ by $A_{i_0}$.  
 Let $m$ be the message written. Let the transition in $A_{i_0}$ be $(p, g, c_{i_0,j}!m, Y, q)$.      
 Then in the MPS $\Mm$, we have the transition in 
 $\Delta_{push}$ which   updates $(p,\nu) \in L_{i_0} \times [K]^{|\Xx_{i_0}|}
 $ to $(q,\nu')$, where $\nu'$ is obtained by resetting clocks $Y \subseteq \Xx_{i_0}$, checks guard $g$
 on $\nu$,  and pushes $m$ to stack $W_{i_0,j}$.  All tuples $(l, \nu_l) \in L_i \times [K]^{|\Xx_{i}|}$, $i \neq {i_0}$ are left unchanged. 
  After the first write, any time elapse $t \in [K]$ is taken care of 
by transitions in $\Delta_{push}$ which not only update the clock values, 
but also push $t$ to all stacks.\footnote{ Note that during a time elapse $t$, we do two things : (1) update all $\nu_i$ to $\nu_i+t$
in all the $n$ pairs, and (2) push $t$ onto all stacks. To ensure that all the $\nu_i$s are updated to $\nu_i+t$,  we can keep an additional bit 
in the control location of $\Mm$ which starts at 1, updates $\nu_1$, and keeps incrementing the bit till $n$, when $\nu_n$ is updated 
to $\nu_n+t$, and then we push $t$ onto all stacks. We  push $t$ to all stacks going 
in a fixed order. We choose not to dwell on these low level implementation details
since it clutters notation.}  
The next write (say to channel $c_{i_0,k}$) is handled similar to the first write, by 
pushing the message onto stack $W_{i_0,k}$ and updating the finite control of $\Mm$.  Subsequent time elapses are pushed to all stacks.
To summarize, simulation of context 0 in $\Mm$ results in 
stacks $W_{i_0,j}$ consisting of elements of the form $\Sigma \cup [K]$  (messages from $\Sigma$ written 
on channels $c_{i_0,j}$ and time elapses $t \in [K]$ between messages). Stacks $W_{i,j}$ with $i \neq i_0$ 
and all stacks $R_{i,j}$  
contain only symbols from $[K]$ denoting time elapses.

 \noindent{\bf {Context $h$, $h>0$ in the CTA}}. In  context $h$, $A_{i_h}$ is the active automaton, and  
 reads from some fixed channel $c_{k,i_h}$.
 It  can write to several channels  $c_{i_h,j}$, all different from 
 $c_{k,i_h}$. 
 The  context switch from $h-1$ to $h$ takes place when $A_{i_h}$ is ready for 
 writing or reading, and $A_{i_{h-1}} \neq A_{i_h}$, or  
 $A_{i_h}$ is ready to read off some channel $c_{k,i_h}$ and $A_{i_{h-1}}=A_{i_h}$,
  but $A_{i_{h-1}}$ was reading off a channel $c_{k',i_{h-1}} \neq c_{k,i_h}$. 
   This fact is reflected by updating 
 $(A_{i_{h-1}},h-1)$ in the control of $\Mm$ to $(A_{i_h},h)$. Writes made by $A_{i_h}$ to channels $c_{i_h,j}$ are handled  by pushing messages to stack $W_{i_h,j}$ and updating the finite control of $\Mm$ pertaining to $A_{i_h}$.
   Time elapses made during this context 
 are pushed to all stacks. Assume $A_{i_h}$ is ready to read a message from some channel $c_{k,i_h}$. 
If $h=1$,  $k$ must be $i_0$ since $A_{i_0}$ was active in context 0, and no other automaton
 has written any message so far.

 If $A_{i_h}$ has never read before from channel $c_{k,i_h}$, then all
 messages written into channel $c_{k,i_h}$ so far are stored in stack 
 $W_{k,i_h}$, along with time elapses after each message.  However, the messages are stored in the reverse order in 
 $W_{k,i_h}$. We pop $W_{k,i_h}$ and store them into $R_{k,i_h}$, and simulate the read 
 by popping  
 $R_{k,i_h}$. However, if $A_{i_h}$ has read from $c_{k,i_h}$ in an earlier context, 
 then the stack  $R_{k,i_h}$ may be non-empty. In this case, we first read off from 
 $R_{k,i_h}$, before popping $W_{k,i_h}$. In any case, we first check if $R_{k,i_h}$ 
is non-empty before proceeding.

Let $(p, \nu)$ be the pair in the control location of $\Mm$ corresponding to $A_{i_h}$ ($p \in L_{i_h}$). A read 
is enabled from $p$ in $A_{i_h}$ via the transition $(p, g, c_{k,i_h}?m \in I, Y, q)$.
 
\begin{enumerate}
\item  We first check if $R_{k, i_h}$ is empty: 
for this, we first change the control location $(p, \nu)$ to $(p^{R_{k,i_h}}, \nu)$. 
\item If the top of the stack $R_{k,i_h}$ is a time $t \in [K]$, we pop it and remember it in the finite control as 
$((p^{R_{k,i_h}})_t, \nu)$. 
 Consecutive time tags are added and stored in the finite control : if $t' \in [K]$ is the top 
 of stack $R_{k,i_h}$ while in $((p^{R_{k,i_h}})_t, \nu)$, then 
 it is updated to $((p^{R_{k,i_h}})_{t+t'}, \nu)$. Here, $t+t'$ is either $\leq K$ or is $\infty$ 
 if the sum exceeds $K$. This is continued
  until we see some $(m,t'') \in \Sigma \times [K]$ on top of the stack $R_{k,i_h}$. Then 
  $(m,t'')$ is popped, and we know the age of $m$ to be $t+t'+t''$ using  the information $t+t'$ from the finite control $((p^{R_{k,i_h}})_{t+t'}, \nu)$. 
We simulate the transition $(p,g,c_{k,i_h}?(m \in I),Y,q)$  in $A_{i_h}$ by checking if  $\nu \models g$, 
$t+t'+t'' \in I$, then we update the finite control in $\Mm$ to $((q^{R_{k,i_h}})_{t+t'}, \nu')$, 
$\nu'=\nu[Y:=0]$. This is continued until $R_{k,i_h}$ is empty.  As usual, 
if a time elapse happens in between, it is pushed onto all stacks including $R_{k,i_h}$. When we encounter 
$\bot$ in $R_{k,i_h}$, and $A_{i_h}$ is still ready to read from $c_{k,i_h}$ 
then we have to pop $W_{k,i_h}$.

\item The first thing before popping $W_{k,i_h}$ is to get the finite control of $\Mm$ 
to $(q^{W_{k,i_h}}, \nu')$ (assuming it was some $((q^{R_{k,i_h}})_{t+t'}, \nu')$ or 
$(q^{R_{k,i_h}}, \nu')$ or $(q, \nu')$, $q \in L_{i_h}$).
\item We start popping $W_{k,i_h}$; time tags $t$ on top of $W_{k,i_h}$ are 
remembered in the finite control of $\Mm$ as usual, by updating it to  
$((q^{W_{k,i_h}})_t, \nu')$. We accumulate time tags until a message $m \in \Sigma$ 
appears on top of $W_{k,i_h}$. If the finite control of $\Mm$ is $((q^{W_{k,i_h}})_{t+t'}, \nu')$, 
then we pop $m$ from $W_{k,i_h}$, change the finite control 
to $((q^{W_{k,i_h}})_{t+t',m}, \nu')$ to remember $m$, and then 
push $(m, t+t')$ on $R_{k,i_h}$. After the push, the finite control is again updated to 
$((q^{W_{k,i_h}})_{t+t'}, \nu')$. 
 Note that $t+t'$ is indeed the time that elapsed 
after $m$ was written. This is continued until we see a $\bot$ 
in $W_{k,i_h}$. Then we have transferred all messages written so far,  
to the stack $R_{k,i_h}$ in the correct order, along with the ages.
Elements in stack $R_{k,i_h}$ have the form $\Sigma \times [K]$ (when transferred from $W_{k,i_h}$)
or $[K]$ (a time elapse which is pushed). 
 The finite control is updated again to $(q^{R_{k,i_h}}, \nu')$ 
to signify reading from $R_{k,i_h}$. 
\item The context $h$ may finish before $R_{k,i_h}$ is empty, in which case, we will continue 
 reading from it when the next context of $A_{i_h}$ appears again, assuming 
 $A_{i_h}$ still reads from channel $c_{k,i_h}$. The other possibility is that 
 $R_{k,i_h}$ is emptied in this context. 
\item If stack $R_{k,i_h}$ is emptied while in context $h$, the finite control 
of $\Mm$ is updated to $(q, \nu')$ from 
$(q^{R_{k,i_h}}, \nu')$ or $((q^{R_{k,i_h}})_t, \nu')$. 
If $W_{k,i_h}$ is empty, then 
there are no more pops 
to be done while in this context, since $A_{i_h}$ can only write to some 
of its channels now. If a context switch happens 
before $R_{k,i_h}$ is emptied, then the finite control of $\Mm$ pertaining to $A_{i_h}$ 
is updated to $(q, \nu')$. The finite control $(s, \nu_s)$ 
of $\Mm$ pertaining to $A_{i_{h+1}}$ ($s \in L_{i_{h+1}}$) 
may either stay same if $A_{i_{h+1}}$ is enabled to write from $s$, 
or will be updated to some   $(s^{R_{g,i_{h+1}}}, \nu_s)$ 
if  $A_{i_{h+1}}$ is enabled to read from some channel $s_{g,i_{h+1}}$
in the $(h+1)$st  context. In the case when $A_{i_{h+1}}=A_{i_h}$, 
then the context switch takes place since $A_{i_h}$ is ready to read from another channel 
$c_{k',i_h}$. In this case, we update 
$(q^{R_{k,i_h}}, \nu')$ or $((q^{R_{k,i_h}})_t, \nu')$ to 
$(q^{R_{k',i_h}}, \nu')$.
\end{enumerate}
     It can be seen that the stack alphabet of stacks $W_{i_c,i_d}$ is $\Sigma \cup [K]$ while that of 
 stacks $R_{i_c,i_d}$ is $[K] \cup (\Sigma \times [K])$.

\subsection{Correctness of Construction}
\label{app:corr}
To show that $\Mm$ preserves reachability and channel contents, 
and to show that $\Mm$ is indeed bounded phase, we use the following lemmas. 

\begin{lemma}
If $\Aa$ is a bounded context CTA with atmost $B$ context switches, then 
the MPS $\Mm$ constructed as above is bounded phase, with atmost $3B$ phase changes. 	
\end{lemma}
\begin{proof}
Let $A_0, A_1, \dots, A_B$ be the sequence of automata which are  
active in contexts $0,1,\dots B$ in a run of $\Aa$. 
\begin{enumerate}
\item 	
In contexts $i \in \{1, 2, \dots, B\}$, 
assume that the active automaton $A_i$ reads from some channel
$c_{k_i,i}$. By construction of $\Mm$, we have stacks $W_{k_i,i}, R_{k_i,i}$ 
corresponding to each channel $c_{k_i,i}$.  When we start a new context 
 $i$ of $\Aa$, we do the following.
 \begin{itemize}
 \item As long as  $A_i$ is writing to channels, we push the respective messages 
 to the respective $W$-channels. For example, a message $m$ written to channel $c_{i,j}$ 
 is pushed to stack $W_{i,j}$. A time elapse $t$ in the $i$th context
 results in pushing $t$ to all stacks. So far, there has been no pop of any stack in $\Mm$ while in context $i$ of $\Aa$. 
  Only when $A_i$ is ready to read from a channel say $c_{k_i,i}$, do we start popping a stack; first we check if $R_{k_i,i}$ 
  is non-empty, and if so pop that. This counts as a phase change. If $R_{k_i,i}$ becomes empty, and 
  we have more read operations of $c_{k_i,i}$ in context $i$ of $\Aa$, then we pop stack $W_{k_i,i}$ 
  and transfer contents to $R_{k_i,i}$. This counts as another phase change. Finally, when 
  $R_{k_i,i}$ has been populated, we pop  $R_{k_i,i}$ to facilitate reading from $c_{k_i,i}$. This is the 
  third phase change. There can be no more phase changes while in context $i$, since all messages written so far 
  in channel $c_{k_i,i}$ are already in stack $R_{k_i,i}$ : recall that $A_i$ cannot write to 
   $c_{k_i,i}$ since she reads from it; if any other automaton writes to $c_{k_i,i}$, then 
   the context changes. 
    Thus, we have 3 phase changes in $\Mm$ corresponding to the context switch $i$ of $\Aa$. 
  Note that the number of phase changes can be less than 3 if for instance, $R_{k_i,i}$  was non-empty in the beginning 
  of the $i$th context, and does not get emptied (in this case, it is just 1 change of phase), or if 
  $R_{k_i,i}$  is empty in the beginning of the $i$th context, and we pop $W_{k_i,i}$ followed by 
  $R_{k_i,i}$ (2 phase changes). 
  
 \end{itemize}
   \item If context $i$ of $\Aa$ involves only writing to channels, then there are no phase changes involved 
  in $\Mm$ corresponding to context $i$ of $\Aa$.

 \end{enumerate}
Since we know that any run in $\Aa$ has $\leq B$ context switches, 
and since each context in $\Aa$ results in $\leq 3$ phase changes in $\Mm$, 
the maximal number of phase changes in $\Mm$ is $\leq 3B$. 
	
\end{proof}

\begin{lemma}
Starting from the initial configuration $((l^0_1, \nu_1), \dots, (l^0_n, \nu_n), \epsilon, \dots, \epsilon)$ 
of the CTA $\Aa$, assume that  we reach configuration \\ $((p_1, \nu'_1), \dots, (p'_n,\nu'_n), w_1, \dots, w_s)$ 
 in context $j \leq B$ in a run of $\Aa$. Let $A_{i_j}$ denote the 
 automaton which is active in context $0 \leq j \leq B$ of this run. 
 Then, starting from 
 an initial location $((l^0_1, \nu_1), \dots, (l^0_n, \nu_n), (A_{i_0},0))$ in $\Mm$, there is a run which leads to 
  the location $((p_1, \nu'_1), \dots, (p'_n,\nu'_n),(A_{i_j},j))$. Moreover, the content
  $(\Sigma \times [K])^*$ of any channel $c_{k,l}$ can be obtained 
  from stacks $R_{k,l}$ and $W_{k,l}$.   
\end{lemma}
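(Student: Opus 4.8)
The plan is to prove the statement by induction on the length of the run of $\Aa$, carrying along a correspondence between configurations of $\Aa$ and configurations of $\Mm$ built from three ingredients: (i) the control location $p_i$ together with the ($[K]$-capped) clock valuation $\nu_i$ of every automaton $A_i$, which is kept verbatim in the finite control of $\Mm$; (ii) the pair $(A_{i_j},j)$ recording the currently active automaton and the context number; and (iii) a decoding that recovers, for each channel $c_{k,l}$, its content as a timed word from the pair of stacks $(R_{k,l},W_{k,l})$. The first task is to make this decoding precise. Recall that $c_{k,l}$ is a monotonic timed word whose oldest (head) message sits at the right end and whose newest message sits at the left end, with ages non-decreasing from left to right. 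I read $R_{k,l}$ from top to bottom, maintaining a running sum of the time-elapse symbols of $[K]$ encountered, and emit each pair $(m,a)\in\Sigma\times[K]$ as a message whose current age is $a$ plus the accumulated sum; this yields the older, already-transferred segment of the channel in head-first order. When $R_{k,l}$ is exhausted I recover the newer segment from $W_{k,l}$ by exactly the reversal-and-aging that the $W\!\to\!R$ transfer routine performs (popping $W_{k,l}$ top-down, accumulating the time tags, and assigning each message the total time elapsed since it was written). The invariant to maintain is that this decoding equals $c_{k,l}$ message-for-message and age-for-age at every step.

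For the base case all stacks hold only $\bot$, every channel is empty, all clocks are $0$, and the counter is $(A_{i_0},0)$, so the correspondence holds. For the inductive step I perform a case analysis on the next move of $\Aa$. An \emph{internal move or guard-checked reset} in some $A_i$ is mirrored by a $\Delta_{int}$ transition of $\Mm$ that updates $(p_i,\nu_i)$ after checking $g$ on $\nu_i$ and resetting $Y$; no stack changes, so the decoding is untouched. A \emph{time elapse} $t\in[K]$ is simulated by adding $t$ to every $\nu_i$ and pushing $t$ onto all stacks; since the decoding adds the accumulated tags to every stored age, this raises the age of every message in every channel uniformly by $t$, matching the timed transition of $\Aa$ — and the fact that $t$ is pushed to \emph{all} stacks (including every $R_{k,l}$) is precisely what keeps the aging consistent across an already-performed transfer. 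A \emph{write} $c_{i,j}!m$ by the active automaton is a $\Delta_{push}$ that pushes $m$ onto $W_{i,j}$ and updates $(p_i,\nu_i)$, which in the decoding prepends $(m,0)$ at the tail of $c_{i,j}$, exactly the effect $c'_{i,j}=(m,0).w$.

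The interesting case is a \emph{read} $c_{k,l}?(m\in I)$ by the active $A_{i_h}$, handled by the read subroutine of the construction. If $R_{k,l}$ is non-empty we pop its leading time tags into the accumulator, pop the next pair $(m,a)$, and fire the transition provided $\nu\models g$ and $a+\text{(accumulated tags)}\in I$; by the inductive decoding this realizes the CTA read $c_{k,l}=w.(m,a')$ from the right (head) end with the correct current age $a'$. If $R_{k,l}$ has emptied while a read is still pending, the routine first transfers $W_{k,l}$ into $R_{k,l}$ and then reads. The crux is that this transfer preserves the decoding: popping $W_{k,l}$ top-down and pushing each message tagged with its cumulative age reverses the segment so that the oldest message lands on top of $R_{k,l}$, and assigns to each message exactly the age the decoding of $W_{k,l}$ specified; hence the decoded content is unchanged and the subsequent pop is a genuine head read. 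Finally, a \emph{context switch} from $h-1$ to $h$ is mirrored by updating $(A_{i_{h-1}},h-1)$ to $(A_{i_h},h)$, leaving all pairs and stacks intact; since $\Aa$ has at most $B$ switches the counter never exceeds $B$, so the reached $\Mm$-location is $((p_1,\nu'_1),\dots,(p'_n,\nu'_n),(A_{i_j},j))$ as claimed.

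The main obstacle, and where I would be most careful, is the transfer step inside the read case: one must verify that the running sum carried as the subscript $t+t'$ on the control location is exactly the time elapsed since each transferred message was written, that this accumulator is retained (not reset) across consecutive messages, and that time elapses occurring after a transfer — pushed on top of the already-transferred pairs in $R_{k,l}$ — correctly add to their stored ages when $R_{k,l}$ is later read. Monotonicity of the timed word guarantees that the recomputed ages stay coherent in $[K]\cup\{\infty\}$ and that no message is reordered. I would establish this transfer invariant by a short secondary induction on the number of symbols moved from $W_{k,l}$ to $R_{k,l}$. The converse direction — that every run of $\Mm$ is matched by a run of $\Aa$ with the same decoded channels — follows by reading the same correspondence backwards; together with the preceding bounded-phase lemma and Lemma~\ref{lem:mpsmain}, this establishes Theorem~\ref{thm:dec2}.
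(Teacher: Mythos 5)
Your proof is correct and follows essentially the same route as the paper's: both establish, step by step along the run, that the finite control of $\Mm$ tracks the locations, ($[K]$-capped) clock valuations and context counter of $\Aa$, and that each channel $c_{k,l}$ is recovered from the pair $(R_{k,l},W_{k,l})$ by accumulating the interleaved time tags, with the $W\!\to\!R$ transfer reversing the order and stamping each message with its elapsed age. The only difference is presentational: you phrase the correspondence as an explicit decoding invariant proved by induction on the length of the run (with a secondary induction for the transfer step), whereas the paper walks through the contexts $0,\dots,j$ and describes the same retrieval procedure directly.
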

\begin{proof}
The proof is by construction of $\Mm$. Assume we start with an  initial location 
$((l^0_1, \nu_1), \dots, (l^0_n, \nu_n), (A_{i_0},0))$ in $\Mm$. 
Then we assume that $A_{i_0}$ writes in context 0 in $\Aa$. 
We prove the statement of the theorem for every possible context $0 \leq j \leq B$.
\begin{enumerate}
\item 	As long as we simulate context 0 of $\Aa$, we push messages $m \in \Sigma$ 
in stacks $W_{i_0,j}$ for each write of $m \in \Sigma$ on channel $c_{i_0,j}$, 
and push time elapses $t$ that happened while in context 0, to all stacks. 
Consider the last configuration of $\Aa$ in context 0 of the run seen so far;
let it be $((l_1, \nu'_1), \dots, (l_n, \nu'_n), w_1, \dots, w_s)$. 
By construction of $\Mm$, we obtain
$((l_1, \nu'_1), \dots, (l_n, \nu'_n),(A_{i_0},0))$.  
All the $R$-stacks are populated with 
elements from $[K]$; while stacks $W_{i_0,j}$ 
corresponding to channels $c_{i_0,j}$ to which $A_{i_0}$ wrote a message 
will contain elements from $\Sigma \cup [K]$; finally $W$-stacks 
 corresponding to channels where $A_{i_0}$ did not write, also has 
 elements from $[K]$. 
 
Consider a channel $c_{i_0,j}$ to which $A_{i_0}$ wrote messages $m_1, \dots, m_p$ at times 
$t_1, t_2, \dots, t_p$. If $t$ is the current global time, then 
 the age of $m_i$ is $t-t_i$. By construction of $\Mm$, we will have in stack 
 $W_{i_0,j}$, message $m_i$, and we have $t_{i+1}-t_i \in [K]$ on top of $m_i$ (we will have 
 $t_{i+1}-t_i$ 1's or a combination of elements from $[K]$ which sums up to $t_{i+1}-t_i \in [K]$).
  We also have $m_{i+1}$ on top of $t_{i+1}-t_i$, 
 and we have $t_{i+2}-t_{i+1}$ on top of $m_{i+1}$, and 
 $m_{i+2}$ on top of $t_{i+2}-t_{i+1}$ and so on. 
 The topmost element of $W_{i_0,j}$ is $t-t_p$, and the one below this element is $m_p$. 
 To retrieve the contents of channel $c_{i_0,j}$, we have to simply pop 
 $W_{i_0,j}$ as follows: remember $t-t_p$ in the finite control. When 
 $m_p$ is popped, tag $t-t_p$ to it obtaining $(m_p, t-t_p)$. 
 Pop $t_{p}-t_{p-1}$ and add it to the time tag in the finite control, obtaining $t-t_{p-1}$ in the finite control. 
 When $m_{p-1}$ is popped, tag  $t-t_{p-1}$ 
 obtaining $(m_{p-1}, t-t_{p-1})$.
 Continuing like this,  we obtain $(m_1, t-t_1)$. 
 The contents   of channel $c_{i_0,j}$  at the end of context 0 can be retrieved as 
 $(m_p, t-t_p) \dots (m_1, t-t_1)$.
  
 \item Assume we are in context $j$ of $\Aa$.  The active automaton is $A_{i_j}$. Let $A_{i_j}$ read from 
 channel $c_{k_{i_j},i_j}$  in context $j$.  
 At the start of context $j$, by construction of $\Mm$,  we have two possibilities for stacks 
 $R_{k_{i_j},i_j}$ and $W_{k_{i_j},i_j}$: 
 \begin{itemize}
 \item[(1)] either stack $R_{k_{i_j},i_j}$ contains only symbols from  $[K]$ 
 and $W_{k_{i_j},i_j}$ contains symbols from  $\Sigma \cup [K]$, or
\item[(2)]   $R_{k_{i_j},i_j}$ contains symbols from $(\Sigma \times [K]) \cup [K]$ and $W_{k_{i_j},i_j}$ 
 contains symbols from  $\Sigma \cup [K]$. 
 \end{itemize}
If (1), then  either channel $c_{k_{i_j}, i_j}$ was never read so far  in $\Aa$ and 
the entire channel content  is in $W_{k_{i_j},i_j}$. The other possibility is that
 $c_{k_{i_j}, i_j}$ was read in an earlier context, and 
 $A_{i_j}$ read all the contents of $c_{k_{i_j}, i_j}$ at that time, and the subsequent 
 writes to $c_{k_{i_j}, i_j}$ are stored in $W_{k_{i_j},i_j}$.

 In case of (2), channel $c_{k_{i_j}, i_j}$ was read in an earlier context, but 
 the channel was not completely read that time; the remaining contents of $c_{k_{i_j}, i_j}$ from that context are in $R_{k_{i_j},i_j}$, 
   along with possible time elapses  since then. All subsequent writes to $c_{k_{i_j}, i_j}$ after that context are 
   stored in $W_{k_{i_j},i_j}$.

In case of (1), in the $j$th context, the contents of $W_{k_{i_j},i_j}$   
 are shifted to $R_{k_{i_j},i_j}$. At the end of context $j$, if $R_{k_{i_j},i_j}$
is non-empty, then the contents of $R_{k_{i_j},i_j}$
top-down is the content of channel $c_{k_{i_j}, i_j}$ (if there are elements from $[K]$ 
on top, they must be added to the ages of subsequent $(m,t)$ below). 
In case of (2), in the $j$th context,  we start reading off $R_{k_{i_j},i_j}$. At the end of the $j$th context, if 
$R_{k_{i_j},i_j}$ is over $(\Sigma \times [K]) \cup [K]$ and $W_{k_{i_j},i_j}$ 
 is over $\Sigma \cup [K]$, then the contents of channel $c_{k_{i_j}, i_j}$ is obtained 
 by first popping  $R_{k_{i_j},i_j}$, remembering the topmost elements from $[K]$  
in finite control by adding them, and then adding these to the ages of the remaining 
elements of the form $(m,t)$. Let $w_2 \in  (\Sigma \times [K])^*$ be the string so formed 
after popping $R_{k_{i_j},i_j}$. 
Once $R_{k_{i_j},i_j}$ is empty, 
we pop $W_{k_{i_j},i_j}$ in a similar manner.  Let $w_1 \in  (\Sigma \times [K])^*$ be the string so formed 
after popping $W_{k_{i_j},i_j}$. 
The contents of channel $c_{k_{i_j}, i_j}$ at the end of context $j$ is then obtained as $w_1w_2$.   
\end{enumerate}
It is easy to see that the finite control of $\Mm$ 
is $((l_1, \mu_1), \dots, (l_n, \mu_n), (A_{i_j},j))$ iff 
 in $\Aa$ we reach $(l_i, \mu_i)$ in $A_i$ in context $j$. 
Moreover, as seen above, the channel contents at each step of the run 
 can be retrieved from the corresponding stacks in $\Mm$. 
 Thus, $\Mm$ preserves reachability, both  
of control locations as well as channel contents.  Finally, the number of phase changes in $\Mm$ depends 
on the number of context switches in $\Aa$. 
	\end{proof}

\subsection{Illustration of Theorem \ref{thm:dec2}:  CTA to MPS}
\label{app:illus}
We first show a sequence of context switches ($\leq 10$) on the CTA in Figure \ref{fig:app-bc-eg}. The maximum number 
of switches happens when we start with $A_2$ with clock $y=0$.  
It can be seen that  for each value of $y=0,1,2,3,4$ there can be a switch of context. An example run is below.
\begin{figure}[h]
\begin{center}
\includegraphics[scale=0.6]{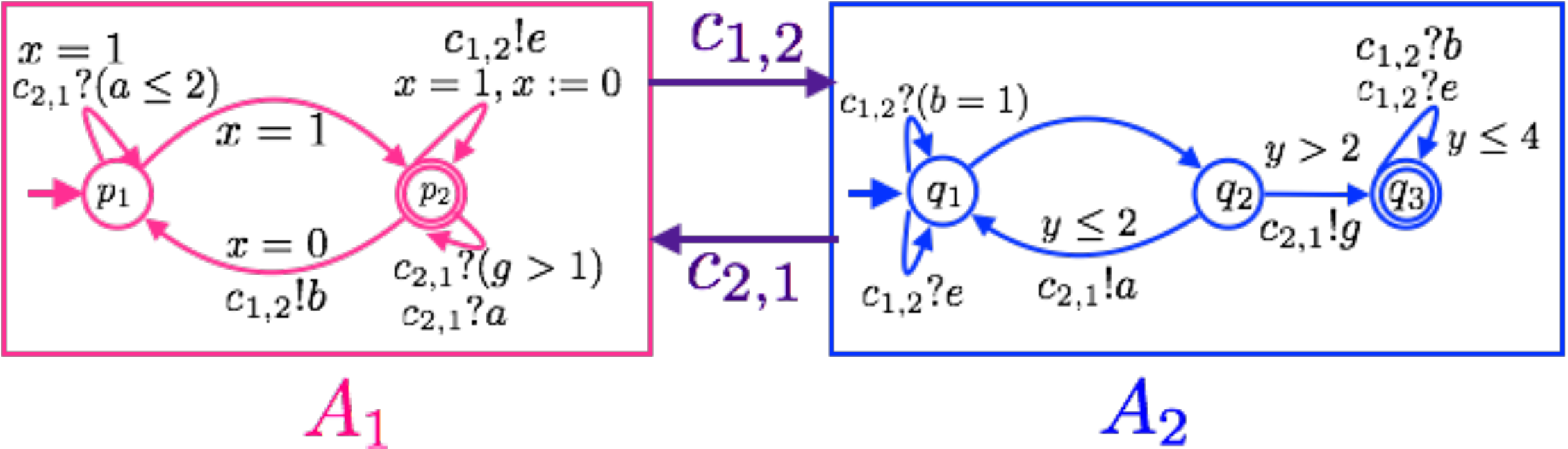}
\end{center}
\caption{A bounded context CTA.  
}
\label{fig:app-bc-eg}	
\end{figure}

\begin{enumerate}
\item To begin, $A_2$ writes several $a$s in context 0 in channel $c_{2,1}$ 
when $y=0$. \\
$c_{2,1}: (a,0)(a,0)$, $c_{1,2}: \epsilon$ 
\item  A switch happens and $A_1$ writes a $e,b$ in $c_{1,2}$ when $y=1$. \\
$c_{2,1}: (a,1)(a,1)$, $c_{1,2}: (b,0)(e,0)$ 
\item  $A_2$ again writes some $a$s when $y=1$.\\
$c_{2,1}: (a,0)(a,1)(a,1)$, $c_{1,2}: (b,0)(e,0)$ 
\item   A switch to $A_1$ results in reading off the leading $a$s (age 2) from $c_{2,1}$
and writing another $e,b$ when $y=2$ to $c_{1,2}$.\\
$c_{2,1}: (a,1)(a,2)(a,2)$, $c_{1,2}: (b,1)(e,1)$ becomes  
$c_{2,1}: (a,1)$, $c_{1,2}: (b,0)(e,0)(b,1)(e,1)$
\item  Now $A_2$ reads the first $e,b$ (age 1) from $c_{1,2}$ and writes some $a$s when $y=2$ on $c_{2,1}$. \\
$c_{2,1}: (a,0)(a,1)$, $c_{1,2}: (b,0)(e,0)$
\item  $A_1$ takes over, and reads off the $a$s from $c_{2,1}$ writes the $e,b$ when $y=3$ to $c_{1,2}$.\\
$c_{2,1}: (a,1)(a,2)$, $c_{1,2}: (b,1)(e,1)$ becomes 
$c_{2,1}: (a,1)$, $c_{1,2}: (b,0)(e,0)(b,1)(e,1)$
\item $A_2$ reads off the $e, b$ of age 1 from $c_{1,2}$  and  moves to $q_3$ writing $g$. \\
$c_{2,1}: (g,0)(a,1)$, $c_{1,2}: (b,0)(e,0)$
\item  Back in $A_1$, the last set of $a$s are read from $c_{2,1}$ and an $e$ is written to $c_{1,2}$ when $y=4$. \\
$c_{2,1}: (g,1)(a,2)$, $c_{1,2}: (b,1)(e,1)$
becomes $c_{2,1}: (g,1)$, $c_{1,2}: (e,0)(b,1)(e,1)$
\item   Back in $A_2$, the $b,e$s are read with $y=4$.    \\
$c_{2,1}: (g,1)$, $c_{1,2}: (e,0)$
\item  Switch back to $A_1$, read the $g$, $y=5$.\\
$c_{2,1}: (g,2)$, $c_{1,2}: (e,1)$ becomes 
$c_{2,1}: \epsilon$, $c_{1,2}: (e,1)$. 
\end{enumerate}
No more context switches are possible. 
Consider the following run of the CTA given in Figure \ref{fig:app-bc-eg}. 

 \noindent $\Nn_0=((p_1,0),(q_1,0),\epsilon, \epsilon) \stackrel{*}{\rightarrow}$ $\Nn_1=((p_1,0),(q_1,0),\epsilon, (a,0)(a,0))$
$\stackrel{*}{\rightarrow}$ $\Nn_2=((p_2,1),(q_2,1),\epsilon, (a,1)(a,1))$ $\stackrel{*}{\rightarrow}$ \\
$ \Nn_3=((p_1,1),(q_2,2),(b,1)(e,1),(a,2)(a,2))$
$\stackrel{*}{\rightarrow} \Nn_4=((p_1,1),(q_2,2),(b,1)(e,1),(a,2)) \stackrel{*}{\rightarrow} \Nn_5=((p_1,1),(q_1,2),\epsilon,(a,0)(a,2))$
$\stackrel{*}{\rightarrow} \Nn_6=((p_2,2),(q_3,3),\epsilon,(g,0)(a,3))$. 
In tables \ref{tab:app-bc-mps1}, \ref{tab:app-bc-mps2} and \ref{tab:app-bc-mps3}, we show the sequence of locations along with 
the stack contents of the MPS that correspond to each $\Nn_i$. 
Tables \ref{tab:app-bc-mps1}, \ref{tab:app-bc-mps2} and \ref{tab:app-bc-mps3} give a run of the CTA and the corresponding run in the MPS.

\begin{table}[h]
\begin{tabular}{|l|l|l|}
\hline
CTA &   BMPS locations reached & BMPS  stacks   \\
\hline
$\Nn_0$ &  $(p_1,0),(q_1,0),(A_2,0)$ & \includegraphics[scale=0.4]{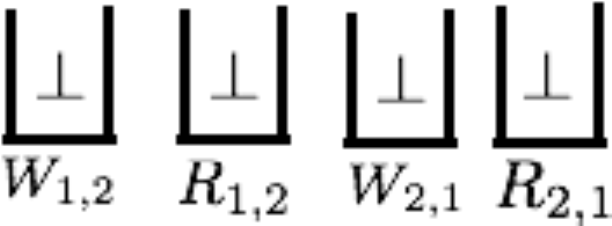} \\ 
\hline 
$\Nn_1$ &  $(p_1,0)(q_1,0),(A_2,0)$ & \includegraphics[scale=0.4]{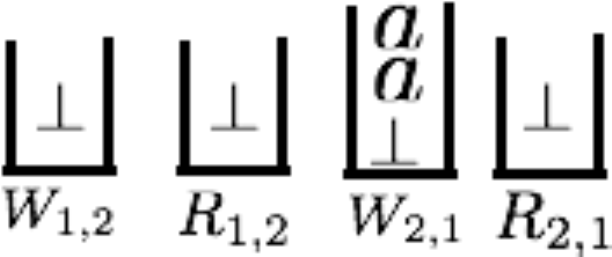} 
  \\
 \hline
$\Nn_2$ &  $(p_2,1)(q_2,1),(A_2,0)$ & \includegraphics[scale=0.4]{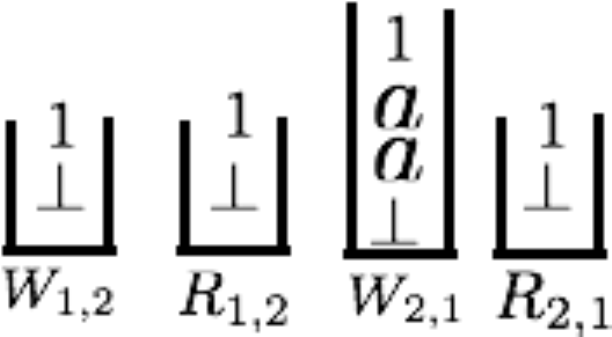} 
  \\
 \hline
$\Nn_3$ &  $(p^{R_{21}}_1,1)(q_2,2),(A_1,1)$ & \includegraphics[scale=0.4]{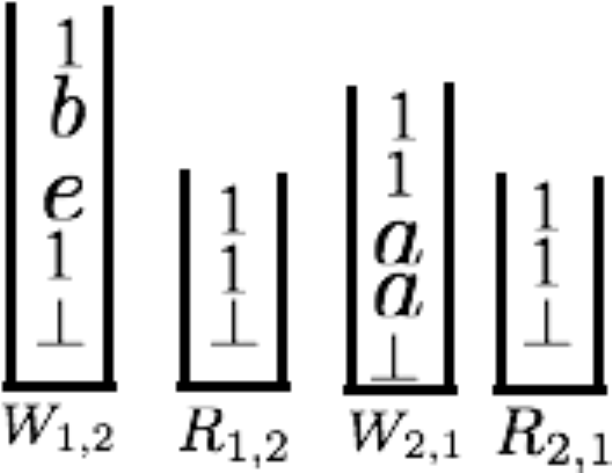} \\
 & the $R_{21}$  in $p^{R_{21}}_1$ indicates that the next pop is from $R_{21}$. $(A_2,0)$ is updated &\\
 & to $(A_1,1)$ on the switch and now $A_1$ is ready to  read. &\\
   \hline
$\Nn_4$ &  $((p^{R_{21}}_1)_1,1)(q_2,2),(A_1,1)$ & \includegraphics[scale=0.4]{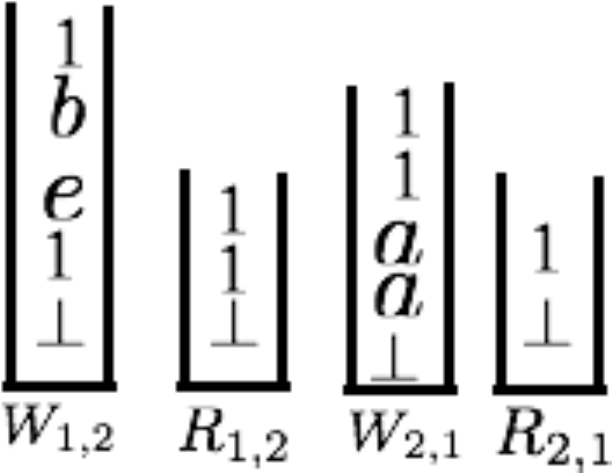}\\
& The 1 in $()_1$ is the time tag  read off from $R_{21}$. This becomes  2 when the next 1& \\
&   is read off from $R_{21}$. On seeing $\bot$ in stack $R_{21}$, the superscript $R_{21}$ in the& \\
 & location is changed to $W_{21}$ making it $p^{W_{21}}_1$.&\\
&  $(p^{W_{21}}_1,1)(q_2,2),(A_1,1)$ & \includegraphics[scale=0.4]{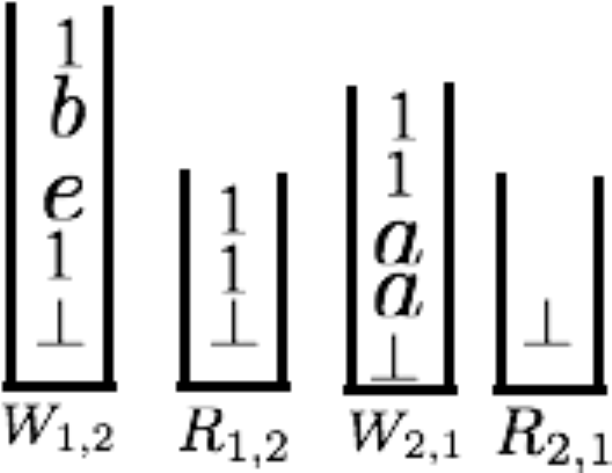} \\
 &  $((p^{W_{21}}_1)_2,1)(q_2,2),(A_1,1)$ & \includegraphics[scale=0.4]{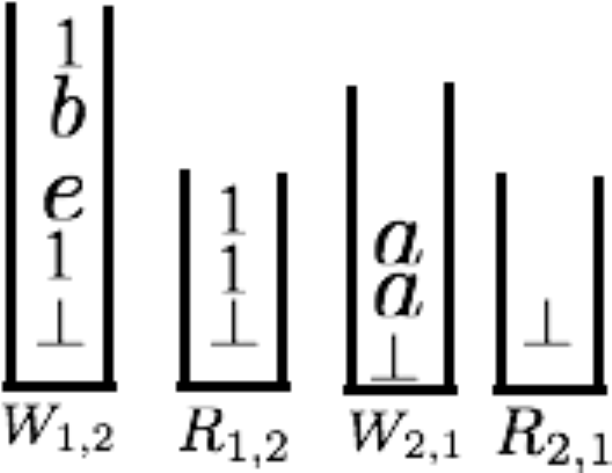} \\
 & This becomes $((p^{W_{21}}_1)_{2a},1)(q_2,2),(A_1,1)$ when the $a$ on top of $W_{21}$ is read. & \\
 & $(a,2)$ is pushed to $R_{21}$ and the control comes back to $((p^{W_{21}}_1)_2,1)(q_2,2),(A_1,1)$. & \\
 & This is repeated for the second $a$ in $W_{21}$, pushing one more $(a,2)$ to $R_{21}$. On  & \\
 & seeing $\bot$ in $W_{21}$,  $(p^{W_{21}}_1)_2$ is changed to $p^{R_{21}}_1$. &\\
 &  $(p^{R_{21}}_1,1)(q_2,2),(A_1,1)$ & \includegraphics[scale=0.4]{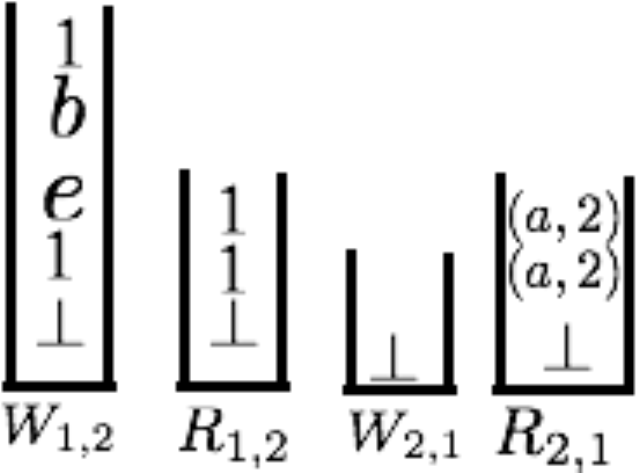} \\
 &  $(p_1,1)(q_2,2),(A_1,1)$ & \includegraphics[scale=0.4]{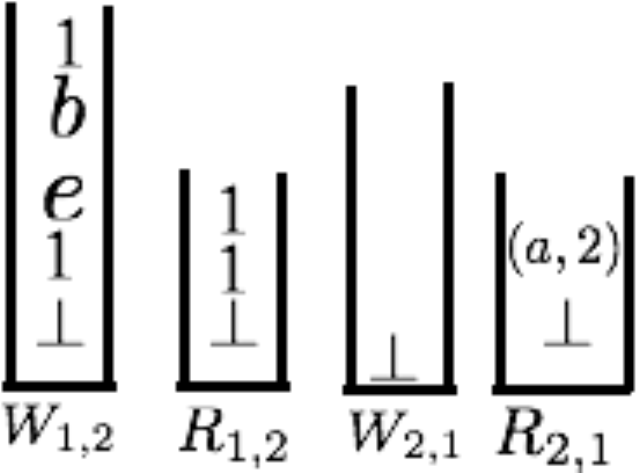} \\
 \hline
 
\end{tabular}
\caption{}
\label{tab:app-bc-mps1}
\end{table}

\begin{table}[h]
\begin{tabular}{|l|l|l|}
\hline
CTA &   BMPS locations & BMPS  stacks   \\
 
$\Nn_5$ &  $(p_1,1)(q_1,2),(A_2,2)$ & \includegraphics[scale=0.4]{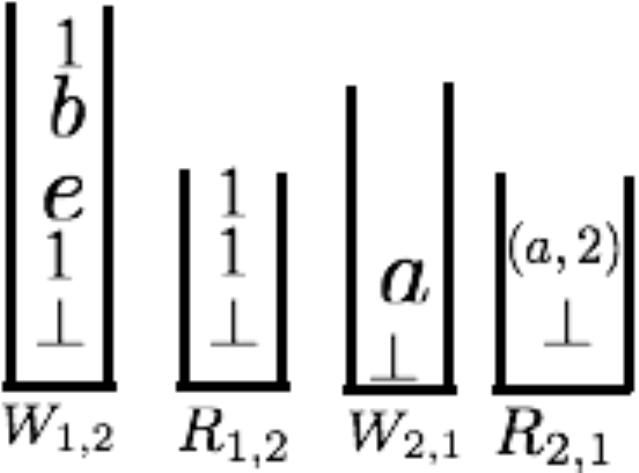} \\
        &$(A_1,1)$  is updated to $(A_2,2)$, and $A_2$ has written an $a$ &\\ 
  \hline
 &  $(p_1,1)((q^{R_{1,2}}_1)_2,2),(A_2,2)$ & \includegraphics[scale=0.4]{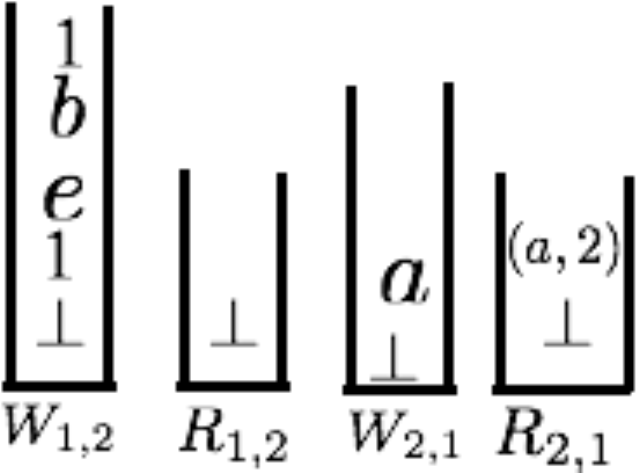} \\
   \hline
 &  $(p_1,1)(q^{W_{1,2}}_1,2),(A_2,2)$ & \includegraphics[scale=0.4]{figs/bc81.pdf} \\
   \hline
 &  $(p_1,1)((q^{W_{1,2}}_1)_1,2),(A_2,2)$ & \includegraphics[scale=0.4]{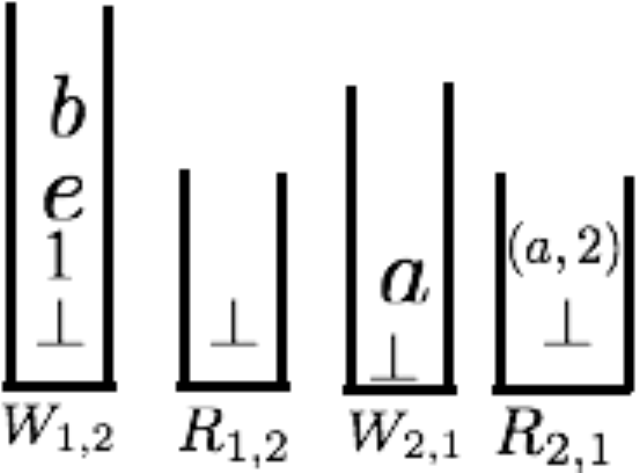} \\
   \hline
 &  $(p_1,1)((q^{W_{1,2}}_1)_{1b},2),(A_2,2)$ & \includegraphics[scale=0.4]{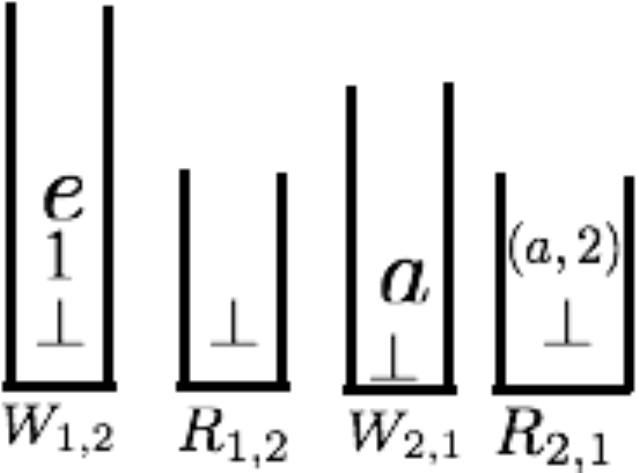} \\
   \hline
 &  $(p_1,1)((q^{W_{1,2}}_1)_{1},2),(A_2,2)$ & \includegraphics[scale=0.4]{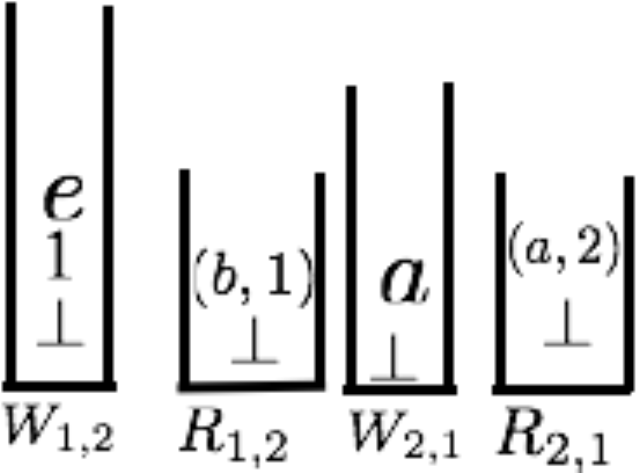} \\
    \hline
 &  $(p_1,1)((q^{W_{1,2}}_1)_{1},2),(A_2,2)$ & \includegraphics[scale=0.4]{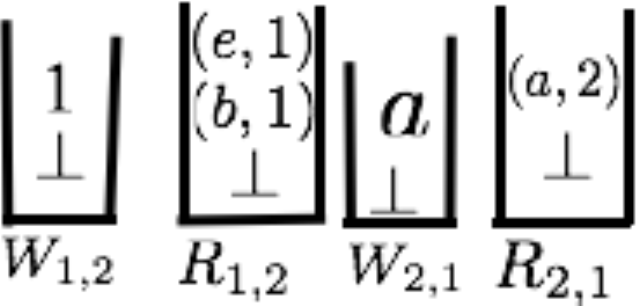} \\
    \hline
&  $(p_1,1)((q^{W_{1,2}}_1)_{2},2),(A_2,2)$ & \includegraphics[scale=0.4]{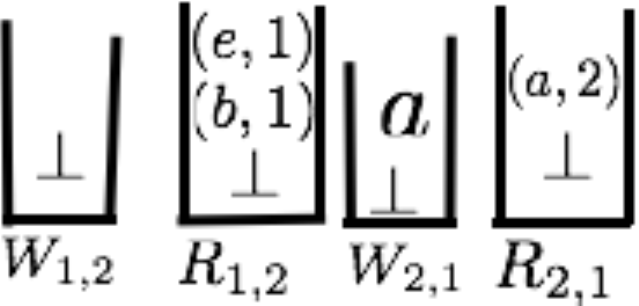} \\
    \hline
&  $(p_1,1)(q^{R_{1,2}}_1,2),(A_2,2)$ & \includegraphics[scale=0.4]{figs/bc86.pdf} \\
    \hline
&  $(p_1,1)(q_1,2),(A_2,2)$ & \includegraphics[scale=0.4]{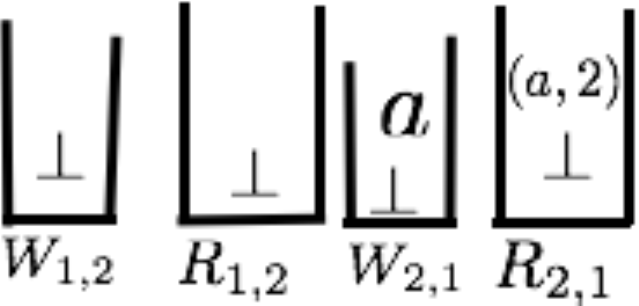} \\

    \hline
 \end{tabular}
\caption{}
\label{tab:app-bc-mps2}
\end{table}
   
  \begin{table}[t]
\begin{tabular}{|l|l|l|}
\hline
CTA &   BMPS locations & BMPS  stacks   \\
 $\Nn_6$ &  $(p_2,2)(q_3,3),(A_1,3)$ & \includegraphics[scale=0.4]{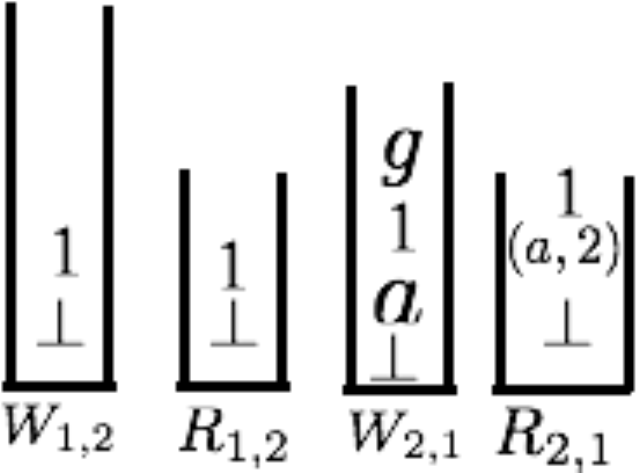}\\ 
  & While in $(A_2,2)$ we move from $q_1$ to $q_2$ in $A_2$, and $p_1$ to $p_2$ in $A_1$. &\\
    & Elapse a unit of time at $q_2$, and goto $q_3$, writing $g$. $(A_2,2)$ is updated to &\\
 &  $(A_1,3)$, since $A_1$ can read $a$ from $p_2$. &\\    
   \hline 
 \end{tabular}
\caption{}
\label{tab:app-bc-mps3}
\end{table}

\end{document}